\newcommand{\A}{\mathcal{A}}
\newcommand{\M}{\mathcal{M}}
\newcommand{\CC}{\mathcal{C}}
\newcommand{\ZZ}{\mathcal{Z}}
\newcommand{\Rep}{{\rm Rep}}
\DeclareMathOperator{\FPdim}{FPdim}
\DeclareMathOperator{\End}{End}  
\DeclareMathOperator{\Hom}{Hom} 
\DeclareMathOperator{\Span}{Span}
\DeclareMathOperator{\Fun}{Fun}
\DeclareMathOperator{\Obj}{Obj}
\DeclareMathOperator{\Tr}{Tr}
\newcommand{\one}{\mathbf{1}}
\newcommand{\C}{\mathbb C}
\newcommand{\mC}{\mathcal{C}}
\newcommand{\mZ}{\mathcal{Z}}
\newcommand{\mQ}{\mathcal{Q}}
\newcommand{\mD}{\mathcal{D}}
\newcommand{\msC}{\mathscr{C}}
\newcommand{\msD}{\mathscr{D}}
\newcommand{\mfD}{\mathfrak{D}}
\newcommand{\Z}{\mathbb Z}
\newcommand{\B}{\mathcal{B}}
\newcommand{\comments}[1]{}
\renewcommand{\vec}[1]{{\mathbf #1}}
\newcommand{\ket}[1]{|#1\rangle}
\renewcommand{\one}{\mathbf{1}}
\renewcommand{\CC}{\mathcal{C}}
\renewcommand{\Z}{\mathbb{Z}}
\newcommand{\overbar}[1]{\mkern 2.3mu\overline{\mkern-2.3mu#1\mkern-2.3mu}\mkern 2.3mu}
\numberwithin{equation}{section}
\newtheorem{theorem}{Theorem}[section]
\newtheorem{corollary}[theorem]{Corollary}
\newtheorem{prop}[theorem]{Proposition}
\theoremstyle{definition}
\newtheorem{remark}[theorem]{Remark}
\newtheorem{definition}[theorem]{Definition}
\begin{document}

\title[Hamiltonian and Algebraic Theories of Gapped Boundaries]{Hamiltonian and Algebraic Theories of Gapped Boundaries in Topological Phases of Matter}
\author{Iris Cong$^{1,4}$}
\email{$^1$irisycong@engineering.ucla.edu}
\address{$^1$Dept of Computer Science, University of California\\
Los Angeles, CA 90095\\
U.S.A.}

\author{Meng Cheng$^{2,4}$}
\email{$^2$m.cheng@yale.edu}
\address{$^2$Dept of Physics\\
	Yale University\\
	New Haven, CT 06520-8120\\
    U.S.A.}

\author{Zhenghan Wang$^{3,4}$}
\email{$^4$zhenghwa@microsoft.com}
\address{$^3$Dept of Mathematics\\
    University of California\\
    Santa Barbara, CA 93106-6105\\
    U.S.A.}
\address{$^4$Microsoft Station Q\\
    University of California\\
    Santa Barbara, CA 93106-6105\\
    U.S.A.}

\begin{abstract}
We present an exactly solvable lattice Hamiltonian to realize gapped boundaries of Kitaev's quantum double models for Dijkgraaf-Witten theories. We classify the elementary excitations on the boundary, and systematically describe the bulk-to-boundary condensation procedure.  We also present the parallel  algebraic/categorical structure of gapped boundaries.
\end{abstract}

\maketitle


\section{Introduction}
\label{sec:intro}

\subsection{Motivations and Main Results}
\label{sec:motivations}
Topological phases of matter without any symmetry are gapped quantum phases of matter at zero temperature which exhibit topological order \cite{WenTO}.
 Recent studies of topological phases of matter revealed that certain topological phases of matter also support gapped boundaries \cite{Bravyi98}. In this paper, we develop an exactly solvable lattice Hamiltonian theory for gapped boundaries in the Dijkgraaf-Witten topological quantum field theories (TQFTs) by modifying Kitaev's quantum double model \cite{Kitaev97}.  Our study of gapped boundaries is through an interplay between their Hamiltonian realization and an algebraic model using category theory. Using a very simple picture of triangles and a ribbon ring around a hole as in Fig. \ref{fig:main-results}, we provide an insightful interpretation of this interplay in the context of many physical processes. As a few significant examples, the red triangles represent tensor functors to move quasi-particles in the bulk of a topological phase, and the green triangle is a tensor functor (specifically, a quotient functor) which governs the bulk-to-boundary condensation of anyons.

A topological phase of matter $\mathcal{H}=\{H\}$ is an equivalence class of gapped Hamiltonians $H$ which realizes a TQFT at low energy.  Elementary excitations in a topological phase of matter $\mathcal{H}$ are point-like anyons.  Anyons can be modeled algebraically as simple objects in a unitary modular tensor category (UMTC) $\mathcal{B}$, which will be referred to as the topological order of the topological phase $\mathcal{H}$.  A salient application of anyons arises in topological quantum computation (TQC) \cite{Free98, Kitaev97, FKLW, Nayak08, W10}. TQC is one elegant proposal to resolve the susceptibility of qubits to local decoherence---one of the major obstacles to current developments of large-scale quantum computers---by encoding quantum information in topological degrees of freedom in many-body quantum systems, instead of relying upon local properties. In many-anyon systems, this information is encoded in the topological ground state degeneracy, which can arise either from non-trivial topology of the space, or from non-abelian anyons even without topology (anyons with quantum dimension $>1$). For abelian anyons in the plane, the ground state manifold is non-degenerate, and hence not useful to TQC.  However, when the topological phase of matter $\mathcal{H}$ supports gapped boundaries, new topological degeneracies can arise, even for abelian anyons in the plane. (Here, a {\it gapped boundary} can be regarded as an equivalence class of gapped extensions of $H \in \mathcal{H}$ to the boundary, when the underlying manifold is enhanced with boundaries.) This is because a gapped boundary is essentially a coherent superposition of anyons, and hence behaves like a non-abelian anyon. As a result, gapped boundaries can have many important applications to TQC, including a promising path towards the first universal topological quantum computer based on a purely abelian TQFT. This path is outlined in Ref. \cite{Cong16a}, and details are to appear in \cite{Cong16c}.

We consider only topological phases of matter $\mathcal{H}$ that can be represented by fixed-point gapped Hamiltonians $H$ of the form $H=-\sum_{i}H_i$ such that all local terms $H_i$ are commuting Hermitian projectors.  Two general classes of such Hamiltonians are the Kitaev quantum double model for Dijkgraaf-Witten TQFTs and the Levin-Wen model for Turaev-Viro-Barrett-Westbury TQFTs \cite{Kitaev97,Levin04}.  Their input data, finite groups $G$ and unitary fusion categories $\mathcal{C}$ respectively,  are dual to each other. When the Kitaev model is extended from finite groups to connected $*$-quantum groupoids \cite{Chang14}, the two models are equivalent because they both realize the same topological orders---the representation categories $\mathfrak{D}(G)$ of quantum doubles $D(G)$ or Drinfeld centers $\mathcal{Z}(\mathcal{C})$ of the input categories $\mathcal{C}$.  For such topological phases of matter, a gapped boundary is an equivalence class of extensions of the ideal gapped Hamiltonian from a closed surface to a local commuting Hamiltonian on the surface with a boundary.  We classify gapped boundaries by the maximal collection of bulk anyons that can be condensed to vacuum on the boundary. While our theory works for any surface with boundaries, we will mainly focus on a planar region $\Lambda$ with many holes $\mathfrak{h_i}$, which are small rectangles removed from $\Lambda$ (see Fig. \ref{fig:boundary} for an example).  We generally imagine the holes $\mathfrak{h_i}$ as small disks or rectangles, but in order to achieve topological protection of the ground state degeneracy, their sizes cannot be too small relative to the coherence length of the condensate. When gapped boundaries become too small, the condensates of anyons at the boundaries break up and decohere into single anyons.

In the UMTC model of a 2D topological order $\mathcal{B}=\mZ(\mathcal{C})$ that is a Drinfeld center, a stable gapped boundary or gapped hole is modeled by a Lagrangian algebra $\mathcal{A}$ in $\mathcal{B}$.\footnote{We will use the terms gapped boundary, gapped hole and hole interchangeably.}  The Lagrangian algebra $\mathcal{A}$ consists of a collection of bulk bosonic anyons that can be condensed to vacuum at the boundary, and the corresponding gapped boundary is a condensate of those anyons which behaves as a non-abelian anyon of quantum dimension $d_{\mathcal{A}}$.  Lagrangian algebras in $\mathcal{B}=\mZ(\mathcal{C})$ are in one-to-one correspondence with indecomposable module categories $\mathcal{M}$  over $\mathcal{C}$, which can also be used to label gapped boundaries.  In the Dijkgraaf-Witten theory for a finite group $G$, the different irreducible hole types are parameterized by pairs $(K,\omega)$, $K\subseteq G$ a subgroup up to conjugation, $\omega \in H^2(K,\C^\times)$, which can be directly used to construct indecomposable module categories over $\textrm{Vec}_G$.

\begin{figure}
\centering
\includegraphics[width = 0.65\textwidth]{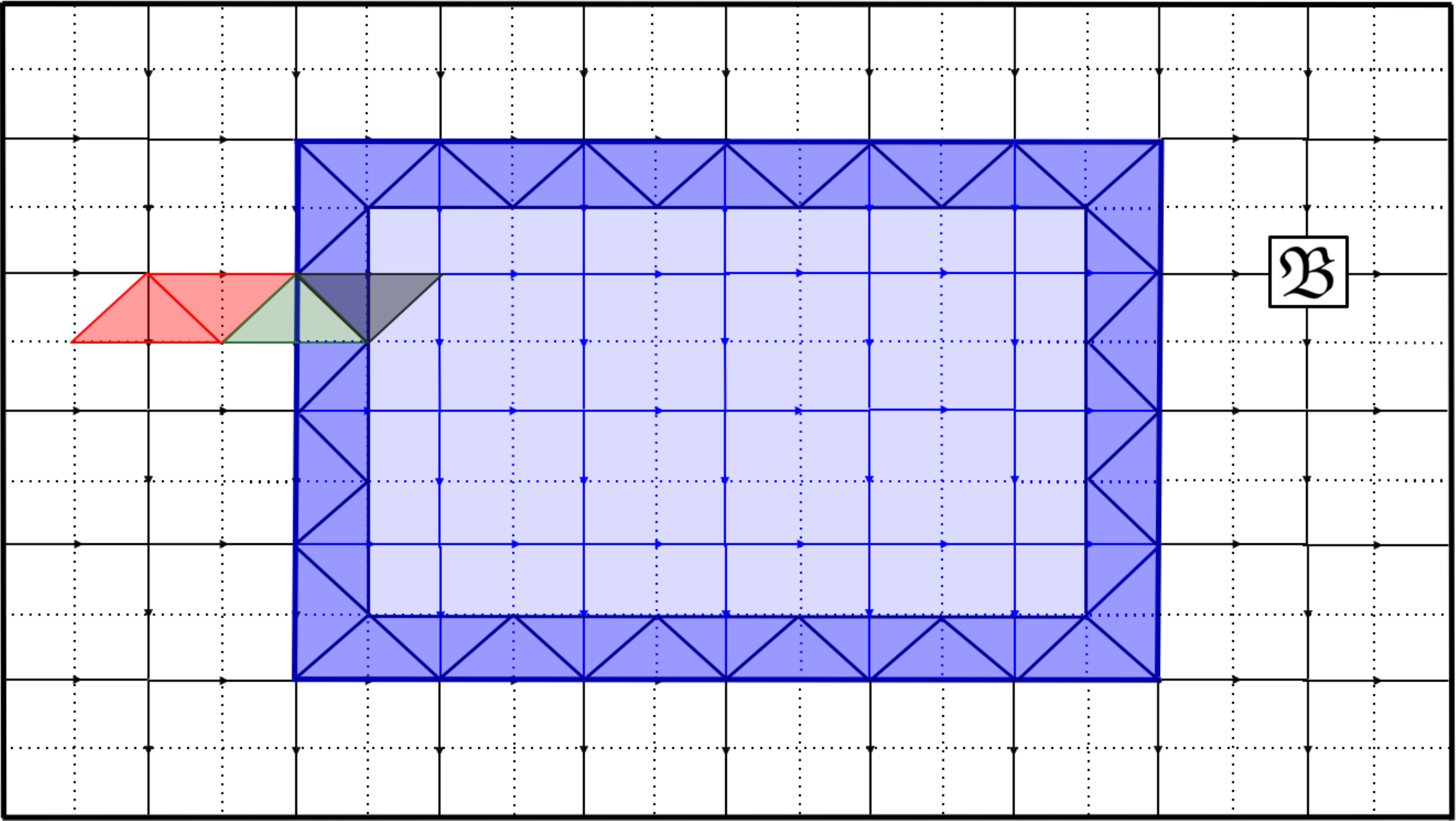}
\caption{Pictorial summary of the Hamiltonian realization and algebraic model of gapped boundaries. In this picture, the hole $\mathfrak{h}$ is the inner boldfaced and blue rectangle.} 
\label{fig:main-results}
\end{figure}

To generate a hole (with a gapped boundary), we modify Kitaev's Hamiltonian using the two-parameter Hamiltonians presented by Bombin and Martin-Delgado in \cite{Bombin08}.  Our resulting Hamiltonian is different from the one presented by Beigi et al. in Ref. \cite{Beigi11} and is explicitly constructed using local projectors on all qudits/spins in the lattice Hilbert space (cf. the one outlined by Bombin and Martin-Delgado in Ref. \cite{Bombin08}).  The main technical tool to analyze the Hamiltonian is the ribbon operators.  We extend ribbon operators to the boundary by defining the {\it boundary ribbon} (e.g. the darker blue ribbon ring in Fig. \ref{fig:main-results}). Furthermore, we show how local and ribbon operator algebras on the boundary form quasi-Hopf and coquasi-Hopf algebras, respectively.

Motivated by the formula in \cite{Bombin08} for bulk excitations, we develop a Fourier transform to express the irreducible types of boundary excitations. This parametrizes the boundary elementary excitation types as

\begin{equation}
\{(T,R): \text{ } T = K r_T K \in K\backslash G/K, \text{ } R \in (K^{r_T})_{\text{ir}}\},
\end{equation}

\noindent
where $K^{r_T}=K\cap r_TKr_T^{-1}$ is a stabilizer group, and $(G)_{\text{ir}}$ denotes the set of irreducible representations of a group $G$. The quantum dimensions of these excitations are given by

\begin{equation}
\textrm{FPdim}(T,R)=\frac{|K|\textrm{dim}(R)}{|K^{r_T}|}.
\end{equation}

\noindent
We provide a simple and systematic method to determine how a bulk anyon $(C,\pi)$ condenses to the boundary into $(T,R)$'s and vice versa. Based on the interplay between local and topological properties of ribbon operators, we define the notion of a {\it condensation channel} and show how multiplicities arise in this condensation procedure. These details are presented in Sections \ref{sec:bd-hamiltonian}-\ref{sec:bd-excitations}.

In the categorical formalism, the bulk of a TQFT is given by a modular tensor category $\B = \mZ(\mC)$ for some unitary fusion category $\mC$, and a (gapped) hole is a Lagrangian algebra $\mathcal{A}=\oplus_{a}n_a a$ in $\B$. In the case of Dijkgraaf-Witten theories, we have $\mC = \text{Vec}_G$. We prove an equivalent formulation for the separability condition of a  Lagrangian algebra, which gives a straightforward way to enumerate gapped boundaries of any such TQFT.  For most purposes, $\A$ can be regarded as a (composite) non-abelian anyon of quantum dimension $d_{\mathcal{A}}$. Gapped boundaries are in one-to-one correspondence with indecomposable module categories $\mathcal{M}_i$ over $\mC$. Then, elementary excitations on $\mathcal{M}_i$ are the simple objects in the functor fusion category $\mC_{ii} = \textrm{Fun}_{\mC}(\mathcal{M}_i, \mathcal{M}_i)$.  In this formalism, the condensation functor is a tensor functor (specifically, a quotient functor) from $\mathcal{Z}(\mC)$ to $\mC_{ii}$. The collections of fusion categories $\mC_{ii}$ and their bimodule categories $\mC_{ij}$ form a multi-fusion category $\mathfrak{C}$. From this multi-fusion category, we can find quantum dimensions of boundary excitations. In addition, we provide analogs of the $\theta-3j$ and $F-6j$ symbols of modular tensor categories in the context of bulk-to-boundary condensation, which we call the $M-3j$ and $M-6j$ symbols.

Topological degeneracies in the presence of holes $\mathfrak{h_i}$ labeled by $\mathcal{A}_i$ are described by the morphism space $\textrm{Hom}(\mathcal{A_\infty}, \otimes_i\mathcal{A}_i)$, where the outermost boundary is labeled by $\mathcal{A_\infty}$. $\mathcal{A_\infty}$ can be either an anyon type or a boundary type.

\subsection{Previous Works}

The first example of gapped boundaries appeared in \cite{Bravyi98} as the smooth and rough boundaries of the $\Z_2$ toric code. Boundaries of Kitaev's quantum double model for general finite groups $G$ were studied by Beigi et al. in \cite{Beigi11}. In that work, they generated gapped boundaries with a different Hamiltonian and described condensations to vacuum, but left the description of the boundary excitations as an important open problem.  In 2009, Kitaev contemplated the categorical formulation that a gapped boundary is modeled by a condensable Frobenius algebra \cite{Kitaev09}. Later, a related categorical description with some details is outlined in \cite{KitaevKong, Fuchs2014, Kapustin10b}.  Further clarifications appeared in \cite{Kong} on boundary excitations, but no explicit general solvable Hamiltonian is presented.  The mathematics of such a theory is in \cite{KO,Davydov12}. 

A physical theory of gapped boundaries related to defects for abelian topological phases of matter is developed in \cite{ Levin13, Bark13a,Bark13b,Bark13c, Kapustin14}. For recent works on a physical understanding of gapped boundaries in more general topological phases and in the closely related topic of anyon condensation, see \cite{Kapustin10, Bais09, Kong13, Eliens13, LWW, Kong15, Neupert16a, Neupert16b, Wan16}. Topological degeneracy has been studied using various techniques in \cite{Kapustin14, Bark13b, LWW,WW,HW}.

\subsection{Notations}

The notations we adopt throughout the paper are presented in Appendix \ref{sec:notations}.  

Throughout the paper, all algebras and tensor categories are over the complex numbers $\C$.  All fusion and modular tensor categories are unitary.  Unitary fusion categories are spherical.

In this paper, gapped boundaries will always be oriented so that the bulk of the topological phase is on the left hand side when traversing the boundary; this allows us to consider gapped boundaries as indecomposable left module categories. See Sections \ref{sec:hamiltonian} and \ref{sec:algebraic} for details.

When appropriate, the font $D(G)$ will be used to denote the quasi-triangular Hopf algebra which is the quantum double of a finite group $G$. The font $\mfD(G)$ will be used for the modular tensor category which is the Drinfeld center of $\text{Vec}_G$, the category of $G$-graded vector spaces, or $\Rep(G)$, the representation category of $G$. In general, we have $\mfD(G) = \mZ(\Rep(G)) = \mZ(\text{Vec}_G) = \Rep(D(G))$, and these notations may be used interchangeably.

\subsection{Acknowledgment}
The authors thank Maissam Barkeshli, Shawn Cui, and Cesar Galindo for answering many questions.  We thank Alexei Davydov for pointing out the example that two different Lagrangian algebras can have the same underlying object, and we thank Bowen Shi for pointing out an error in the original version. I.C. would like to thank Michael Freedman and Microsoft Station Q for hospitality in hosting the summer internship and visits during which this work was done. M.C. thanks Chao-Ming Jian for collaborations on related topics. Z.W. is partially supported by NSF grants DMS-1108736 and DMS-1411212.

\vspace{2mm}
\section{Hamiltonian realization of gapped boundaries}
\label{sec:hamiltonian}

In this section, we present a Hamiltonian realization of gapped boundaries in any Kitaev quantum double model for the (untwisted) Dijkgraaf-Witten theory based on a finite group $G$. Sections \ref{sec:kitaev-hamiltonian} and \ref{sec:ribbon-operators} review the Hamiltonian for the standard Kitaev model and its corresponding ribbon operators. In Section \ref{sec:bd-hamiltonian}, we review existing works on Hamiltonians for special cases of Kitaev models with boundaries, and develop our own, more general Hamiltonian, which may also be adapted for TQC. Section \ref{sec:hamiltonian-gsd-condensation} presents the ground state degeneracy for this model, and Section \ref{sec:bd-excitations} classifies the elementary excitations on the boundary and systematically describes the bulk-to-boundary condensation procedure. Finally, in Section \ref{sec:ds3-hamiltonian-example}, we provide a concrete example for all of the theory by considering the non-abelian theory $\mfD(S_3)$.

\subsection{Kitaev quantum double models}
\label{sec:kitaev-hamiltonian}

Kitaev's famous toric code paper \cite{Kitaev97} presents a Hamiltonian to realize the Dijkgraaf-Witten theory based on any finite group $G$ on a general directed lattice and perform topological quantum computation. For simplicity of illustration and calculation, we will assume throughout our paper that the lattice is the square lattice in the plane; however, it is clear that all of the developed theory here extends to arbitrary lattices. In this model, a data qudit is placed on each edge of the lattice, as shown in Fig. \ref{fig:kitaev}. The Hilbert space for each qudit has an orthonormal basis given by $\{\ket{g}: g \in G\}$, so the total Hilbert space is $\mathcal{L}=\otimes_{e}\mathbb{C}[G]$.

\begin{figure}
\centering
\includegraphics[width = 0.4\textwidth]{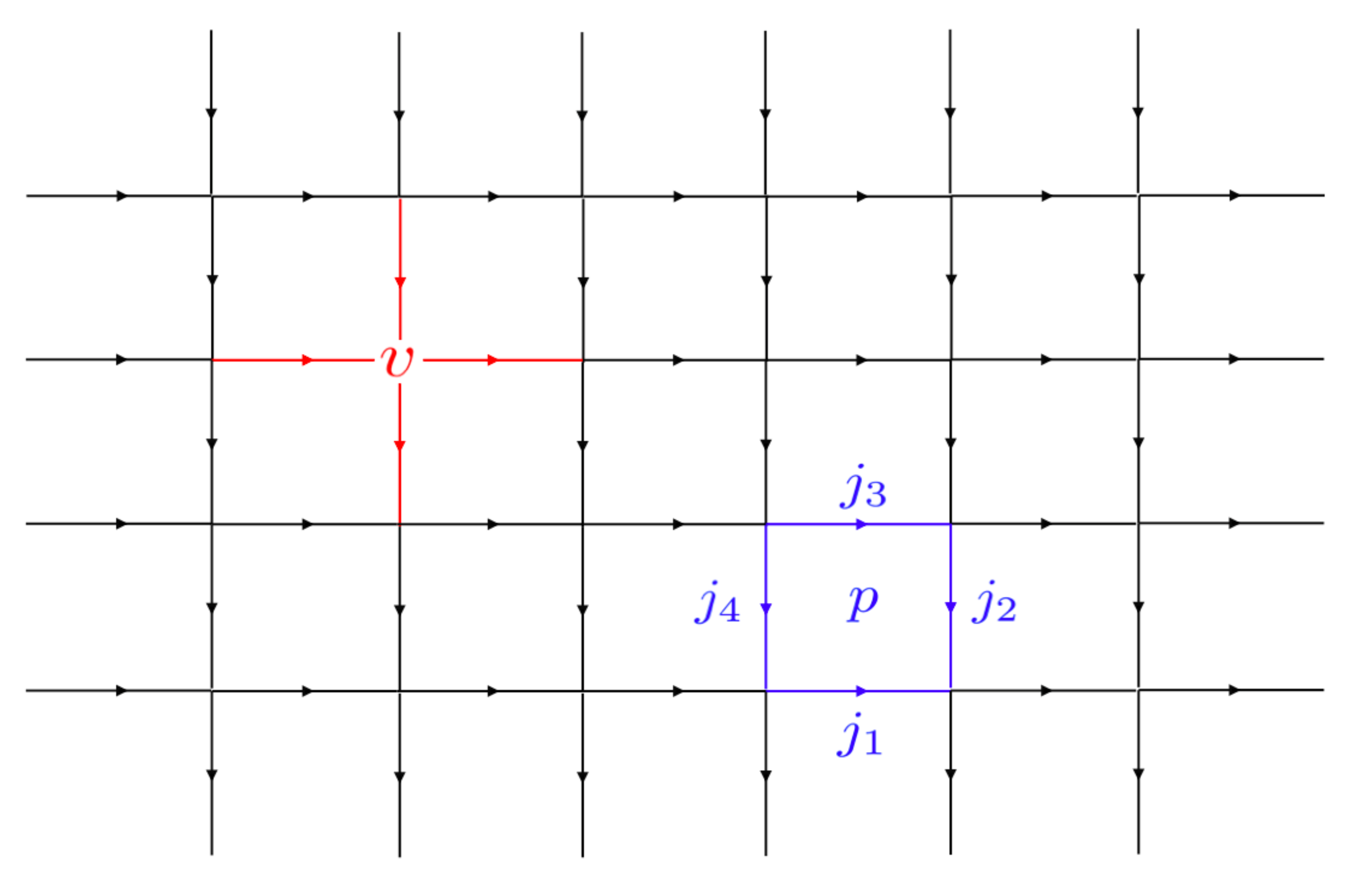}
\caption{Lattice for the Kitaev model. For simplicity of illustration and calculation, we use a square lattice, but in general, one can use an arbitrary lattice. If the group $G$ is non-abelian, it is necessary to define orientations on edges, as we have shown here. The edges $j$ and $j_1,...j_m$, used to obtain $A^g(v)$ and $B^h(p)$, are illustrated for this example of $v,p$.}
\label{fig:kitaev}
\end{figure}

As discussed in Ref. \cite{Kitaev97}, a Hamiltonian is used to transform the high-dimensional Hilbert space of all data qudits into a topological encoding. This Hamiltonian is built from several basic operators on a single data qudit:

\begin{equation}
\label{eq:L}
L^{g_0}_+ \ket{g} = \ket{g_0g}
\end{equation}
\begin{equation}
L^{g_0}_- \ket{g} = \ket{gg_0^{-1}}
\end{equation}
\begin{equation}
T^{h_0}_+ \ket{h} = \delta_{h_0,h}\ket{h}
\end{equation}
\begin{equation}
\label{eq:T}
T^{h_0}_- \ket{h} = \delta_{h_0^{-1},h}\ket{h}
\end{equation}

\noindent
where $\delta_{i,j}$ is the Kronecker delta function. These operators are defined for all elements $g_0,h_0 \in G$, and provide a faithful representation of the left/right multiplication and comultiplication in the Hopf algebra $\C[G]$. Using these operators, local gauge transformations and magnetic charge operators are defined as follows, on each vertex $v$ and plaquette $p$ \cite{Kitaev97}:

\begin{equation}
\label{eq:kitaev-vertex-g-term}
A^{g}(v,p) = A^{g}(v) = \prod_{j \in \text{star}(v)} L^g(j,v)
\end{equation}

\begin{equation}
\label{eq:kitaev-plaquette-h-term}
B^h(v,p) = \sum_{h_1 \cdots h_k = h} \prod_{m=1}^k T^{h_m}(j_m, p)
\end{equation}

Here, $j_1, ..., j_k$ are the boundary edges of the plaquette $p$ in counterclockwise order (see Fig. \ref{fig:kitaev}), and $L^g$ and $T^h$ are defined as follows: if $v$ is the origin of the directed edge $j$, $L^g(j,v) = L^g_-(j)$, otherwise $L^g(j,v) = L^g_+(j)$; if $p$ is on the left (resp., right) of the directed edge $j$, $T^h(j,p) = T^h_-(j)$ ($T^h_+(j)$) \cite{Kitaev97}.

Note that since the $A^g(v)$ satisfy $A^g(v) A^{g'}(v) = A^{gg'}(v)$, the set of all $A^g(v)$ (for fixed $v$) form a representation of $G$ on the entire Hilbert space $\mathcal{L}=\otimes_{e}\mathbb{C}[G]$ of all data qudits \cite{Bombin08}.

In fact, we can define operators

\begin{equation}
\label{eq:bulk-local-operators}
D^{(h,g)} (v,p) = B^h (v,p) A^g (v,p)
\end{equation}

\noindent
that act on a {\it cilium} $s = (v,p)$, where $v$ is a vertex of $p$. These operators act locally, and they form the basis of a quasi-triangular Hopf algebra $\mathcal{D} = \Span\{D^{(h,g)}\}$, the {\it quantum double} $D(G)$ of the group $G$. The specific multiplication, comultiplication, and antipode for the Hopf algebra are presented in Ref. \cite{Kitaev97}. As vector spaces, we have

\begin{equation}
D(G) = F[G] \otimes \C[G],
\end{equation}
where $F[G]$ are complex functions on $G$.

In the next sections, we will see the importance of these local operators in determining the topological properties of excitations in this group model.

Finally, two more linear combinations of these $A^g$ and $B^h$ operators are required to define the Hamiltonian:

\begin{equation}
\label{eq:kitaev-vertex-term}
A(v) = \frac{1}{|G|} \sum_{g \in G} A^g(v,p) 
\end{equation}

\begin{equation}
\label{eq:kitaev-plaquette-term}
B(p) = B^1(v,p).
\end{equation}

\noindent
The Hamiltonian\footnote{Note: We call this Hamiltonian $H_{(G,1)}$, as this model is the Dijkgraaf-Witten theory with trivial cocycle (twist). In general, this Hamiltonian may be twisted by a 3-cocycle $\omega \in H^3(G,\C^\times)$, and may be written as $H_{(G,\omega)}$.} for the Kitaev model is then defined as

\begin{equation}
\label{eq:kitaev-hamiltonian}
H_{(G,1)} = \sum_v (1-A(v)) + \sum_p (1-B(p))
\end{equation}

It is important to note that all terms in the Hamiltonian $H_{(G,1)}$ commute with each other. By the spectral theorem, this means that these operators share simultaneous eigenspaces. Each individual operator is a projector and has eigenvalues $\lambda = 0,1$. (Specifically, the $A(v)$ terms project onto the trivial representation, and the $B(p)$ terms project onto trivial flux \cite{Bombin08}.) The ground state of the Hamiltonian corresponds to the eigenspace with overall eigenvalue (energy) $\lambda = 0$, and states with excitations will have positive energy. Here, an excitation or a quasi-particle is defined so that exactly one of terms $(1-A(v))$ and one of the terms $(1-B(p))$ is in the $\lambda = 1$ eigenstate; we say the quasi-particle is located at the cilium $s=(v,p)$. The resulting quantum encoding is hence ``topological'': regardless of how densely we place the data qudits, there will always be a constant energy gap between the ground state and the first excited state, and between each excited state.

\subsection{Ribbon operators}
\label{sec:ribbon-operators}

In this section, we review the algebra of bulk ribbon operators for the Kitaev models as presented in Refs. \cite{Bombin08,Kitaev97}. These definitions play a crucial role in this section, as a major contribution of this section will be the presentation of boundary ribbon operators in Sections \ref{sec:bd-hamiltonian} and \ref{sec:bd-excitations}.

\subsubsection{Basic definitions}

Before we proceed to define the algebra of ribbon operators, let us first review the following basic definitions. In what follows, the {\it direct} lattice will denote the original lattice of the Kitaev model (cf. the {\it dual} lattice, in which vertices and plaquettes of the direct lattice are switched). Both lattices are shown in Fig. \ref{fig:ribbon-defs}.

\begin{figure}
\centering
\includegraphics[width = 0.4\textwidth]{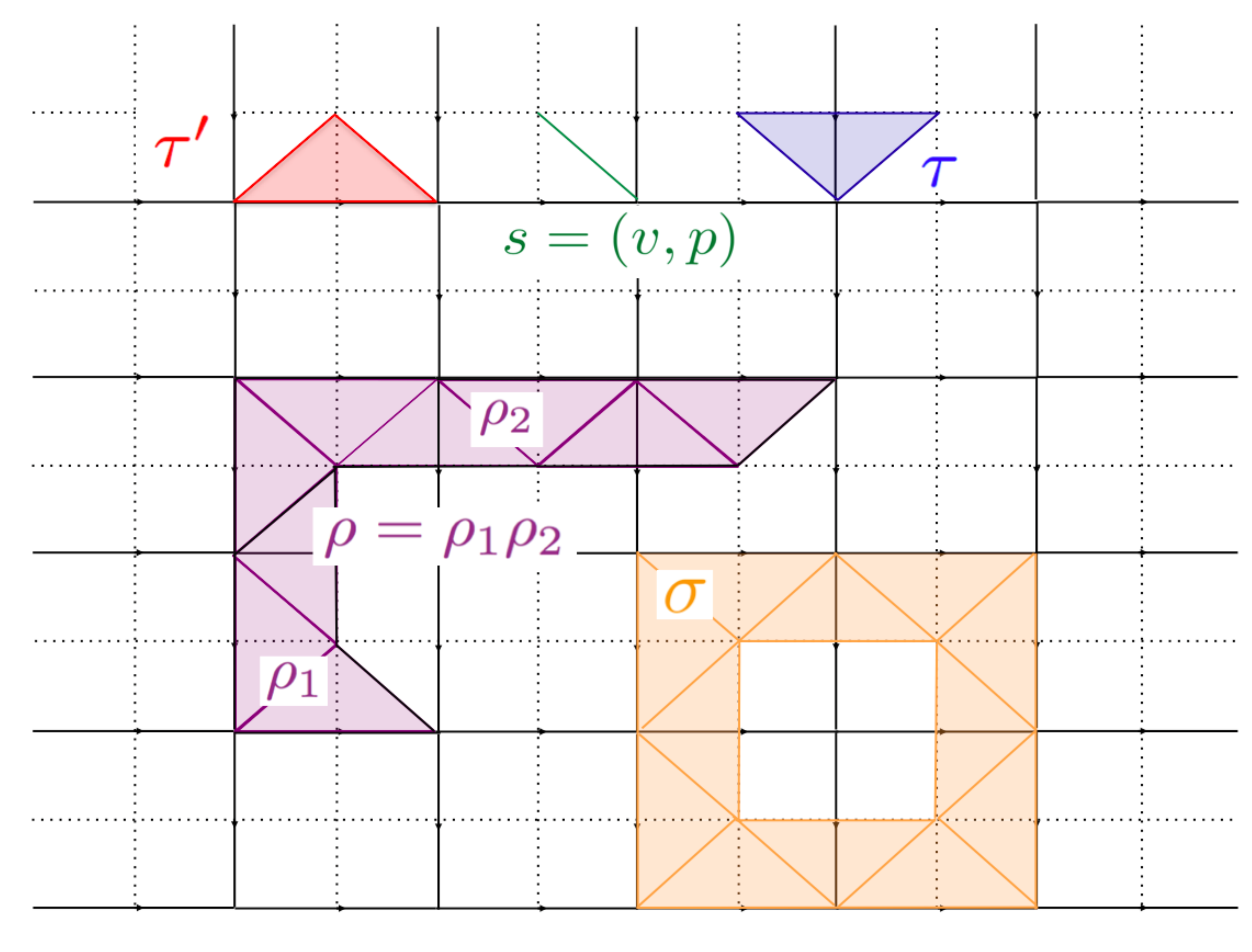}
\caption{Illustration of Definitions \ref{cilium-def}-\ref{ribbon-def}. The direct lattice is shown as before, and the dual lattice is shown in dotted lines. $s=(v,p)$ is a cilium. $\tau$ is a dual triangle, and $\tau'$ is a direct triangle. $\rho = \rho_1 \rho_2$ is a composite ribbon, formed by gluing the last site of $\rho_1$ to the first site of $\rho_2$. $\rho$ is an open ribbon, and $\sigma$ is a closed ribbon.}
\label{fig:ribbon-defs}
\end{figure}

\begin{definition}
\label{cilium-def}
A {\it cilium} is a pair $s = (v,p)$, where $p$ is a plaquette in the lattice, and $v$ is a vertex of $p$. These are visualized (e.g. in Fig. \ref{fig:ribbon-defs}) as lines connecting $v$ to the center of $p$ (i.e. the dual vertex corresponding to $p$).
\end{definition}

\begin{definition}
\label{triangle-def}
A {\it direct (dual) triangle} $\tau$ consists of two adjacent cilia $s_0,s_1$ connected via an edge $e$ on the direct (dual) lattice, as shown in Fig. \ref{fig:ribbon-defs}. We write $\tau = (s_0, s_1, e) = (\partial_0 \tau, \partial_1 \tau, e)$, listing sides in counterclockwise order. Throughout the section, $\tau$ will be used to denote a dual triangle, and $\tau'$ a direct triangle.
\end{definition}

\begin{definition}
\label{ribbon-def}
A {\it ribbon} $\rho$ is a oriented strip of triangles $\tau_1, ... \tau_n$, alternating direct/dual, such that $\partial_1 \tau_i = \partial_0 \tau_{i+1}$ for each $i = 1,2,...n-1$, and the intersection $\tau_i \cap \tau_j$ has zero area if $i \neq j$ (i.e. $\rho$ does not intersect itself).

$\rho$ is said to be {\it closed} if $\partial_1 \tau_n = \partial_0 \tau_1$. $\rho$ is {\it open} if it is not closed. Examples of closed and open ribbons are shown in Fig. \ref{fig:ribbon-defs}.
\end{definition}

\begin{definition}
\label{open-ribbon-operator-def}
Let $\rho$ be an open ribbon, with endpoint cilia $s_0 = (v_0, p_0)$, $s_1 = (v_1, p_1)$. A {\it ribbon operator} on $\rho$ is an operator $F_\rho$ that commutes with all terms of the Hamiltonian (\ref{eq:kitaev-hamiltonian}) except possibly the terms corresponding to $v_0,p_0,v_1,$ or $p_1$.
\end{definition}

The goal of this section is hence to determine the algebra $\mathcal{F}$ of ribbon operators in the bulk of the Kitaev model.

\subsubsection{Triangle operators and the gluing relation}

The ribbon operators are defined recursively \cite{Bombin08,Kitaev97}. As discussed in Refs. \cite{Bombin08,Kitaev97}, given a ribbon $\rho$, the set of ribbon operators on $\rho$ has basis elements $F^{(h,g)}_\rho$ indexed by two elements of $G$. The simplest ribbon is the empty ribbon $\epsilon$, for which the ribbon operators are given by

\begin{equation}
F^{(h,g)}_\epsilon = \delta_{1,g}.
\end{equation}

The next simple case is when $\rho$ is a single triangle. Let $\tau = (s_0,s_1,e)$ be any dual triangle, and let $\tau'=(s_0',s_1',e')$ be any direct triangle. The ribbon operators are defined as follows:

\begin{equation}
\label{eq:triangle-operator-def}
F^{(h,g)}_\tau := \delta_{1,g} L^h (e), \qquad F^{(h,g)}_{\tau'} := T^g (e')
\end{equation}

\noindent
In this definition, the choice of $+$ or $-$ for the $L,T$ operators is determined by the orientation of the edge on each triangle.

Finally, we define a ``gluing relation'' on ribbon operators. Let $\rho = \rho_1 \rho_2$ be the ribbon formed by gluing the last cilium of $\rho_1$ to the first cilium of $\rho_2$ (see Fig. \ref{fig:ribbon-defs}). We define the ribbon operator on this composite ribbon to be

\begin{equation}
\label{eq:ribbon-gluing}
F^{(h,g)}_\rho := \sum_{k \in G} F^{(h,k)}_{\rho_1} F^{(k^{-1}hk,k^{-1}g)}_{\rho_2}.
\end{equation}

It is simple to check that this definition makes $F^{(h,g)}_\rho$ independent of the particular choice of $\rho_1,\rho_2$ \cite{Bombin08}.

The operators $F^{(h,g)}$ also form a basis for a quasi-triangular Hopf algebra $\mathcal{F}$, as shown in Ref. \cite{Kitaev97}. In fact, Ref. \cite{Kitaev97} also shows that the algebra $\mathcal{F}$ is precisely the dual Hopf algebra to the quantum double $\mathcal{D} = D(G)$.

\subsubsection{Elementary excitations in the Kitaev model}

One of the most important applications of ribbon operators is to classify the elementary excitations\footnote{There are many terms in the literature that all refer to essentially the same thing: an elementary excitation, a simple quasi-particle, an anyon, or a simple object of $\mathcal{Z}(\Rep(G))$.  A topological charge or a superselection sector is an isomorphism class of all the above.} in the Kitaev model. As shown in the previous section, the operators $F^{(h,g)}_\rho$ create a pair of excitations at the endpoints of the ribbon $\rho$. However, these excitations may be a superposition of ``elementary'' excitations, so they can be easily split by random small perturbations.  Therefore, such composite excitations are not stable and may easily decohere into anyons. Let us formally define these elementary excitations as follows:

\begin{definition}
Let $\mathcal{E}$ denote the space of excitations that can be created at any cilium $s = (v,p)$ by applying a linear combination of the operators $F^{(h,g)}_\rho$ to some ribbon $\rho$ terminating at $s$. An {\it elementary excitation}
or {\it simple quasi-particle}
is given by a subspace of $\mathcal{E}$ that is preserved under the action of local operators $D^{(h,g)}(s)$ 
(defined in Eq. (\ref{eq:bulk-local-operators})),
that cannot be further decomposed (non-trivially) into the direct sum of such subspaces.
\end{definition}

Since these subspaces cannot be modified by local operators, they determine the ``topological charge'' of the excitation. On the other hand, the degrees of freedom within this subspace are purely local properties of the excitation. We define the quantum dimension of the excitation to be the square root of the dimension of this subspace.

It turns out that the basis $F^{(h,g)}$ for the algebra $\mathcal{F}$ is not useful in classifying elementary excitations. Instead, by Ref. \cite{Bombin08}, we have the following theorem:

\begin{theorem}
\label{bulk-anyon-types}
The elementary excitations of the Kitaev model with group $G$ are given by pairs $(C,\pi)$, where $C$ is a conjugacy class of $G$ and $\pi$ is an irreducible representation of the centralizer $E(C)$ of $C$.  Recall that the pairs $(C, \pi)$ are in bijection with the irreducible representations of the double of $G$.
\end{theorem}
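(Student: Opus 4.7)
The natural strategy is to reduce the classification of elementary excitations to the representation theory of the quantum double $D(G)$, using the ribbon operators as the intertwining tool between the local Hopf algebra and the global ``topological charge'' data.

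First I would analyze how the local operators $D^{(h,g)}(s)$ act, by conjugation, on the ribbon operators $F^{(h',g')}_\rho$ terminating at the cilium $s$. Using the explicit definitions (\ref{eq:L})--(\ref{eq:T}), the triangle relations (\ref{eq:triangle-operator-def}), and the gluing rule (\ref{eq:ribbon-gluing}), one checks that conjugation by $D^{(h,g)}(s)$ permutes the labels $(h',g')$ by the formulas
\begin{equation}
A^{g}(s)\,F^{(h',g')}_\rho\,A^{g}(s)^{-1}=F^{(gh'g^{-1},\,gg')}_\rho,\qquad
B^{h}(s)\,F^{(h',g')}_\rho\,B^{h}(s)^{-1}=\delta(\cdot)\,F^{(h',g')}_\rho,
\end{equation}
so that $\mathcal{E}$ carries a natural left $D(G)$-module structure, with the ribbon-operator basis $F^{(h,g)}_\rho$ realizing $\mathcal{F}\cong D(G)^{*}$ as the regular module.

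Next I would invoke the known classification of $\Irr(D(G))$: its simple modules are parameterized by pairs $(C,\pi)$ with $C$ a conjugacy class and $\pi\in(E(C))_{\mathrm{ir}}$, of dimension $|C|\cdot\dim\pi$. Because $D(G)$ is semisimple, the $D(G)$-module $\mathcal{E}$ (restricted to excitations at a single endpoint $s$) decomposes as a direct sum of these simples, and by definition an elementary excitation is exactly an irreducible summand. To make this concrete I would perform the standard Fourier transform on the ribbon basis: fix a representative $c_1\in C$ and coset representatives $q_i$ with $c_i=q_ic_1q_i^{-1}$, and set
\begin{equation}
F^{(C,\pi);\,(i,j),(i',j')}_\rho \;=\; \frac{\dim\pi}{|E(C)|}\sum_{n\in E(C)} \pi(n)^{*}_{jj'}\,F^{(c_i,\,q_i n q_{i'}^{-1})}_\rho,
\end{equation}
and verify by direct computation that under conjugation by $D^{(h,g)}(s)$ the index $(i,j)$ transforms in the irreducible $D(G)$-module $V_{(C,\pi)}$ while $(i',j')$ is untouched (and conversely at the other endpoint). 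This exhibits the decomposition $\mathcal{E}=\bigoplus_{(C,\pi)} V_{(C,\pi)}\otimes V_{(C,\pi)}^{*}$ with manifest $D(G)$-action.

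Finally, I would conclude by identifying elementary excitations with the irreducible summands $V_{(C,\pi)}$: these are precisely the subspaces of $\mathcal{E}$ stabilized by the local algebra $\D=D(G)$ that admit no further nontrivial decomposition, so Theorem \ref{bulk-anyon-types} follows. The main technical obstacle is the careful commutation computation in Step 1, in particular tracking the $+/-$ conventions for the triangle operators and ensuring that the action of $B^h(s)$ interacts correctly with the flux label $h'$ through the gluing relation; once those signs and the summation in (\ref{eq:ribbon-gluing}) are handled cleanly, the Fourier decomposition is forced by Schur's lemma applied to the regular representation of $D(G)$, and the formula $\FPdim(C,\pi)=|C|\dim\pi$ emerges automatically as the dimension of $V_{(C,\pi)}$.
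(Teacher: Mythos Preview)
Your proposal is correct and follows essentially the same approach as the paper: both arguments hinge on the Fourier-type change of basis (\ref{eq:bulk-ribbon-FT}) for the ribbon operator algebra $\mathcal{F}$, and both identify the $(C,\pi)$ labels as the global (topological) degrees of freedom by showing that the local operators $D^{(h,g)}(s)$ act only on the auxiliary indices $(\mathbf{u},\mathbf{v})$. Your framing in terms of the regular $D(G)$-module decomposition and Schur's lemma is a clean conceptual packaging of what the paper does more concretely (it writes out the inverse change of basis explicitly and cites \cite{Bombin08} for the commutation computation you propose to carry out in Step~1); up to the harmless convention $c_i$ versus $c_i^{-1}$ in the flux label, the two arguments coincide.
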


\begin{proof}
To prove this theorem, we construct a change-of-basis for the ribbon operator algebra $\mathcal{F}$. Following Ref. \cite{Bombin08}, let us construct a new basis as follows:

\begin{enumerate}
\item
Choose an arbitrary element $r_C \in G$ and form its conjugacy class $C = \{gr_Cg^{-1}: g \in G\}$. Index the elements of $C$ so that $C = \{ c_i \}_{i=1}^{|C|}$.
\item
Form the centralizer $E(C) = \{ g \in G: gr_C = r_Cg \}$.\footnote{It is not hard to show that, up to conjugation, $E(C)$ depends only on $C$ and not on $r_C \in C$.}
\item
Form a set of representatives $P(C) = \{ p_i \}_{i=1}^{|C|}$ of $G/E(C)$, so that $c_i = p_i r_C p_i^{-1}$.
\item
Choose a basis for each irreducible representation $\pi$ of $E(C)$. Let $\Gamma_\pi(k)$ denote the corresponding unitary matrix for the representation of $k \in G$.
\item
The new basis is
\begin{align}
\begin{split}
\label{eq:elementary-ribbon-basis}
\{F^{(C,\pi);({\bf u,v})}_\rho: \text{ } C &\text{ a conjugacy class of }G, \text{ } \pi \in (E(C))_{\text{ir}},
\\&{\bf u} = (i,j), {\bf v} = (i', j'), 1 \leq i, i' \leq |C|, 1 \leq j,j' \leq \dim(\pi)\},
\end{split}
\end{align}
where $E(C)_{\text{ir}}$ denotes the irreducible representations of $E(C)$, and each $F^{(C,\pi);({\bf u,v})}_\rho$ is given by
\begin{equation}
\label{eq:bulk-ribbon-FT}
F^{(C,\pi);({\bf u,v})}_\rho := \frac{\dim(\pi)}{|E(C)|}
\sum_{k \in E(C)} \left(\Gamma_\pi^{-1}(k)\right)_{jj'}F^{(c_i^{-1}, p_i k p_{i'}^{-1})}.
\end{equation}
\end{enumerate}

We can also construct the inverse change of basis \cite{Bombin08}. Suppose we are given $g,h \in G$. Then:

\begin{enumerate}
\item
Let $C$ be the conjugacy class of $h^{-1}$. Index the elements of $C$ so that $C = \{ c_i \}_{i=1}^{|C|}$.
\item
Let $E(C)$ be the centralizer of $C$ as above.
\item
Form a set of representatives $P(C)$ of $G/E(C)$ as above.
\item
Any $g \in G$ belongs to a unique coset $p_i E(C) \in G/E(C)$. 
For each $g \in G$, let $i(g)$ denote the index of the corresponding $p_i$.
\item
Let $k_{(h,g)} = \left(p_{i(h^{-1})}\right)^{-1}g p_{i(g^{-1}h^{-1}g)}$.
\item
The inverse change-of-basis is
\begin{equation}
F^{(h,g)}_\rho =
\sum_{\pi \in E(C)_{\text{ir}}}\sum_{j,j' = 1}^{\dim(\pi)}
\left(\Gamma_\pi(k_{(h,g)})\right)_{jj'} F^{(C,\pi);(\bf{u,v})}_\rho
\end{equation}
where ${\bf u} = (i(h^{-1}), j)$, ${\bf v} = (i(g^{-1}h^{-1}g), j')$.
\end{enumerate}

The basis (\ref{eq:elementary-ribbon-basis}) is particularly useful because the parameters $(C,\pi)$ completely encode the global degrees of freedom of the particles created, and the $(\bf{u,v})$ completely encode the local degrees of freedom. Specifically, as shown in Ref. \cite{Bombin08}, different operators $F^{(C,\pi);(\bf{u,v})}_\rho$ with the same $(C,\pi)$ but different $(\bf{u,v})$ may be changed into one another by applying the local operators $D^{(h,g)}$ at the two endpoints $s_0,s_1$ of $\rho$. Similarly, if two ribbon operators in this new basis have different $(C,\pi)$ pairs, any operator that can change one to another must have support that connects $s_0$ and $s_1$. It follows that the elementary excitations of the Kitaev model are described precisely by pairs $(C,\pi)$, where $C$ is a conjugacy class of the original group $G$, and $\pi$ is an irreducible representation of the centralizer of $C$.

\end{proof}

Physically, in the basis (\ref{eq:elementary-ribbon-basis}), $C$ represents magnetic charge, and $\pi$ represents electric charge. The quantum dimension of an elementary excitation $(C,\pi)$ is given by the square root of the dimension of the subalgebra spanned by all $F^{(C,\pi);({\bf u,v})}_\rho$, or

\begin{equation}
\FPdim(C,\pi) = |C|\dim(\pi).
\end{equation}

As a special case, the simple particle $(C,\pi)$, where $C = \{1\}$ is the conjugacy class of the identity element and $\pi$ is the trivial representation, is the vacuum particle (i.e. absence of excitation). The vacuum particle always has a quantum dimension of 1.

When two anyons (given by pairs $(C_1,\pi_1)$ and $(C_2, \pi_2)$, say) are brought by ribbon operators to the same cilium on the lattice, one can essentially consider them as one composite anyon. Specifically, one can again consider the local operators $D^{(h,g)}$ acting on this cilium, which will determine new sets of local/global degrees of freedom on the new composite anyon. This process is known as {\it anyon fusion}. It can be shown that anyon fusion in this group model is described by the fusion rules of the unitary modular tensor category $\mZ(\Rep(G))$.

We would like to note that the change of basis to (\ref{eq:elementary-ribbon-basis}) and its inverse is essentially a generalized Fourier transform and its inverse. However, this generalized Fourier transform acts on the {\it operator algebra} $\mathcal{F}$, not on the vectors themselves. The numbers of the two algebra generators match because each pair $(C,\pi)$ corresponds to an irreducible representation of the quantum double $D(G)$. For general (non-abelian) groups, the Fourier basis is precisely given by matrix elements of the irreducible representations \cite{Moore06}.

\subsection{Quantum double models with boundaries}
\label{sec:bd-hamiltonian}

In previous sections, we have defined the Kitaev quantum double model on a sphere, or an infinitely large lattice on the plane. However, it is also important to consider the case where the lattice has boundaries/holes (e.g. in Fig. \ref{fig:boundary}). As discussed in Ref. \cite{Cong16a}, this is a powerful model with degeneracy that can allow for universal topological quantum computation. In this section, we present the Hamiltonian and ribbon operators for the Kitaev model with boundary. The Hamiltonian will be adapted from previous works on gapped boundaries and domain walls by Beigi et al. \cite{Beigi11} and Bombin and Martin-Delgado \cite{Bombin08}.

\begin{figure}
\centering
\includegraphics[width = 0.47\textwidth]{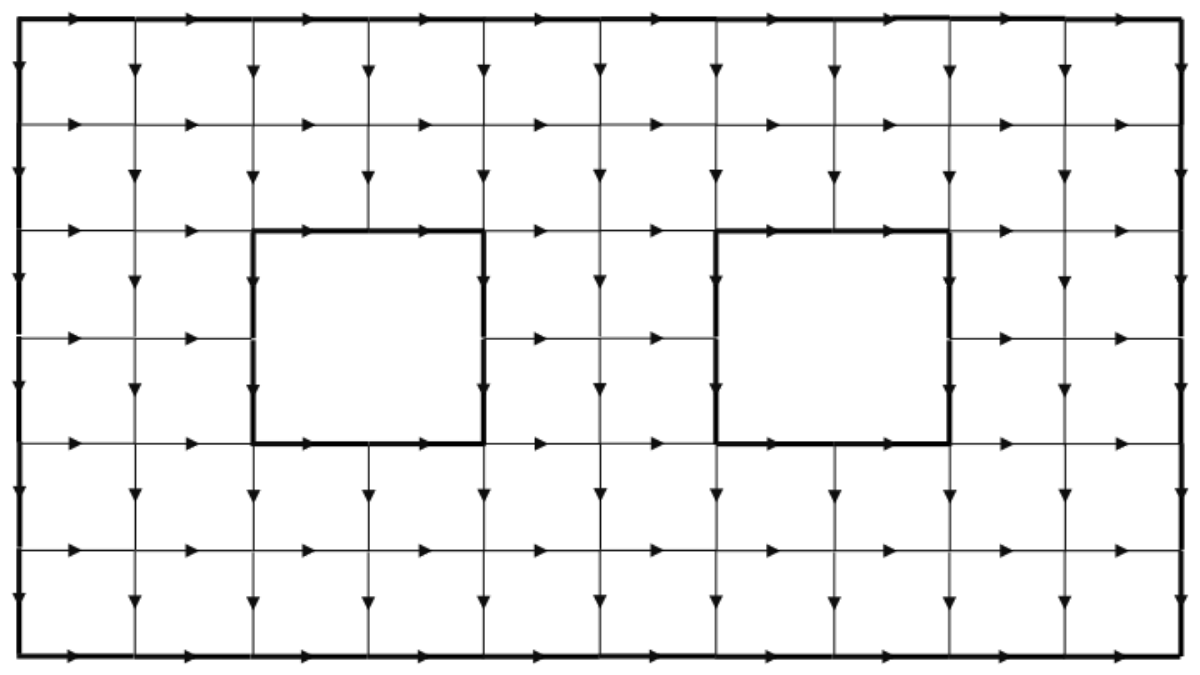}
\caption{Lattice for the Kitaev model with boundary. For any fixed group $G$, there can be multiple ways to define projection operators at the boundary such that all terms in the new Hamiltonian still commute. These are studied in Section \ref{sec:bd-hamiltonian}.}
\label{fig:boundary}
\end{figure}

\subsubsection{Hamiltonians for quantum double models with boundaries}

We will consider the model in which a gapped boundary is determined by a subgroup $K \subseteq G$ up to conjugation. In general, as shown in Ref. \cite{Beigi11}, a boundary is determined by both $K$ and a 2-cocycle $\phi \in H^2(K,\C^\times)$, and it is straightforward to generalize our results. To define a Hamiltonian for gapped boundaries, we must first define an {\it orientation} for each boundary.  Without loss of generality, we define all orientations so that the bulk is always on the left hand side when we traverse each boundary.

Let us now define some new projector terms, as in Ref. \cite{Bombin08}:

\begin{equation}
\label{eq:LK}
L^K(e) := \frac{1}{|K|} \sum_{k \in K} (L^k_+ (e) + L^k_- (e)),
\end{equation}

\begin{equation}
\label{eq:TK}
T^K(e) := \sum_{k \in K} T^k_+(e)
\end{equation}

The definitions of $L^k$ and $T^k$ for Eqs. (\ref{eq:LK}-\ref{eq:TK}) are based on Eqs. (\ref{eq:L}-\ref{eq:T}). 

Following Ref. \cite{Bombin08}, we can now define the following Hamiltonian\footnote{As before, we write $H^{(K,1)}_{(G,1)}$ to leave room for the generalized version, where a boundary depends also on a 2-cocycle $\phi$ of $K$.}: 

\begin{equation}
\label{eq:bd-hamiltonian-K}
H^{(K,1)}_{(G,1)} = \sum_e ((1-T^K(e)) + (1-L^K(e))
\end{equation}

It is important to note that as in the Hamiltonian (\ref{eq:kitaev-hamiltonian}), all terms in this Hamiltonian commute with each other. Hence $H^{(K,1)}_{(G,1)}$ is also gapped.

\begin{figure}
\centering
\includegraphics[width = 0.47\textwidth]{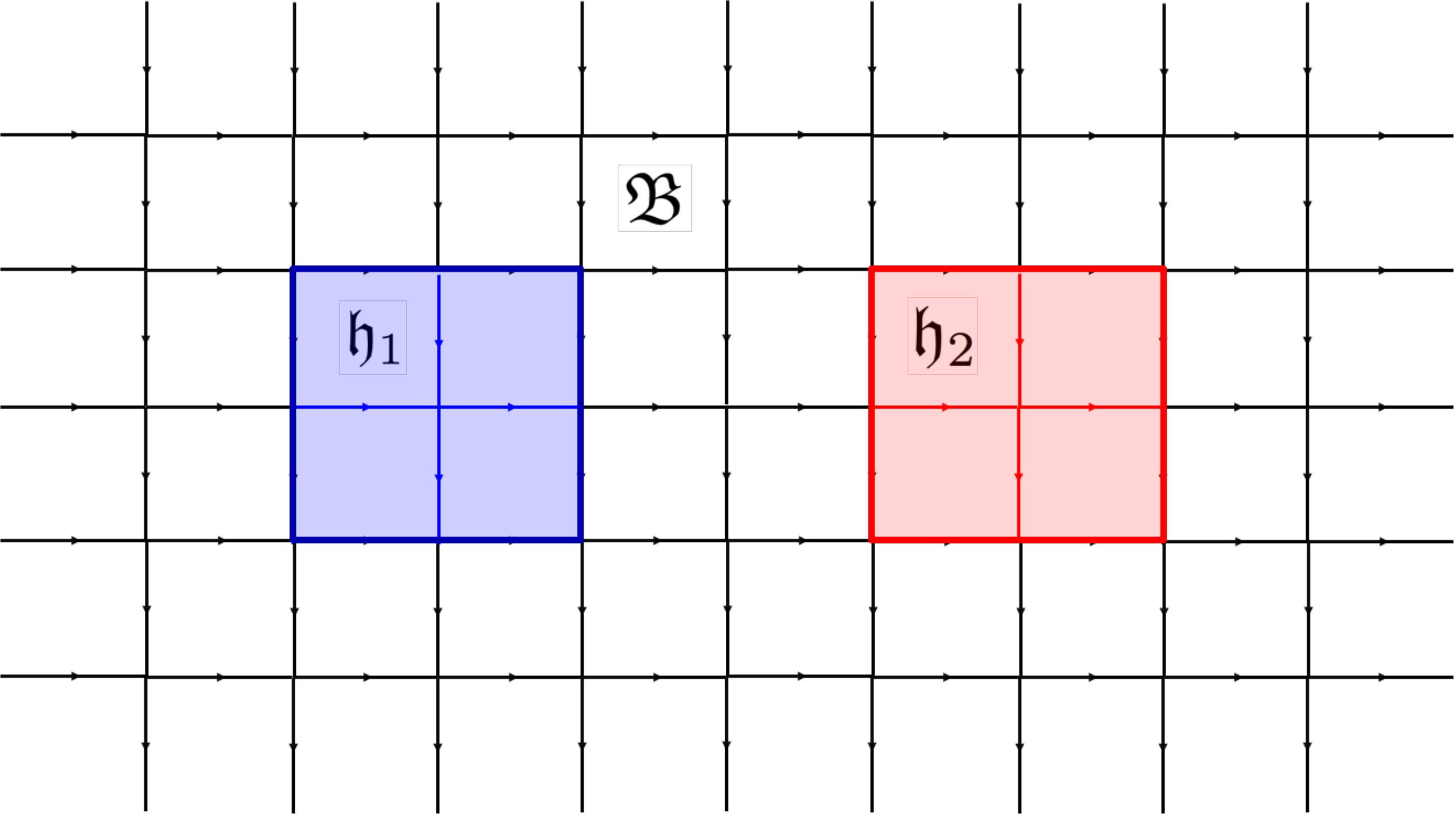}
\caption{Example: defining the Hamiltonian (\ref{eq:gapped-bds-hamiltonian}), in the case of two holes on an infinite lattice. The new Hamiltonians $H^{(K_1,1)}_{(G,1)}$ (resp., $H^{(K_2,1)}_{(G,1)}$) are applied to all edges within and along the blue (red) shaded rectangle.}
\label{fig:boundary-hamiltonian}
\end{figure}

We wish to take the standard Kitaev model, but modify the Hamiltonian in the presence of $n$ holes $\mathfrak{h}_1, ... \mathfrak{h}_n$ in the lattice, given by subgroups $K_1,...K_n$, respectively. Each hole is defined to contain all vertices, plaquettes, and edges within and along its border. Let $\mathfrak{B}$ denote the bulk, i.e. the complement of $\cup_i \mathfrak{h}_i$. The situation is shown in Fig. \ref{fig:boundary-hamiltonian}. The new Hamiltonian for this gapped boundary model will be defined as follows:

\begin{equation}
\label{eq:gapped-bds-hamiltonian}
H_{\text{G.B.}} = H_{(G,1)}(\mathfrak{B}) + \sum_{i=1}^{n} H^{(K_i,1)}_{(G,1)}(\mathfrak{h}_i).
\end{equation}

Here, $H^{(K_i,1)}_{(G,1)}(\mathfrak{h}_i)$ indicates that the Hamiltonian $H^{(K_i,1)}_{(G,1)}$ is acting on all edges of the hole $\mathfrak{h}_i$, and similarly for $H_{(G,1)}(\mathfrak{B})$. As in the cases of (\ref{eq:kitaev-hamiltonian}) and (\ref{eq:bd-hamiltonian-K}), all terms in the Hamiltonian commute with each other, and $H_{\text{G.B.}}$ is also gapped.

\begin{remark}
\label{bd-ribbon-def}
As discussed in Ref. \cite{Bombin08}, the Hamiltonian $H^{(K,1)}_{(G,1)}$, $K \subseteq G$, reduces the gauge symmetry of the original Hamiltonian $H_{(G,1)}$ to the trivial one (equivalent to vacuum) in all areas to which it is applied. Hence, if it is preferable, we may simply have $H^{(K,1)}_{(G,1)}$ act on a border of the hole with a width of a single plaquette, and empty space beyond it. Because of this, the term ``boundary'' will henceforth be used to refer to the ribbon that runs along the line dividing two different Hamiltonians and lies within the region of $H^{(K,1)}_{(G,1)}$, as anything beyond this boundary ribbon is essentially vacuum. Similarly, boundary cilia will be cilia along this ribbon. This configuration is illustrated in Fig. \ref{fig:boundary-hamiltonian-2}.

Similarly, one can consider the case where the lattice has an external boundary given by subgroup $K_0$. In this case, it is not practical or necessary to have $H^{(K_0,1)}_{(G,1)}$ act on all (i.e. infinitely many) data qudits outside the original lattice. Instead, we will simply have the Hamiltonian $H^{(K_0,1)}_{(G,1)}$ act on a border of the entire lattice with a width of a single plaquette. Anything beyond the border may then be regarded as empty space. This is also illustrated in Fig. \ref{fig:boundary-hamiltonian-2}.
\end{remark}

\begin{remark}
\label{single-bd-creation}
We would like to note that the Hamiltonian $H_{\text{G.B.}}$ can be used to create just a single gapped boundary, unlike anyons in the bulk, which must be created in pairs.  Hamiltonians that generate gapped boundaries of different sizes and locations can be physically connected by a path of gapped Hamiltonians.  Therefore, a gapped boundary may be moved via adiabatic Hamiltonian tuning of the Hamiltonian $H_{\text{G.B.}}$, to enlarge, shrink, or move the hole.
\end{remark}

\begin{figure}
\centering
\includegraphics[width = 0.47\textwidth]{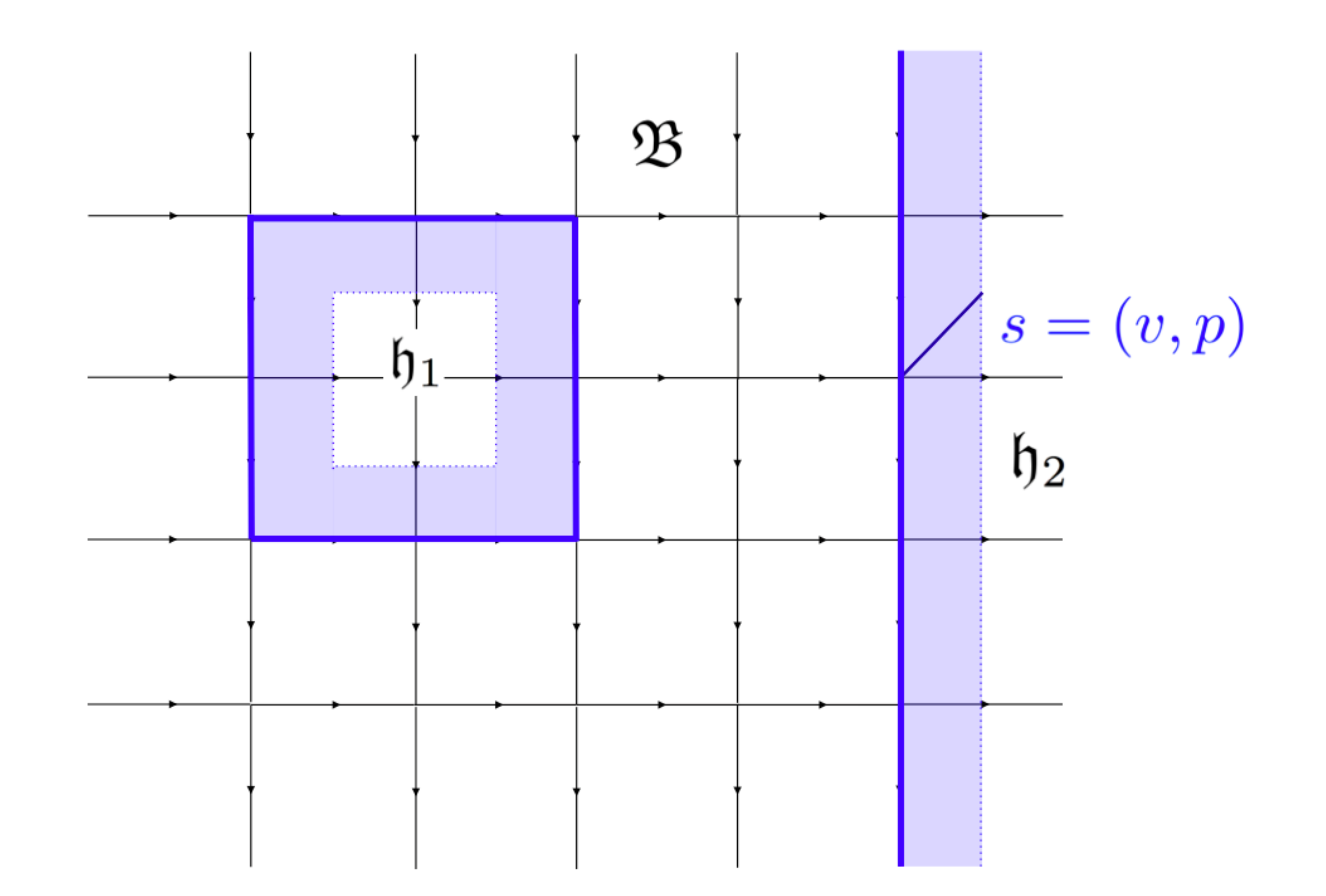}
\caption{Definitions of boundary ribbon/cilia. Given boundary lines (boldfaced) dividing regions of bulk/boundary Hamiltonians as shown, the boundary ribbons are the shaded ribbons. A cilium on the boundary ribbon (e.g. $s$ in the figure) is said to be a boundary cilium. Since the Hamiltonians $H^{(K,1)}_{(G,1)}$ break all gauge symmetries, anything beyond the boundary ribbon in the region of $H^{(K,1)}_{(G,1)}$ is essentially vacuum; one may ignore all Hamiltonian terms there, if desired.}
\label{fig:boundary-hamiltonian-2}
\end{figure}

\begin{figure}
\centering
\includegraphics[width = 0.47\textwidth]{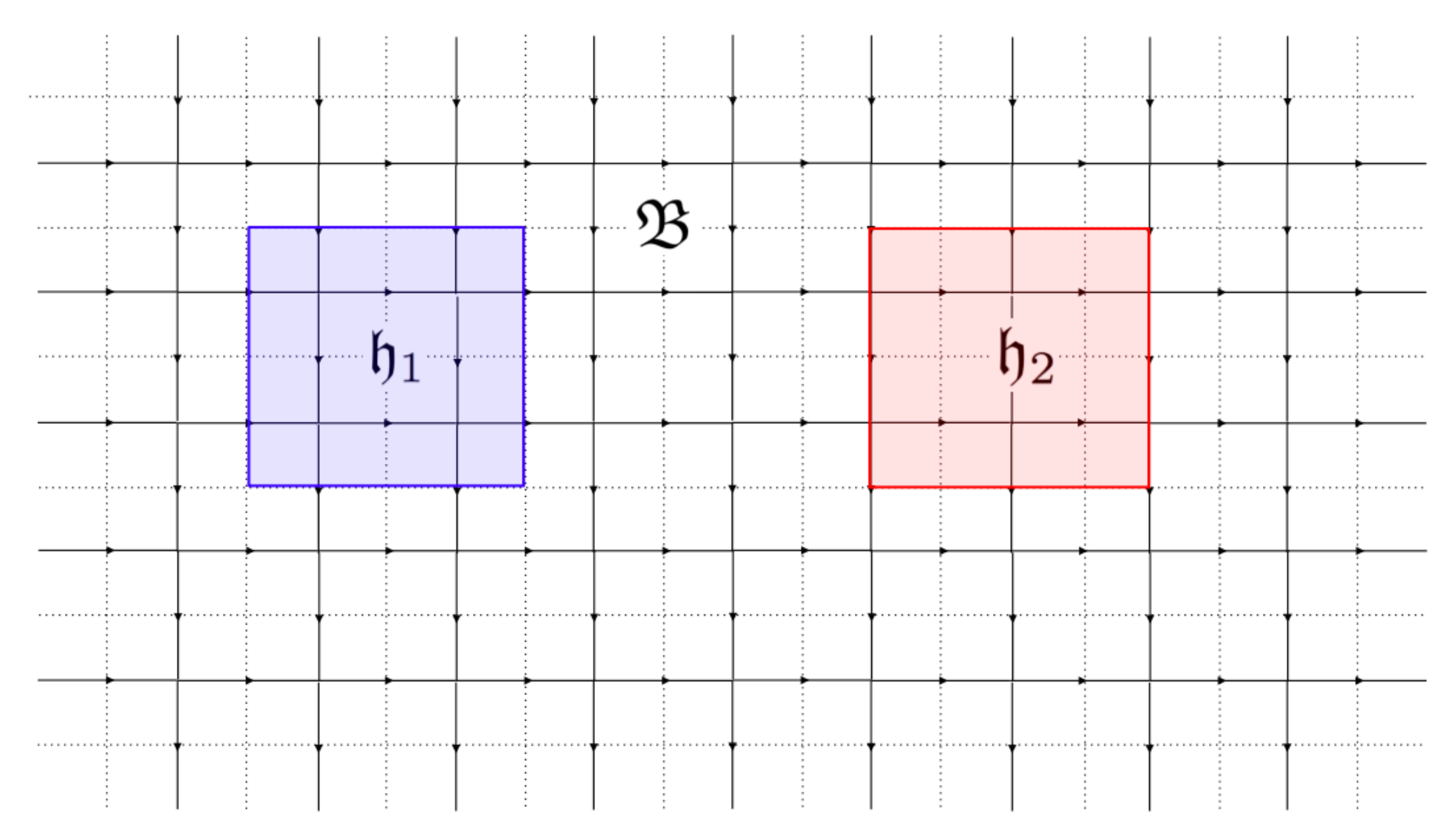}
\caption{Definition of the Hamiltonian $H_{\text{G.B.}}$, in cases where some (e.g. $\mathfrak{h}_2$) or all (e.g. $\mathfrak{h}_1$) of the hole's sides lie on the dual lattice.}
\label{fig:boundary-hamiltonian-3}
\end{figure}

\begin{remark}
\label{dual-bd-rmk}
In this section, we have defined the Hamiltonian $H_{\text{G.B.}}$ that can create holes in the lattice, whose sides lie on the direct lattice (see Fig. \ref{fig:boundary-hamiltonian}). More generally, we can create holes where some or all of the sides lie on the dual lattice, such as $\mathfrak{h}_1$ and $\mathfrak{h}_2$, respectively, in Fig. \ref{fig:boundary-hamiltonian-3}. For both cases, we say that the Hamiltonian $H^{(K_i,1)}_{(G,1)}$ acts on edges of the boldfaced boundary of the square. In general, the properties of holes completely on the dual lattice such as $\mathfrak{h}_1$ are almost the same as those of holes on the direct lattice; the only difference is up to an electric-magnetic duality in the model. However, holes such as $\mathfrak{h}_2$ that are partially on the dual lattice are of more interest. In this case, the hole is essentially associated with two different boundary types (e.g. two different subgroups $K_1, K_2 \subseteq G$), related to each other by this symmetry. Such holes were briefly considered by Fowler et al. in Ref. \cite{Fowler12} in the special case of the toric code. Their properties are discussed in detail in Ref. \cite{Cong16b}.
\end{remark}

\subsubsection{Local operators}
\label{sec:bd-local-operators}

We now introduce an algebra of local operators $\mathcal{Z} = Z(G,1,K,1)$ that act on a cilium $s = (v,p)$ on the boundary. In fact, these local operators form a {\it quasi-Hopf algebra} as defined by Drinfeld \cite{Drinfeld89}. We call this the {\it group-theoretical quasi-Hopf algebra} based on the group $G$, subgroup $K$, and trivial cocycles $\omega \in H^3(G, \C^\times)$ and $\psi \in H^2(K, \C^\times)$, following the definition of the group-theoretical category $\mC(G,\omega,K,\psi)$ in Ref. \cite{Etingof05}. The algebra was first constructed by Zhu in Ref. \cite{Zhu01}, and is also discussed in detail in Ref. \cite{Schauenburg02}. The following operators form a basis for $\mathcal{Z}$:

\begin{equation}
\label{eq:bd-local-operators}
Z^{(hK,k)}(v,p) = B^{hK}(v,p)A^k(v,p)
\end{equation}

\noindent
where we have defined

\begin{equation}
B^{hK}(v,p) = \sum_{j \in hK} B^j(v,p).
\end{equation}

\noindent
Here, $k \in K$ is an element of the subgroup, and $hK = \{hk: k \in K\}$ is a left coset. Here, we only need to consider the local vertex operators $A^k$ where $k \in K$, as the actions of $A^g(v,p)$, $g \notin K$ on the representations of $K$ at the vertex $v$ and edges in $\text{star}(v)$ are linear combinations of the actions of $A^k(v,p)$. Similarly, since the $B^K$ terms of the Hamiltonian (\ref{eq:bd-hamiltonian-K}) project onto flux in the subgroup $K$, we only need to consider the generalization where the flux lies within a left coset (instead of restricting to a particular group element). Hence, as vector spaces, we have

\begin{equation}
Z(G,1,K,1) = F[G/K] \otimes \C[K],
\end{equation}

\noindent
i.e. $Z(G,1,K,1)$ is the tensor product of the algebra of complex functions over the left cosets of $K$ and the group algebra $\C[K]$. The multiplication, comultiplication and antipode for the quasi-Hopf algebra $\mathcal{Z}$ are presented in Refs. \cite{Zhu01} and \cite{Schauenburg02} (Section 4.5), where the authors show that they satisfy all the quasi-Hopf algebra axioms. Below, we present them in the context of these local operators. In Section \ref{sec:algebraic-condensation}, we show how the representation category of $Z(G,1,K,1)$ is the group-theoretical category $\mC(G,1,K,1)$ as defined in \cite{Etingof05}.

\subsubsection{Quasi-Hopf algebra structure of $\mZ$}

We now present the quasi-Hopf algebra structure of the local operator algebras $\mathcal{Z} = Z(G,1,K,1)$. We only need to consider the case where $G$ is a finite group. This construction is similar to the construction of the quasi-triangular Hopf structures of the quantum double $\mathcal{D} = D(G)$ in Ref. \cite{Kitaev97}. 

As discussed above, the basis vectors of $\mZ$ are of form

\begin{equation}
Z^{(hK,k)} = B^{hK}A^k
\end{equation}

\noindent
for some $k \in K$, $hK \in G / K$. For convenience, let us first define the following notations. Let $R$ be a set of representatives of the left cosets $G / K$. Then, every $g \in G$ can be written uniquely as $g = r(g) \{g \}^{-1}$ for some $r(g) \in R$, $\{g\} \in K$.

By definition of the operators $B^{hK}$, $A^k$, we have the following rule for multiplication in $\mZ$ (where we assume without loss of generality $h_1, h_2 \in R$):

\begin{align}
\begin{split}
Z^{(h_1 K, k_1)} Z^{(h_2 K, k_2)} &=
\sum_{\substack{h \in R \\
				j,k \in K}}
\delta_{h_1 K, hK} \delta_{k_1, k} \delta_{h_2 K, k^{-1}h K} \delta_{k_2, k^{-1} j} Z^{(hK,j)}\\
&= \delta_{h_1 K, k_1 h_2 K} Z^{(h_1 K, k_1 k_2)}
\end{split}
\end{align}

\noindent
(Here, we note that certain Kronecker deltas may be taken between left cosets, e.g. $\delta_{h_1 K, hK}$.) 
Similarly, the following rule is used to define comultiplication in $\mZ$:

\begin{equation}
\Delta(Z^{(hK, k)}) =
\sum_{h_1  \in R}
(Z^{(r({h_1^{-1} h)}K, \{k^{-1} h_1\})} \otimes Z^{(h_1 K, k)})
\end{equation}

\noindent
The antipode in $\mZ$ is given by:

\begin{equation}
S(Z^{(hK,k)}) = Z^{( r({r({k^{-1}h})^{-1}}) K, \{ k^{-1}h \}^{-1} )} 
\end{equation}

\noindent
In both equations above, we assume $h \in R$. The specific elements $\alpha, \beta \in A$ corresponding to the quasi-Hopf structure of the antipode are:

\begin{equation}
\alpha = Z^{(K,1)}, \qquad \beta = \sum_{h \in R} Z^{(hK, \{h^{-1}\})}
\end{equation}

\noindent
Finally, the Drinfeld associator is given by:

\begin{equation}
\Phi = \sum_{h_1, h_2, h_3 \in R} Z^{(h_1 K, 1)} \otimes Z^{(h_2 K, 1)} \otimes Z^{(h_3 K, \{ h_1 h_2 \})}
\end{equation}

\subsubsection{Ribbon operators}
\label{sec:bd-ribbon-operators}

We will now describe the (dual) coquasi-Hopf algebra $\mathcal{Y} = Y(G,K)$ of ribbon operators which create all possible excited states on the boundary that may result from pushing a bulk anyon into the boundary (see Fig. \ref{fig:condensation}). In this case, a {\it boundary ribbon operator} is defined as an operator that is supported on a boundary ribbon (and acts trivially elsewhere) and commutes with all vertex and plaquette terms in the Hamiltonian $H_{\text{G.B.}}$ except at the two end cilia. These are defined recursively, as in the case of the bulk ribbon operators. As with the local operators $\mZ$, these will also be indexed by a pair $(hK,k)$, $hK$ a left coset, and $k \in K$. As before, we have the following definition for the trivial ribbon:

\begin{equation}
Y^{(hK,k)}_\epsilon := \delta_{1,k}
\end{equation}

Similarly, let $\tau = (s_0,s_1,e)$ be any dual triangle, and let $\tau'=(s_0',s_1',e')$ be any direct triangle. We define

\begin{equation}
\label{eq:bd-triangle-operator-def}
Y^{(hK,k)}_\tau := \delta_{1,k} L^{hK} (e), \qquad Y^{(hK,k)}_{\tau'} := T^k (e')
\end{equation}

\noindent
where we have defined

\begin{equation}
L^{hK}(e) := \frac{1}{|hK|} \sum_{j \in hK} L^j(e).
\end{equation}

\noindent
and the $+$, $-$ orientations for $L$ and $T$ depend on the orientation of $e$ on the triangle.

Finally, we have the following gluing relation: If $\rho = \rho_1 \rho_2$ is a composite ribbon on the boundary, then

\begin{equation}
\label{eq:Y-gluing-formula}
Y^{(hK,k)}_\rho = \sum_{j \in K} Y^{(hK,j)}_{\rho_1} Y^{(j^{-1}hjK,j^{-1}k)}_{\rho_2}.
\end{equation}

Simple group theory manipulations show that for any $h\in G,j \in K$ the left coset $j^{-1}hjK$ depends only on the left coset $hK$, and not on the particular representative $h$. By the same argument as in Ref. \cite{Bombin08}, the operator $Y^{(hK,k)}_\rho$ is independent of the particular choice of $\rho_1,\rho_2$. Hence, Eq. (\ref{eq:Y-gluing-formula}) is well-defined.

Looking at Eq. (\ref{eq:bd-triangle-operator-def}), we see that as vector spaces,

\begin{equation}
\label{eq:Y-vector-space}
Y(G,1,K,1) = \C[G/K] \otimes F[K].
\end{equation}

\noindent
By performing the same detailed analysis of the ribbon operators as we did with the local operator algebra $\mZ$, one can show as an exercise using these operators directly that $\mathcal{Y}$ forms a coquasi-Hopf algebra dual to $\mathcal{Z}$ \cite{Drinfeld89}. (In the abstract context, the multiplication and comultiplication for the coquasi-bialgebra $\mathcal{Y}$ have been presented in Refs. \cite{Zhu01,Schauenburg02}. Because $G$ is finite, one can also define an antipode map $S: \mathcal{Y} \rightarrow \mathcal{Y}$ so that the resulting coquasi-bialgebra is indeed a coquasi-Hopf algebra as in Corollary 3.7 of \cite{Schauenburg02b}.) In particular, we would like to note now that the gluing formula (\ref{eq:Y-gluing-formula}) is in fact the comultiplication of $\mathcal{Y}$. Furthermore, we see that Eq. (\ref{eq:Y-gluing-formula}) corresponds to the multiplication of $\mathcal{Z}$, confirming the duality of $\mathcal{Y}$ and $\mathcal{Z}$.

This gluing procedure is very similar to the movement of a bulk anyon via bulk ribbon operators and the gluing formula of Eq. (\ref{eq:ribbon-gluing}). However, there is one very important difference: After applying the gluing formula (\ref{eq:ribbon-gluing}) to move a bulk anyon from the endpoint $s_1$ of the ribbon $\rho_1$ to the endpoint $s_2$ of $\rho_2$, the resulting operator $F^{(h,g)}_\rho$ on the ribbon $\rho = \rho_1 \rho_2$ now commutes with the terms in the Hamiltonian $H_{(G,1)}$ of Eq. (\ref{eq:kitaev-hamiltonian}) at the cilium $s_1$. Instead, the only places where $F^{(h,g)}_\rho$ does not commute with the terms in $H_{(G,1)}$ are $s_2$ and the other endpoint cilium of $\rho_1$. In this new case, applying the gluing formula (\ref{eq:Y-gluing-formula}) on such a ribbon $\rho = \rho_1 \rho_2$ on the boundary will not allow $Y^{(hK,k)}_\rho$ to commute with the edge terms $L^K(e), T^K(e)$ surrounding $s_1$. Hence, the excitations in the boundary are {\it confined}: the energy required to move an excitation along a boundary ribbon $\rho$ is linearly proportional to the length of $\rho$ (measured in the number of dual triangles). Physically, this means that $L^K, T^K$ in the Hamiltonian (\ref{eq:bd-hamiltonian-K}) represent string tension terms, which break all gauge symmetries past the boundary.

\begin{remark}
The above definition of boundary ribbon operators can create all excitations that can be formed by the condensation a bulk anyon to the boundary (discussed in the next two sections). By the detailed analysis of Ref. \cite{Bombin08}, all excitations within the region of the Hamiltonian $H^{(K,1)}_{(G,1)}$ are confined, and no particles are deconfined. This means any definition of excitation-creating operators and the gluing relation will always make the energy cost to move an excitation linear in the ribbon length. For instance, another set of excitation-creating operators for $H^{(K,1)}_{(G,1)}$ would be those that act only on edges, and not on triangles. In this case, the vertex and plaquette terms would serve to confine particles, instead of the edge terms $L^K, T^K$. 
\end{remark}

\subsection{Degeneracy and condensations to vacuum}
\label{sec:hamiltonian-gsd-condensation}

The standard Kitaev model has no ground state degeneracy on a surface with trivial topology like the infinite plane or the sphere, as the only operators that commute with the Hamiltonian $H_{(G,1)}$ of (\ref{eq:kitaev-hamiltonian}) are closed ribbon operators on contractible loops, which can be expressed as a linear combination of products of $A(v)$ or $B(p)$ \cite{Kitaev97}. However, once boundaries are introduced, one can construct operators that commute with the Hamiltonian $H_{\text{G.B.}}$ of (\ref{eq:gapped-bds-hamiltonian}) that cannot be expressed as such a product.

\begin{figure}
\centering
\includegraphics[width = 0.47\textwidth]{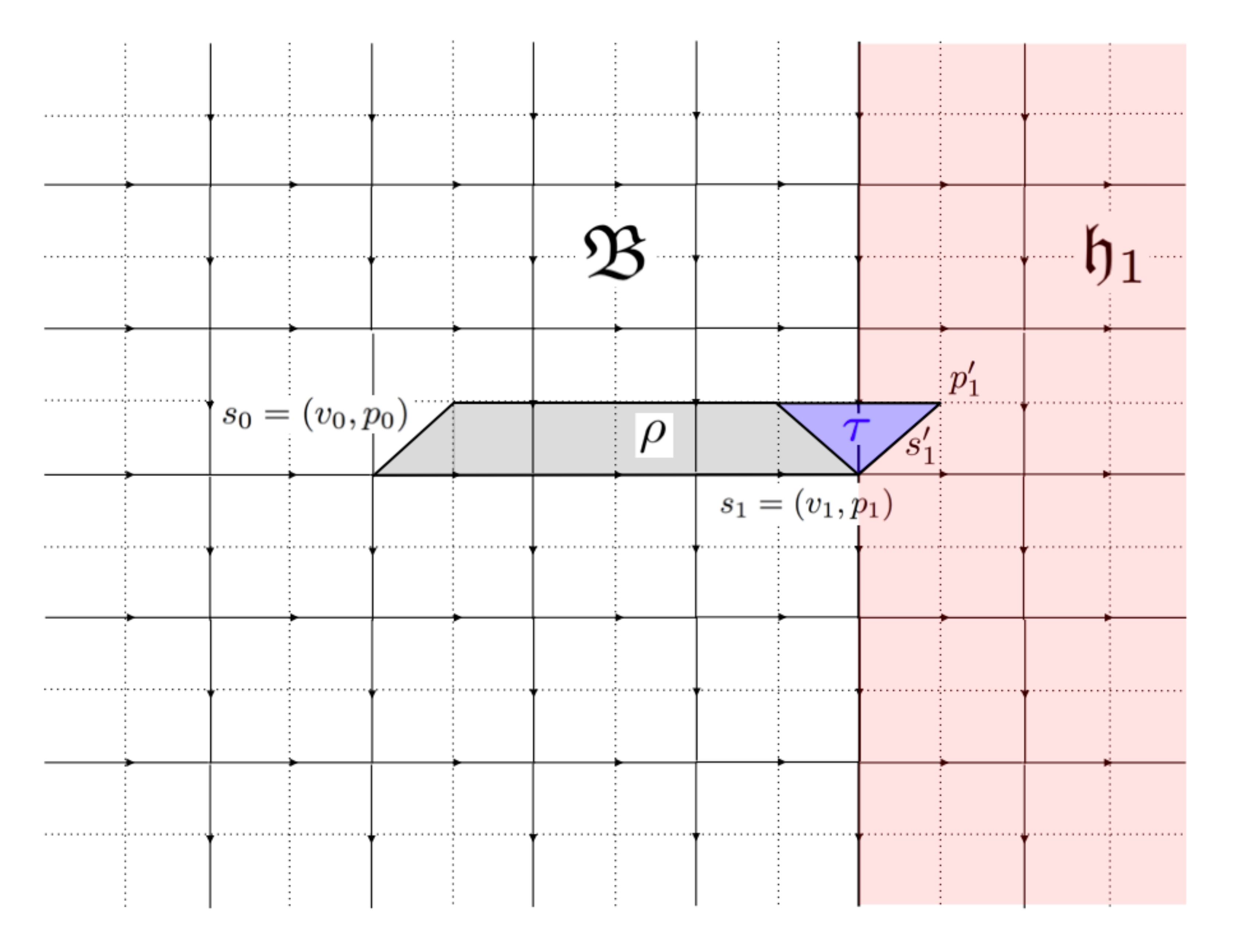}
\caption{Illustration of the bulk-to-boundary condensation procedure. If the new ribbon operator $F_{\rho \cup \tau}$ commutes with the Hamiltonian terms $A_{v_1}$ and $B_{p_1'}$  at the cilium $s_1'$, we say that the anyon $a$ has condensed to vacuum on the boundary.}
\label{fig:condensation}
\end{figure}

Let us first consider a scenario where we create a pair of anyons $a,\overbar{a}$ in the bulk from vacuum, by applying a ribbon operator $F^{(C,\pi);(\bf{u,v})}_\rho$. Without loss of generality, we suppose we have chosen $\rho$ with endpoints $s_0 = (v_0,p_0)$ and $s_1 = (v_1,p_1)$ such that $a$ is located at $s_1$ and is as close to the boundary as possible, as shown in Fig. \ref{fig:condensation}. Specifically, $s_1 = (v_1,p_1)$, where $v$ already lies on the line separating two regions with different Hamiltonians.

Suppose we would like to extend $\rho$ to $\rho \cup \tau$ and push the anyon $a$ into the boundary. We can apply the gluing formula (\ref{eq:ribbon-gluing}) as we would in the bulk, and the original excitation is pushed to the boundary cilium $s_1'=(v_1,p_1')$. However, since the Hamiltonian terms at $s_1'$ are different from those at $s_1$, it is possible that the new ribbon operator now commutes with all Hamiltonian terms in the vicinity of $s_1'$. The only terms that do not commute with $F^{(C,\pi);(\bf{u,v})}_\rho$ are now the terms corresponding to $s_0$. In this case, we say the anyon $a$ has {\it condensed to vacuum} in the boundary.

\begin{figure}
\centering
\includegraphics[width = 0.47\textwidth]{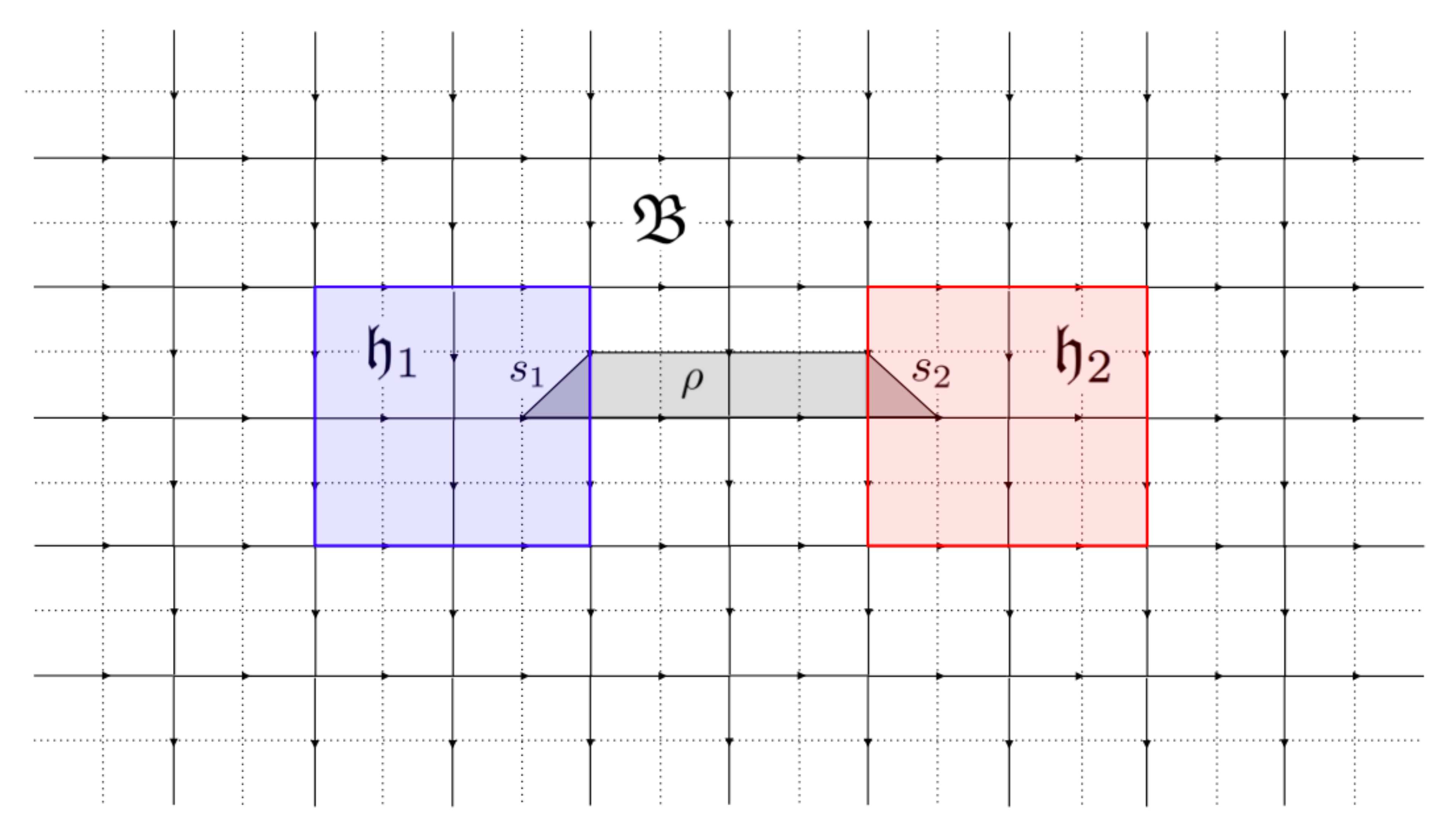}
\caption{Ground state degeneracy in the Kitaev model with boundary. By definition, a ribbon operator $F_\rho$ on $\rho$ commutes with all Hamiltonian terms in the bulk. Because the cilia $s_1$ and $s_2$ lie in areas with ribbon operators, it is possible that $F_\rho$ may also commute with the Hamiltonian terms at these cilia. The algebra of such operators $F_\rho$ will form the degenerate ground state of this system.}
\label{fig:degeneracy}
\end{figure}

Suppose we now have two holes $\mathfrak{h_1},\mathfrak{h_2}$ in the lattice, as shown in Fig. \ref{fig:degeneracy}. $\mathfrak{h_1},\mathfrak{h_2}$ are in the ground state of the Hamiltonians $H^{(K_1,1)}_{(G,1)}$, $H^{(K_2,1)}_{(G,1)}$, respectively. Again, let us consider a ribbon operator $F^{(C,\pi);(\bf{u,v})}_\rho$ which creates anyons $a,\overbar{a}$ in the bulk; it may now be possible be possible to condense $a$ to vacuum along the boundary of $\mathfrak{h_1}$, and $\overbar{a}$ to vacuum along the boundary of $\mathfrak{h_2}$, if the ribbon operator commutes with the boundary Hamiltonians in $\mathfrak{h_1}$, $\mathfrak{h_2}$.

It is now clear that the ground state of the Hamiltonian (\ref{eq:gapped-bds-hamiltonian}) may have nontrivial degeneracy. Specifically, each such operator $F^{(C,\pi);(\bf{u,v})}_\rho$ now corresponds to an operator $W_{(C,\pi);(\bf{u,v})}$ that commutes with the Hamiltonian. As discussed in Ref. \cite{Cong16a}, this is a powerful degeneracy that may be harnessed for purposes such as universal topological quantum computation.

\subsection{Excitations on the boundary}
\label{sec:bd-excitations}

\subsubsection{Elementary excitations}

In Section \ref{sec:bd-ribbon-operators}, we defined a basis for the coquasi-Hopf algebra $\mathcal{Y}$ of ribbon operators that create arbitrary excited states on the boundary of the Kitaev model given by subgroup $K$ that can result from bulk-to-boundary condensation. However, as in the case of bulk ribbon operators, we would like to classify the elementary excitations on the boundary. This is described in the following theorem:

\begin{theorem}
\label{bd-anyon-types}
The elementary excitations on a subgroup $K$ boundary of the Kitaev model with group $G$ are given by pairs $(T,R)$, where $T \in K\backslash G / K$ is a double coset, and $R$ is an irreducible representation of the stabilizer $K^{r_T} = K \cap r_T K r_T^{-1}$ ($r_T \in T$ is any representative of the double coset).
\end{theorem}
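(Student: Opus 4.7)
My plan is to prove the theorem by constructing a generalized Fourier basis for the boundary ribbon operator algebra $\mathcal{Y} = Y(G,1,K,1)$, in close parallel to the change of basis in the proof of Theorem \ref{bulk-anyon-types}. The double cosets $T \in K\backslash G / K$ will play the role of bulk conjugacy classes, the stabilizers $K^{r_T}$ will play the role of the bulk centralizers $E(C)$, and irreducible representations of $K^{r_T}$ will play the role of irreducible representations of $E(C)$. The group-theoretic matching is forced by the gluing formula (\ref{eq:Y-gluing-formula}): the summation variable $j \in K$ acts on the left coset label by $hK \mapsto (j^{-1}hj)K$, i.e.\ $K$ acts on $G/K$ by conjugation; its orbits are exactly the double cosets, and the stabilizer of $r_T K$ is $K^{r_T} = K \cap r_T K r_T^{-1}$, so each $T$ contains $|K|/|K^{r_T}|$ distinct left cosets.

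First, I would fix data analogous to the bulk case: a representative $r_T$ of each $T$; representatives $\{q_i\}_{i=1}^{|K|/|K^{r_T}|}$ of $K/K^{r_T}$, so the left cosets in $T$ are enumerated as $(q_i r_T q_i^{-1})K$; and for each $R \in (K^{r_T})_{\mathrm{ir}}$ an orthonormal basis giving unitary matrix coefficients $\Gamma_R(k)_{jj'}$. The Fourier basis is then
\begin{equation*}
Y^{(T,R);(\mathbf{u},\mathbf{v})}_\rho \;:=\; \frac{\dim(R)}{|K^{r_T}|}\sum_{k \in K^{r_T}} \bigl(\Gamma_R^{-1}(k)\bigr)_{jj'}\, Y^{((q_i r_T q_i^{-1})K,\, q_i k q_{i'}^{-1})}_\rho,
\end{equation*}
with $\mathbf{u}=(i,j)$, $\mathbf{v}=(i',j')$. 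Schur orthogonality on each $K^{r_T}$ combined with the orbit/stabilizer decomposition of $G/K$ yields the inverse transform and shows this is a basis of $\mathcal{Y}$; a direct count $\sum_{T,R}(|K|/|K^{r_T}|)^2 \dim(R)^2 = |G/K|\cdot|K|$ matches the dimension recorded in (\ref{eq:Y-vector-space}).

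Next I would verify that $(T,R)$ is the \emph{topological} label while $(\mathbf{u},\mathbf{v})$ encode local degrees of freedom at the two endpoint cilia. For this, I compute how the local quasi-Hopf operators $Z^{(aK,b)}$ of (\ref{eq:bd-local-operators}) act at each endpoint of $\rho$ on the new basis, using the duality between $\mathcal{Y}$ and $\mathcal{Z}$ described in Section \ref{sec:bd-ribbon-operators}. The expected outcome, mirroring the bulk computation in Ref.~\cite{Bombin08}, is that $Z^{(aK,b)}$ sends $Y^{(T,R);(\mathbf{u},\mathbf{v})}_\rho$ to a linear combination of basis elements with the \emph{same} $(T,R)$: it permutes the orbit index $i$ via the $K$-action on $K/K^{r_T}$ and mixes the representation index $j$ through $\Gamma_R$. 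Conversely, for fixed $(T,R)$ any two basis elements are intertwined by a suitable product of such local operators applied at the two endpoints, so each $(T,R)$-sector is an irreducible block under the local algebra action and cannot be further split. This establishes the classification, and the quantum dimension formula $\FPdim(T,R) = |K|\dim(R)/|K^{r_T}|$ is read off as the square root of the number $(|K|/|K^{r_T}|)^2 \dim(R)^2$ of basis vectors with given $(T,R)$, exactly as in the bulk case.

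The main obstacle I anticipate is the non-associative (quasi-Hopf) nature of $\mathcal{Z}$ and dually coquasi-Hopf nature of $\mathcal{Y}$: unlike in the bulk, the Drinfeld associator $\Phi$ from Section \ref{sec:bd-local-operators} intervenes when composing local operator actions at a boundary cilium, so some care is needed to check that $\Phi$ acts within each $(T,R)$-block rather than mixing sectors. Once this block-diagonality is verified, the remainder is a direct Schur-orthogonality computation for the $K^{r_T}$ and an orbit count for the $K$-conjugation action on $G/K$, entirely parallel to the bulk argument in Theorem \ref{bulk-anyon-types}.
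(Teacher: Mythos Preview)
Your proposal is correct and follows essentially the same route as the paper: the paper constructs the identical Fourier basis $Y^{(T,R);(\mathbf{u},\mathbf{v})}_\rho$ (with the cosmetic difference that it uses $s_i^{-1}K$ rather than $s_iK$ as the coset label, mirroring the $c_i^{-1}$ in the bulk formula (\ref{eq:bulk-ribbon-FT})) and then asserts, just as you do, that the local operators $Z^{(hK,k)}$ mix only the $(\mathbf{u},\mathbf{v})$ indices while $(T,R)$ is invariant. Your additional remarks on the orbit--stabilizer interpretation of the gluing formula, the dimension count, and the potential role of the Drinfeld associator go somewhat beyond the paper's brief treatment but are consistent with it.
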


\begin{proof}
As before, we must perform a Fourier transform on the group-theoretical coquasi-Hopf algebra to obtain a new basis. This change-of-basis formula is constructed as follows:

\begin{enumerate}
\item
Choose a representative $r_T \in G$ and construct the corresponding double coset $T = K r_T K \in K\backslash G/K$.
\item
Construct the subgroup $K^{r_T} = K \cap r_T K r_T^{-1}$.
\item
Construct a set of representatives $Q$ of $K/K^{r_T}$. Label the elements of $Q$ as $Q = \{ q_i \}_{i=1}^{|Q|}$.
\item
Choose an irreducible representation $R$ of the subgroup $K^{r_T}$. Choose a basis for $R$ and denote the resulting unitary representation matrices  $\Gamma_R(k)$ for $k \in K^{r_T}$.
\item
For each $i = 1,2,...|Q|$, let $s_i = q_i r_T q_i^{-1}$. Construct the set of right cosets $S_R(T) = \{ K s_i\}_{i=1}^{|Q|}$. Simple group theory shows that the set $S_R(T)$ forms a partition of $T$. Similarly, the set of left cosets $S_L(T) = \{ s_i^{-1}K\}_{i=1}^{|Q|}$ is also a partition of $T$.
\item
The new basis is
\begin{align}
\begin{split}
\label{eq:elementary-bd-ribbon-basis}
\{Y^{(T,R);({\bf u,v})}_\rho: \text{ } T = K r_T K \in K\backslash G/K, \text{ } R \in (K^{r_T})_{\text{ir}},
\\{\bf u} = (i,j), {\bf v} = (i', j'), 1 \leq i, i' \leq |Q|, 1 \leq j,j' \leq \dim(R)\},
\end{split}
\end{align}
where each $Y^{(T,R);({\bf u,v})}_\rho$ is given by
\begin{equation}
\label{eq:bd-ribbon-FT}
Y^{(T,R);({\bf u,v})}_\rho := \frac{\dim(R)}{|K^{r_T}|}
\sum_{k \in K^{r_T}} \left(\Gamma_R^{-1}(k)\right)_{jj'}Y^{(s_i^{-1}K, q_i k q_{i'}^{-1})}.
\end{equation}
\end{enumerate}

As before, this Fourier basis for $\mathcal{Y}$ completely separates the topological and local degrees of freedom in the created excitations. It is straightforward to show that linear combinations of the local operators $Z^{(hK,k)}$ at the endpoints $s_0,s_1$ of $\rho$ may be used to transform any $Y^{(T,R);({\bf u,v})}_\rho$ into another basis operator that differs in only the pair $({\bf u,v})$. Similarly, if two operators in the basis (\ref{eq:elementary-bd-ribbon-basis}) have different pairs $(T,R)$, any operator that can change one to another must have support that connects $s_0$ and $s_1$. We can now conclude that the elementary excitations on the boundary of the Kitaev model are described precisely by pairs $(T,R)$, where $T= K r_T K$ is a double coset, and $R$ is an irreducible representation of the group $K^{r_T}$. 
\end{proof}

In the basis (\ref{eq:elementary-bd-ribbon-basis}), the quantum dimension of $(T,R)$ is given by the square root of the dimension of the subalgebra spanned by all $Y^{(T,R);({\bf u,v})}_\rho$, or

\begin{equation}
\FPdim(T,R) = |Q|\dim(R) = \frac{|K|}{|K^{r_T}|} \dim(R).
\end{equation}

As a special case, the simple particle $(T,R)$, where $T = K1K = K$ is the double coset of the identity element and $R$ is the trivial representation, is the vacuum particle (i.e. absence of excitation). The vacuum particle always has a quantum dimension of 1.

As in the case of the bulk, one can also fuse boundary excitations by bringing two excitations to the same boundary cilium via boundary ribbon operators, and consider the local operators $Z^{(hK,k)}$ acting on the new composite excitation. One can then show that the excitations on the boundary have a \lq\lq topological order" given by a unitary fusion category, as we will discuss in Section \ref{sec:algebraic}.  This new kind of boundary topological order exists only in the presence of the bulk topological order, and we will refer to it as \lq\lq bordered topological order".  In particular, the fusion category is the representation category of the group-theoretical quasi-Hopf algebra $\mathcal{Z}$ introduced in Section \ref{sec:bd-local-operators} (or equivalently, the representation category of the coquasi-Hopf algebra $\mathcal{Y}$). In fact, this category is Morita equivalent to the representation category $\Rep(G)$; its Drinfeld center is indeed equivalent to $\mZ(\Rep(G))$.

\subsubsection{Products of bulk-to-boundary condensation}

In Section \ref{sec:hamiltonian-gsd-condensation}, we informally described how a ground state degeneracy can result from the ability for certain bulk particles to condense to vacuum on the boundary. Now that we have formally defined and classified the elementary excitations of the boundary, we can provide a formal classification of these special bulk particles. More generally, given any elementary excitation $(C,\pi)$ in the bulk, we present a way to determine the products $(T,R)$ that are formed by condensation to the boundary. 

Suppose we have a boundary given by subgroup $K$, and a bulk anyon $a = (C,\pi)$ to condense to the boundary. In terms of ribbon (triangle) operators, if the border line between the bulk Hamiltonian $H_{(G,1)}$ and the boundary Hamiltonian $H^{(K,1)}_{(G,1)}$ lies on the direct lattice, the condensation procedure is always described by a dual triangle operator on a triangle such as the triangle $\tau$ in Fig. \ref{fig:condensation}. To bring $a$ to the boundary, we simply apply one of the operators $F^{(C,\pi);(\bf{u,v})}_\tau$. So far, this movement operator is the same as moving the anyon to anywhere else in the bulk.

The difference arises when $a$ crosses the boundary. Once this happens, $a$ may no longer be an elementary excitation: instead, it could be a superposition of the elementary excitations of the boundary that we classified earlier. Mathematically, this corresponds to the fact that the ribbon operators $F^{(C,\pi);(\bf{u,v})}_\tau$ no longer form a basis for the triangle operators in the boundary, so we must express them as a linear combination of the basis operators $Y^{(T,R);({\bf u,v})}_\tau$. This linear combination is constructed as follows:

Since $\tau$ is a dual triangle, by Equations (\ref{eq:triangle-operator-def}) and (\ref{eq:bd-triangle-operator-def}), we have (before the Fourier transform)
\begin{equation}
F^{(h,g)}_\tau = \delta_{1,g} L^h(e) \qquad Y^{(hK,k)}_\tau = \delta_{1,k} L^{hK}(e)
\end{equation}

By Equations (\ref{eq:bulk-ribbon-FT}) and (\ref{eq:bd-ribbon-FT}), we have (after the Fourier transform)
\begin{align}
\begin{split}
F^{(C,\pi);(\bf{u,v})}_\tau & := \frac{\dim(\pi)}{|E(C)|}
\sum_{k \in E(C)} \left(\Gamma_\pi^{-1}(k)\right)_{jj'}\delta_{1,p_i k p_{i'}^{-1}} L^{c_i^{-1}}(e)
\\ &= \frac{\dim(\pi)}{|E(C)|}\left(\Gamma_\pi^{-1}(p_i^{-1} p_{i'})\right)_{jj'}L^{c_i^{-1}}(e).
\end{split}
\end{align}
\begin{align}
\begin{split}
Y^{(T,R);(\bf{u,v})}_\tau & := \frac{\dim(R)}{|K^{r_T}|}
\sum_{k \in K^{r_T}} \left(\Gamma_R^{-1}(k)\right)_{jj'}\delta_{1,q_i k q_{i'}^{-1}} L^{s_i^{-1}K}(e)
\\ &= \frac{\dim(R)}{|K^{r_T}|}\left(\Gamma_R^{-1}(q_i^{-1} q_{i'})\right)_{jj'}L^{s_i^{-1}K}(e).
\end{split}
\end{align}
(In both cases, it is possible that $p_i^{-1} p_{i'} \notin C$ or $q_i^{-1} q_{i'} \notin K^{r_T}$; if that happens, we simply have $F^{(C,\pi);(\bf{u,v})}_\tau = 0$ or $Y^{(T,R);(\bf{u,v})}_\tau = 0$).

The following theorem hence governs the products of condensation:

\begin{theorem}
\label{condensation-products}
Let $(T,R)$ and $(C,\pi)$ be given elementary excitations of the boundary and bulk, respectively. The term $Y^{(T,R);(\bf{u_2,v_2})}_\tau$ has a nonzero coefficient in the decomposition of $F^{(C,\pi);(\bf{u_1,v_1})}_\tau$ (for some quadruple $(\bf{u_1,v_1,u_2,v_2})$) if and only if the following two conditions hold:
\begin{enumerate}
\item
The intersection $C \cap T$ is nonempty. When this condition holds, we assume the double coset representative $r_T$ is also in $C$.
\item
There exists an $x \in G$ such that the following is true: Let $x\triangleright \pi$ denote the representation of $x E(C) x^{-1}$ obtained from $\pi$ where $y$ acts as $x^{-1} y x$. Let $\rho_{x \triangleright \pi}$ be the (possibly reducible) representation of the subgroup $(xE(C)x^{-1}) \cap K^{r_T}$ resulting from the restriction of ${x \triangleright \pi}$ to $(xE(C)x^{-1}) \cap K^{r_T}$; let $\rho_R$ be the representation of the same subgroup formed by restricting $R$. Decompose $\rho_{x \triangleright \pi}$, $\rho_R$ into irreducible representations of $(xE(C)x^{-1}) \cap K^{r_T}$:
\begin{equation}
\rho_{x \triangleright \pi} = \oplus_\sigma n^{{x \triangleright \pi}}_{\sigma} \sigma
\end{equation}
\begin{equation}
\rho_R = \oplus_\sigma n^{R}_{\sigma} \sigma
\end{equation}
\noindent
There must exist some irreducible representation $\sigma$ of $(xE(C)x^{-1}) \cap K^{r_T}$ such that $n^{x \triangleright \pi}_{\sigma} \neq 0$ and $n^{R}_{\sigma} \neq 0$.
\end{enumerate}
In particular, let $X(C)$ be a set of representatives of the double cosets $K\backslash G/E(C)$. For given $(C,\pi)$ let us write the decomposition after condensation as
\begin{equation}
\label{eq:condensation}
(C,\pi) = \oplus n_{(T,R)}^{(C,\pi)} (T,R).
\end{equation}
Then, we have 
\begin{equation}
\label{eq:condensation-coefficients}
n_{(T,R)}^{(C,\pi)} = \sum_{\substack{x \in X(C) \text{ s.t. } xr_cx^{-1}\in T \\ \sigma \in ((x E(C) x^{-1}) \cap K^{r_T})_{\text{ir}}}} n^R_\sigma n^{{x \triangleright \pi}}_\sigma
\end{equation}
Furthermore, these coefficients imply that the two sides of Eq. (\ref{eq:condensation}) will always have the same quantum dimensions.
\end{theorem}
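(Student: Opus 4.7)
The plan is to decompose the bulk ribbon operator $F^{(C,\pi);(\bf{u,v})}_\tau$ on the condensation triangle directly in terms of the boundary basis $Y^{(T,R);(\bf{u,v})}_\tau$, using the explicit formulas for dual triangle operators derived just before the theorem. Both expressions reduce to an $L$-operator times a single matrix element, so the problem is essentially one of rewriting $L^{c_i^{-1}}(e)$ in the basis $\{L^{hK}(e)\}$ and then performing a Fourier expansion over the subgroup $K$ to match the boundary structure.

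First I would identify which double cosets $T$ can contribute. The key group-theoretic fact is the bijection
\begin{equation}
K\backslash G/E(C)\ \longleftrightarrow\ \{K\text{-conjugation orbits on }C\},\qquad KxE(C)\mapsto \{kxr_Cx^{-1}k^{-1}:k\in K\},
\end{equation}
which partitions $C$ into the intersections $C\cap T$ as $T$ ranges over the double cosets meeting $C$. This immediately forces condition (1) and explains the outer sum over $x\in X(C)$ in Eq.~(\ref{eq:condensation-coefficients}): each $x$ selects an element $xr_Cx^{-1}\in C\cap T$, and the freedom in choosing $r_T$ lets us take $r_T=xr_Cx^{-1}$ without loss of generality. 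The stabilizer of $xr_Cx^{-1}$ under $K$-conjugation is exactly $(xE(C)x^{-1})\cap K^{r_T}$, which is the subgroup that will govern the representation-theoretic content.

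Next, for each such $x$, I would carry out the representation-theoretic decomposition. Replacing $\pi$ by its conjugate $x\triangleright\pi$ (an irrep of $xE(C)x^{-1}=E(xr_Cx^{-1})$) and expanding the Fourier definition of $F^{(C,\pi);(\bf{u,v})}_\tau$ through the inverse Fourier transform into the $Y$-basis, the terms reorganize as a Schur-orthogonality sum on the common subgroup $(xE(C)x^{-1})\cap K^{r_T}$. The coefficient of $Y^{(T,R);(\bf{u_2,v_2})}_\tau$ vanishes unless $R$ and $x\triangleright\pi$ share an irreducible constituent upon restriction to this subgroup, giving condition (2); when they share a constituent $\sigma$, the multiplicity $n^R_\sigma\,n^{x\triangleright\pi}_\sigma$ appears by a standard application of the column orthogonality relations for the representation matrices $\Gamma_\pi$ and $\Gamma_R$. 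Summing over all $x\in X(C)$ with $xr_Cx^{-1}\in T$ and all intersection irreps $\sigma$ yields Eq.~(\ref{eq:condensation-coefficients}).

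Finally, for the dimension check, I would compute $\sum_{(T,R)} n^{(C,\pi)}_{(T,R)}\,\FPdim(T,R)$ using $\FPdim(T,R)=|K|\dim(R)/|K^{r_T}|$ and $\FPdim(C,\pi)=|C|\dim(\pi)$. Two applications of Frobenius reciprocity — first swapping the roles of $R$ and $\sigma$ against $K^{r_T}$, then summing the remaining character against the regular representation of $(xE(C)x^{-1})\cap K^{r_T}$ — together with the orbit–stabilizer identity $|C|=[G:E(C)]$ collapse the triple sum to $|C|\dim(\pi)$. The main obstacle I anticipate is the detailed combinatorial bookkeeping of the $(\bf{u,v})$ indices across the two different Fourier transforms: one must verify that the quadruples $((i,j),(i',j'))$ from the $(C,\pi)$-transform and the boundary $(T,R)$-transform match up consistently with the coset representatives $P(C)$ and $Q$, so that the Schur-orthogonality step is justified index-by-index. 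All remaining steps are routine once this matching is in place.
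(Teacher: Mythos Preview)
Your proposal is essentially correct and, in fact, supplies considerably more detail than the paper itself. The paper does not prove this theorem in the usual sense: after writing out the explicit dual-triangle formulas for $F^{(C,\pi);(\bf{u,v})}_\tau$ and $Y^{(T,R);(\bf{u,v})}_\tau$ (the two displayed equations immediately preceding the theorem), it simply states the theorem and then remarks that both Theorem~\ref{condensation-products} and its companion are ``derived using formulas in Section~3 of Ref.~[Schauenburg15].'' Your outline --- partition $C$ into $K$-conjugation orbits indexed by $X(C)=K\backslash G/E(C)$, identify the stabilizer $(xE(C)x^{-1})\cap K^{r_T}$, and run Schur orthogonality on that subgroup --- is exactly the computation that underlies Schauenburg's formula and is the natural way to fill in what the paper leaves implicit.

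One small caution: your sentence ``partitions $C$ into the intersections $C\cap T$'' overstates things slightly. Each $K$-conjugation orbit on $C$ is contained in a single double coset $T$, but $C\cap T$ can be a union of several $K$-orbits; this is precisely why the outer sum in Eq.~(\ref{eq:condensation-coefficients}) runs over all $x\in X(C)$ with $xr_Cx^{-1}\in T$ rather than picking a single $x$ per $T$. This does not affect the validity of your argument, only the phrasing. Your anticipated obstacle --- the $(\bf{u,v})$ index bookkeeping between the $P(C)$ and $Q$ coset systems --- is real but routine, and the dimension check via Frobenius reciprocity is standard.
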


Similarly, we may also consider the process of pulling a boundary excitation back into the bulk. The situation here is exactly the inverse of the above: we wish to write the $Y^{(T,R);(\bf{u,v})}_\tau$ as a linear combination of $F^{(C,\pi);(\bf{u,v})}_\tau$. Hence, by the same reasoning as above, we have the following theorem:

\begin{theorem}
\label{inverse-condensation-products}
Let $(T,R)$ and $(C,\pi)$ be given elementary excitations of the boundary and bulk, respectively. The term $F^{(C,\pi);(\bf{u_1,v_1})}_\tau$ has a nonzero coefficient in the decomposition of $Y^{(T,R);(\bf{u_2,v_2})}_\tau$ (for some quadruple $(\bf{u_1,v_1,u_2,v_2})$) if and only if the conditions (1) and (2) of Theorem \ref{condensation-products} hold.

In particular, let us write the decomposition of the simple boundary excitation as
\begin{equation}
\label{eq:inverse-condensation}
(T,R) = \oplus n_{(C,\pi)}^{(T,R)} (C,\pi).
\end{equation}
Then, we have 
\begin{equation}
n_{(C,\pi)}^{(T,R)} = n^{(C,\pi)}_{(T,R)},
\end{equation}
where $n^{(C,\pi)}_{(T,R)}$ is defined as in Theorem \ref{condensation-products}. Furthermore, these coefficients imply that the quantum dimension of the right hand side of Eq. (\ref{eq:inverse-condensation}) will always be $|G|$ times that of the left hand side.
\end{theorem}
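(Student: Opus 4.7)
The plan is to deduce all three assertions from Theorem \ref{condensation-products}, exploiting the symmetry inherent in its coefficient formula. The first step handles the nonzero-coefficient criterion. Conditions (1) and (2) of Theorem \ref{condensation-products} are manifestly symmetric in $(C,\pi)$ and $(T,R)$: condition (1) is just $C \cap T \neq \emptyset$, and condition (2) asks for a common irreducible summand $\sigma$ of the restrictions of $x \triangleright \pi$ and $R$ to the intersection $(xE(C)x^{-1})\cap K^{r_T}$. Since the families $\{F^{(C,\pi);({\bf u,v})}_\tau\}$ and $\{Y^{(T,R);({\bf u,v})}_\tau\}$ span the same space of dual triangle operators at the boundary, and the block structure of the change-of-basis is governed precisely by when these two representation-theoretic data are compatible, the same criterion determines the nonvanishing of coefficients in both directions.

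Next, the coefficient equality $n^{(T,R)}_{(C,\pi)} = n^{(C,\pi)}_{(T,R)}$ is essentially Frobenius reciprocity for the condensation and lifting functors (or, equivalently, Mackey's double-coset formula applied to the relevant induced representations). Concretely, the formula from Theorem \ref{condensation-products},
\[
\sum_{x,\sigma} n^R_\sigma\, n^{x \triangleright \pi}_\sigma,
\]
is symmetric in $\pi$ and $R$ once one notes that $x \mapsto x^{-1}$ provides a bijection between representatives of $K\backslash G/E(C)$ and those of $E(C) \backslash G/K$, and relabels the twisted stabilizer $(xE(C)x^{-1})\cap K^{r_T}$ accordingly. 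Both multiplicities compute the dimension of $\Hom_{(xE(C)x^{-1})\cap K^{r_T}}\bigl(\Res(x \triangleright \pi),\,\Res(R)\bigr)$ summed over orbit representatives $x$, so the symmetric combinatorial formula identifies them.

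Finally, for the quantum dimension statement I would perform a direct computation. Substituting the coefficient formula into the sum gives
\[
\sum_{(C,\pi)} n^{(T,R)}_{(C,\pi)}|C|\dim(\pi)
= \sum_{(C,\pi)}\sum_{x,\sigma} n^R_\sigma\, n^{x \triangleright \pi}_\sigma\, |C|\dim(\pi).
\]
Using $\dim(\pi) = \sum_\sigma n^{x \triangleright \pi}_\sigma \dim(\sigma)$ on restriction to the intersection subgroup and the orbit-stabilizer identity $|C| = |G|/|E(C)|$, together with the double-coset decomposition $G = \bigsqcup_x K x E(C)$, a standard manipulation collapses the right-hand side to $|G|\cdot(|K|/|K^{r_T}|)\dim(R) = |G|\,\FPdim(T,R)$. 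Structurally, the factor of $|G|$ reflects that the bulk triangle operator algebra has dimension $|G|^2$ while the boundary one has dimension $|G|$, so lifting a boundary excitation to the bulk inflates its support by a factor of $|G|$.

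The main obstacle in this plan is the careful bookkeeping in the coefficient equality: tracking how the stabilizer $(xE(C)x^{-1})\cap K^{r_T}$, the twisted representation $x \triangleright \pi$, and the set of relevant double coset representatives all transform in a compatible way under $x \mapsto x^{-1}$. Once this combinatorial dictionary is in place the symmetry of the formula becomes transparent, and all three parts of the theorem follow.
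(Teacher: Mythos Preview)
Your proposal is correct and aligns with the paper's own treatment. The paper does not supply a separate proof for this theorem: it simply observes that the situation is the inverse of Theorem~\ref{condensation-products}, invokes ``the same reasoning as above,'' and remarks that the correspondence between the two theorems is essentially Frobenius reciprocity (with a reference to Schauenburg's formulas for the explicit coefficients). Your sketch is a more detailed elaboration of exactly this argument---the symmetry of conditions (1) and (2), the Frobenius reciprocity/Mackey interpretation of the multiplicity formula, and a direct dimension count---so the approaches coincide, with yours filling in steps the paper leaves implicit.
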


Mathematically, Theorem \ref{condensation-products} represents the decomposition of irreducible representations of $D(G)$ under the restriction homomorphism $D(G) \twoheadrightarrow Z(G,1,K,1)$; Theorem \ref{inverse-condensation-products} represents the decomposition of irreducible representations of $Z(G,1,K,1)$ under the induction homomorphism $Z(G,1,K,1) \hookrightarrow D(G)$. The correspondence between the two theorems is essentially Frobenius reciprocity.

Theorem \ref{inverse-condensation-products} gives us a straightforward way to determine which quasi-particles $(C,\pi)$ may be condensed to vacuum on a given boundary based on subgroup $K$: we can simply find all quasi-particles appearing with nonzero coefficient in the decomposition (\ref{eq:inverse-condensation}) with $(T,R)$ trivial. In general, we will use these anyon types (and their corresponding condensation coefficients $n^{(\{1\},1)}_{(C,\pi)}$) to label the corresponding gapped boundary.

We would like to note that the above two theorems are exactly consistent with the mathematical results presented by Schauenburg for group-theoretical categories in Ref. \cite{Schauenburg15}.  In particular, they are derived using formulas in Section 3 of  Ref. \cite{Schauenburg15}.

\begin{figure}
\centering
\includegraphics[width = 0.47\textwidth]{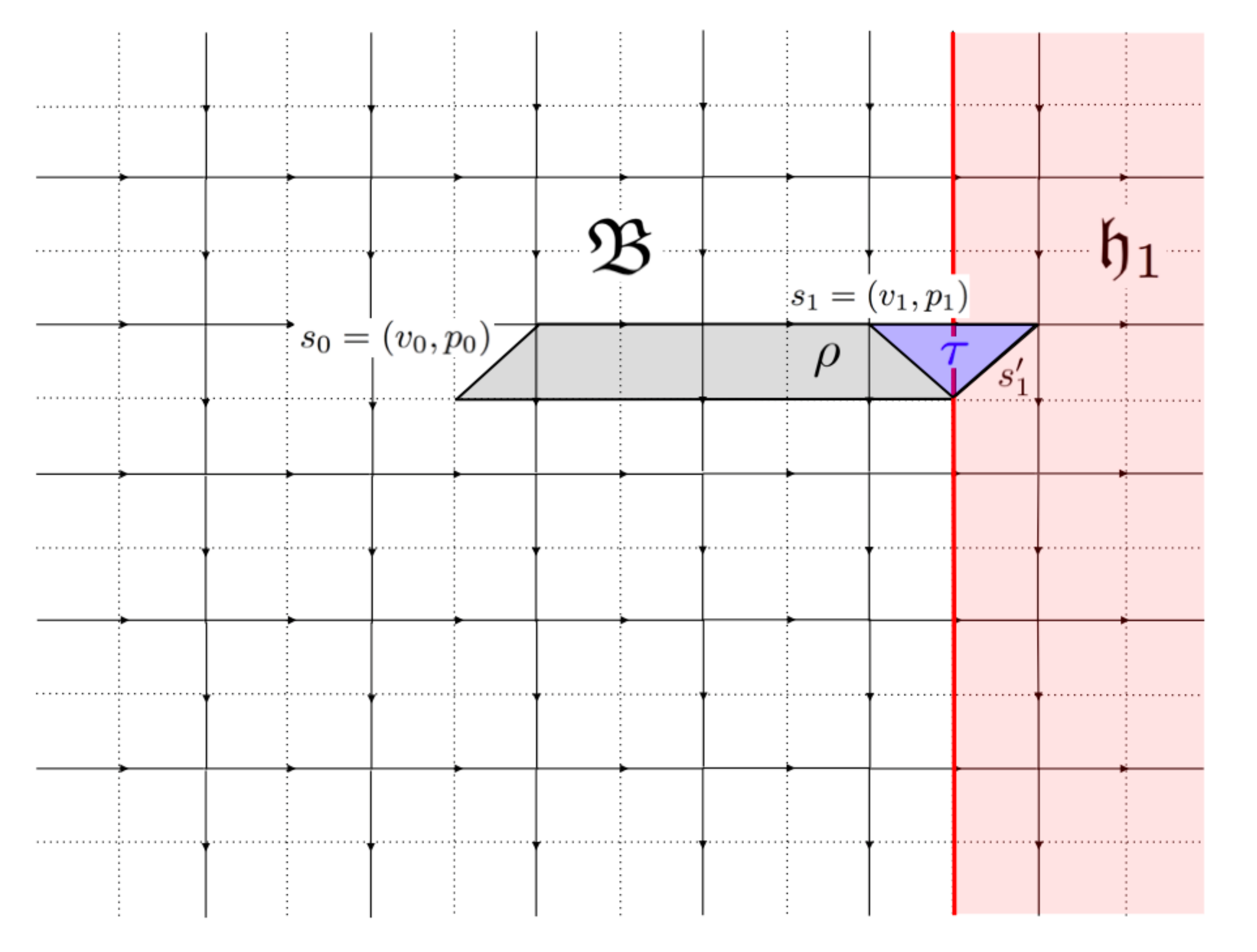}
\caption{Illustration of the bulk-to-boundary condensation procedure, when the boundary lies on the dual lattice. In this case, the condensation triangle is direct.}
\label{fig:condensation-2}
\end{figure}

\begin{remark}
\label{dual-bd-rmk-2}
In Remark \ref{dual-bd-rmk}, we noted that it is also possible to create a boundary line on the dual lattice. In this case, the ``condensation triangle'' $\tau$ of Fig. \ref{fig:condensation} is now a direct triangle instead of a dual triangle (see Fig. \ref{fig:condensation-2}). For this case, there are analogous results to Theorems \ref{condensation-products} and \ref{inverse-condensation-products}, which are obtained using ribbon operators on the direct triangle. In general, these two methods of creating boundaries with the same subgroup $K$ can result in different boundary types (i.e. different condensation formulas as in Equations (\ref{eq:condensation}) and (\ref{eq:inverse-condensation})). However, it is straightforward to show that the boundary type corresponding to the dual lattice boundary may also be created using a different subgroup on a direct lattice boundary.
\end{remark}

\subsubsection{Multiple condensation channels}
\label{sec:multiple-condensation-channels}

As seen in Theorems \ref{condensation-products} and \ref{inverse-condensation-products}, it is possible (e.g. in the case of $G = S_3$ which we study in Section \ref{sec:ds3-hamiltonian-example}) that one can have condensation multiplicities $n^{(C,\pi)}_{(T,R)}$ greater than 1. In this case, we say that there are multiple {\it condensation channels}. Physically, this is very similar to multiple fusion channels in the bulk (such as in the UMTC given by $\text{SU}(3)_3$), where we have fusion rule coefficients greater than 1. 

The origin of the condensation multiplicity can be traced back to the local degrees of freedom in the definition of the (bulk) ribbon operators. Recall that a ribbon operator $F^{(C, \pi);(\vec{u},\vec{v})}$ has local degrees of freedom indexed by $\vec{u}$ (or $\vec{v}$) at the two ends of the ribbon, resulting in quantum dimension $|C|\text{dim}(\pi)$. Application of local operators $D^{(h,g)}$ at the ends can mix the local states completely. However, if we move one end of the ribbon to a gapped boundary, to distinguish the local degrees of freedom, we can now only apply operators that commute with the boundary Hamiltonian. Therefore, on the boundary, we may not be able to distinguish all the local degrees of freedom of the ribbon operator completely without creating additional excitations; the remaining degeneracy becomes the condensation multiplicity.

A common situation where such multiplicity arises in quantum double models is when $K=\{1\}$. In this case, it is easy to see that all gauge charges condense to vacuum on the boundary, and the multiplicity is given by $n^{(\{1\}, \pi)}_{(\{1\},1)}=\mathrm{dim}(\pi)$. Let us understand the multiplicity in this example more concretely in terms of ribbon operators. Recall that for a gauge charge corresponding to an irreducible representation $\pi$ of $G$, the ribbon operators read
\begin{equation}
	F_\rho^{(\{1\},\pi);(\bf{u,v})}=\frac{\mathrm{dim}(\pi)}{|G|}\sum_{g\in G} \big(\Gamma_\pi^{-1}(g)\big)_{jj'}F^{(1, g)}_\rho.
	\label{}
\end{equation}
Suppose we now use the above ribbon operator to create a pair of charges in the bulk, and then move one of them, say the end corresponding to the $\vec{u}$ index, to the boundary.
In the bulk, one can easily show that applying $A^h$ at the end of the ribbon mixes the local indices. However, on the boundary, the $A^h$ operators do not commute with the $T^K$ boundary terms, unless one applies the product of all such $A$'s along the entire boundary. Therefore, different indices $\vec{u}$ are now locally indistinguishable, which explains the origin of the multiplicity.

From this example, we also see that condensation channels are topologically protected. In order to change from one channel to another without leaving any trace of excitation, one must apply a boundary ribbon operator that completely encircles the boundary. Since boundary particles are confined, such an operator would require energy input proportional to the perimeter of the boundary.

\subsection{Example: $\mfD(S_3)$}
\label{sec:ds3-hamiltonian-example}

In this section, we present an example using the group $G = S_3 = \{r,s|r^3 = s^2 = srsr = 1 \}$, the permutation group on three elements, to illustrate our theory on the simplest non-abelian group. Since this group is already quite complicated, we will not explicitly write out the Hamiltonian in full, although the interested reader can easily obtain it from Eq. (\ref{eq:kitaev-hamiltonian}). Instead, we focus our attention on the elementary excitations and gapped boundaries of this example.

\subsubsection{Elementary excitations}

To determine the elementary excitations of this model, we again only need to find the pairs $(C,\pi)$, as described in Eq. (\ref{eq:elementary-ribbon-basis}). The conjugacy classes $C$ of $S_3$, the corresponding centralizers $E(C)$ and their irreducible representations are:

\begin{enumerate}
\item
$C_0 = \{1\}: E(C_0) = S_3$. Three irreducible representations: trivial ($A$), sign ($B$), and the two-dimensional one ($C$).
\item
$C_1 = \{s, sr, sr^2\}: E(C_1) = \{1,s\} = \Z_2$. Two irreducible representations: trivial ($D$), sign ($E$).
\item
$C_2 = \{r, r^2\}: E(C_2) = \{1,r,r^2\} = \Z_3$. Three irreducible representations: trivial ($F$), $\{1, \omega, \omega^2\}$\footnote{Here $\omega = e^{2\pi i/3}$ is the third root of unity. $\{1, \omega, \omega^2\}$ means the representation where $1 \rightarrow 1$, $r \rightarrow \omega$, $r^2 \rightarrow \omega^2$.} ($G$),
$\{1, \omega^2, \omega^4\}$ ($H$).
\end{enumerate}

Hence, there are 8 anyon types for this model, namely $A-H$ as listed above.

\subsubsection{Gapped boundaries}

The group $G = S_3$ has 4 distinct subgroups up to conjugation, namely the trivial subgroup, $\Z_2$, $\Z_3$, and $G$ itself. In what follows, we solve for the 4 gapped boundaries corresponding to these subgroups. We will follow the method of Section \ref{sec:bd-excitations} to determine the excitations on the boundary, and the bulk anyons that can condense to vacuum.

\vspace{2.5mm}
\noindent
\underline{{\it Case I}}: $K = \{1\}$.
\vspace{2mm}

Since $K$ is trivial, there are 6 distinct double cosets, corresponding to each element $r_T \in G$. In each case, the only representation of $K^{r_T}$ is the trivial one. There are hence 6 elementary excitations on the boundary; let us label each excitation by the corresponding choice of $r_T$.

By Theorem \ref{inverse-condensation-products}, the bulk elementary excitations that condense to the trivial excitation on the boundary are precisely the particles corresponding to the trivial conjugacy class $C = \{1\}$, i.e. the chargeons. In general, for any finite group $G$, a simple argument shows that $K = \{1\}$ will always form the charge condensate boundary. More specifically, the ``boundary topological order'' corresponding to this boundary will always be described by the fusion category $\C[G]$.

More generally, we can use Theorem \ref{condensation-products} to determine the result of condensing each simple bulk anyon to the boundary:

\begin{multicols}{2}
\begin{enumerate}[label=(\roman*),leftmargin=0.5in]
\item
$A,B \rightarrow 1$
\item
$C \rightarrow 2 \cdot 1$
\item
$D,E \rightarrow s \oplus sr \oplus sr^2$
\item
$F,G,H \rightarrow r \oplus r^2$
\end{enumerate}
\end{multicols}

Similarly, by Theorem \ref{inverse-condensation-products}, we can determine the bulk anyons that result from pulling an elementary excitation out of the boundary:

\vspace{2.5mm}
\begin{enumerate}[label=(\roman*),leftmargin=0.5in]
\item
$1 \rightarrow A \oplus B \oplus 2C$
\item
$s, sr, sr^2 \rightarrow D \oplus E$
\item
$r, r^2 \rightarrow F \oplus G \oplus H$
\end{enumerate}
\vspace{2.5mm}

To indicate the anyon types that can condense to vacuum, we say that this subgroup forms an $A+B+2C$ boundary.

\vspace{2.5mm}
\noindent
\underline{{\it Case II}}: $K = \Z_2 = \{1,s\}$.
\vspace{2mm}

In this case, we see that there are only 2 double cosets, which give 3 elementary boundary excitations:

\vspace{2.5mm}
\begin{enumerate}[label=(\roman*),leftmargin=0.5in]
\item
$r_{T_1} = 1: T_1 = \{1,s\} = K^{r_{T_1}}$. 2 irreducible representations of $K^{r_{T_1}}$: the trivial one ($A$), the sign one ($B$).
\item
$r_{T_2} = r: T_2 = \{r,r^2,sr,sr^2\}, K^{r_{T_2}} = \{1\}$. There is only one trivial representation ($C$) of $K^{r_{T_2}}$.
\end{enumerate}
\vspace{2.5mm}

Using ribbon operator techniques, it is possible to show that this is in fact a boundary topological order given by the fusion category $\Rep(S_3)$.

We apply Theorem \ref{condensation-products} to determine the result of condensing each simple bulk anyon to the boundary:

\begin{multicols}{2}
\begin{enumerate}[label=(\roman*),leftmargin=0.5in]
\item
$A \rightarrow {A}$
\item
$B \rightarrow {B}$
\item
$C \rightarrow {A} \oplus {B}$
\item
$D \rightarrow {A} \oplus {C}$
\item
$E \rightarrow {B} \oplus {C}$
\item
$F,G,H \rightarrow {C}$
\end{enumerate}
\end{multicols}

Similarly, by Theorem \ref{inverse-condensation-products}, we have

\vspace{2.5mm}
\begin{enumerate}[label=(\roman*),leftmargin=0.5in]
\item
${A} \rightarrow A \oplus C \oplus D$
\item
${B} \rightarrow B \oplus C \oplus E$
\item
${C} \rightarrow D \oplus E \oplus F \oplus G \oplus H$
\end{enumerate}
\vspace{2.5mm}

Hence, $K = \Z_2$ corresponds to the $A+C+D$ boundary. We note that for this case, it does not matter which of the three $\Z_2$ subgroups we choose, since they are equivalent up to conjugation; in the end, they all give the same boundary condensation rules.

\vspace{2.5mm}
\noindent
\underline{{\it Case III}}: $K = \Z_3 = \{1,r,r^2\}$.
\vspace{2mm}

It is simple to show that this case is the same as $A+B+2C$ with $C,F$ switched.

\vspace{2.5mm}
\noindent
\underline{{\it Case IV}}: $K = G = S_3$.
\vspace{2mm}

It is simple to show that this case is the same as $A+C+D$ with $C,F$ switched. In general, the subgroup $K = G$ with the trivial cocycle always yields a pure flux condensate.

\vspace{2mm}
\section{Algebraic model of gapped boundaries}
\label{sec:algebraic}

In this section, we present a mathematical model of gapped boundaries using Lagrangian algebras in a modular tensor category. Throughout the section, we will assume the reader is familiar with the concepts of a fusion category and a modular tensor category; for reference on these topics, see Ref. \cite{BakalovKirillov,Etingof05}.

\subsection{Topological order}
\label{sec:topological-order}

In this section, we briefly review the mathematical theory that describes elementary excitations and anyons.

As we saw in Section \ref{sec:ribbon-operators}, the topological charges/anyon types in the Kitaev model with group $G$ are given by irreducible representations of the quantum double $D(G)$, which are pairs $(C,\pi)$ of a conjugacy class of $G$ and an irreducible representation of the centralizer of $C$. Equivalently, these are the simple objects of the Drinfeld center $\mfD(G) = \mZ(\text{Vec}_G) = \Rep(D(G))$.

If we think dually, we can also view the Kitaev model in terms of the representation category of $G$. In this case, every edge of the lattice will be labeled by an object in the unitary fusion category $\mC = \Rep(G)$, the complex linear representations of the group $G$. The elementary excitations in this model will be given by simple objects in the modular tensor category $\B = \mfD(G) = \mZ(\Rep(G))$, the Drinfeld center of the representation category. Because $\text{Vec}_G$ is Morita equivalent to $\Rep(G)$, these simple objects are given precisely by the same pairs $(C,\pi)$. In fact, this dualization exactly gives the same topological order as the Kitaev model, using the Levin-Wen Hamiltonian \cite{Levin04}.

By using ribbon operator techniques as presented in Section \ref{sec:hamiltonian}, one can in principle compute the twists and braidings of all of the elementary excitations in the Kitaev model. In doing so, one can determine all of the $\mathcal{S},\mathcal{T}$ matrix entries for this anyon system. It is conjectured that these two matrices uniquely determine a modular tensor category. If this conjecture holds, using such an analysis, one can show with ribbon operators that the topological order of the Kitaev model is indeed described by the modular tensor category $\B = \mZ(\Rep(G))$.

In fact, it is widely believed that modular tensor categories can be used to describe not only the topological order of Kitaev models, but also of Levin-Wen models \cite{Levin04}. These models also use a lattice (similar to Fig. \ref{fig:kitaev}), with the modification that the lattice should be trivalent (e.g. the honeycomb lattice). Here, the label on each edge is given by a simple object in a unitary fusion category $\mC$. As shown in Ref. \cite{Levin04}, string operators may also be defined for this model, although it is not as simple to characterize the elementary excitations and anyon fusion. The topological order would be given by the Drinfeld center $\B = \mZ(\mC)$, although one must also at least compute the $\mathcal{S},\mathcal{T}$ matrices using string operators to verify this for each particular case.

In the rest of this section, we present an algebraic theory for gapped boundaries for any model whose topological order is given by a doubled theory $\B = \mZ(\mC)$ for some unitary fusion category $\mC$.

\subsection{Lagrangian algebras}
\label{sec:frobenius-algebras}

We will now describe the gapped boundaries in a theory with topological order given by $\B = \mZ(\mC)$. Let us first state a few definitions and theorems that will be crucial for the rest of the paper.

\begin{theorem}
\label{indecomposable-module-repG}
Let $G$ be some finite group. There exists a one-to-one correspondence between the indecomposable module categories of $\Rep(G)$ (defined in Ref. \cite{Ostrik03}) and the pairs $(\{K\},\omega)$, where $\{K\}$ is an equivalence class of subgroups $K \subseteq G$ up to conjugation, and $\omega \in H^2(K,\C^\times)$ is a 2-cocycle of a representative $K \in \{K\}$.
\end{theorem}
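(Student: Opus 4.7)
My plan is to deduce the theorem as a specialization of Ostrik's general classification (Ostrik, 2003): indecomposable semisimple module categories over a finite fusion category $\mC$ are in bijection with Morita classes of indecomposable separable algebras in $\mC$; the module category corresponding to an algebra $A$ is $\Mod_{\mC}(A)$, and $A$ is recoverable from any simple $X \in \M$ as the internal endomorphism algebra $\underline{\End}_\M(X)$. Applied to $\mC = \Rep(G)$, the problem reduces to classifying such algebras in $\Rep(G)$ up to Morita equivalence.

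The most efficient route is to pass to the Morita-equivalent fusion category $\text{Vec}_G$: both $\text{Vec}_G$ and $\Rep(G)$ have Drinfeld center $\mfD(G)$, so they are Morita-equivalent as fusion categories, which gives a bijection between their $2$-categories of module categories. In $\text{Vec}_G$, an indecomposable separable algebra is, up to Morita equivalence, a twisted group algebra $\C^\omega[K]$ with $K \leq G$ a subgroup and $[\omega] \in H^2(K, \C^\times)$: the grading support of an indecomposable $G$-graded algebra is a single subgroup $K$ (otherwise one can split the algebra along a non-trivial partition of the support into cosets), and the Morita class is determined by the cohomology class of the associativity $2$-cocycle. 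To exhibit the corresponding module category over $\Rep(G)$ concretely, I would define $\M(K, \omega) := \Rep^\omega(K)$, the category of $\omega$-projective representations of $K$, with $\Rep(G)$-action $V \cdot W := (\Res^G_K V) \otimes W$; this is well-defined because tensoring an untwisted $K$-representation with an $\omega$-projective one preserves the twist, and indecomposability follows from a direct check that the internal $\Hom$ between any two simples is nonzero in $\Rep(G)$.

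It remains to match the equivalence relation on pairs. Two pairs $(K_1, \omega_1)$ and $(K_2, \omega_2)$ yield equivalent module categories iff the associated twisted group algebras are Morita equivalent in $\text{Vec}_G$. Inner automorphisms of $G$ act on the data by conjugating $K$ to $g_0 K g_0^{-1}$ and transporting the cocycle along the resulting group isomorphism, and the orbit of $(K, \omega)$ under this action is precisely a conjugacy class $\{K\}$ together with a cohomology class $\omega \in H^2(K, \C^\times)$ well-defined on that orbit.

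The main obstacle is the Morita classification of $G$-graded algebras in $\text{Vec}_G$: one must verify that two indecomposable $G$-graded algebras with the same support $K$ are Morita equivalent in $\text{Vec}_G$ iff their associativity $2$-cocycles differ by a coboundary, and that conjugation of $K$ by an element of $G$ is the only additional equivalence among pairs. The first point is a twisted analogue of the usual Morita theorem for semisimple $\C$-algebras; the second holds because we are classifying module categories over $\Rep(G)$ as a fixed fusion category, so outer automorphisms of $G$ do not contribute further identifications.
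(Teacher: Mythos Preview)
Your outline is correct and is essentially a sketch of Ostrik's argument; the paper itself gives no proof at all beyond the one-line citation ``See Ref.~\cite{Ostrik03}, Theorem~2.'' So you have done strictly more than the paper does here, and your route---Ostrik's internal-Hom/algebra classification, passage to $\text{Vec}_G$ via Morita equivalence, and identification of indecomposable separable algebras there with twisted group algebras $\C^\omega[K]$---is exactly the standard way to unpack that reference.

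One small imprecision worth tightening: the sentence ``the grading support of an indecomposable $G$-graded algebra is a single subgroup $K$ (otherwise one can split the algebra along a non-trivial partition of the support into cosets)'' is not literally true as stated. An indecomposable separable algebra in $\text{Vec}_G$ can have support that is not a subgroup (e.g.\ a matrix algebra with a nontrivial $G$-grading). What is true is that every such algebra is \emph{Morita equivalent} in $\text{Vec}_G$ to one supported on a subgroup, namely the internal endomorphism algebra $\underline{\End}(X)$ of a simple object $X$ in the associated module category. Alternatively, and more directly, one can argue on the module-category side: the $\text{Vec}_G$-action makes $G$ act on the set of isomorphism classes of simples of $\M$, indecomposability forces transitivity, the stabilizer of a chosen simple is $K$, and the residual associator datum is a class in $H^2(K,\C^\times)$. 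Either fix closes the gap; the conjugacy ambiguity then comes from the choice of basepoint simple, exactly as you say.
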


\begin{proof}
See Ref. \cite{Ostrik03}, Theorem 2.
\end{proof}

\begin{definition}
\label{lagrangian-algebra-def}
A {\it Lagrangian algebra} $\A$ in a modular tensor category $\B$ is an algebra with multiplication $m: \A \otimes \A \rightarrow \A$ such that:
\begin{enumerate}
\item
$\A$ is {\it commutative}, i.e. $\A \otimes \A \xrightarrow{c_{\A\A}} \A \otimes \A \xrightarrow{m} \A$ equals $\A \otimes \A \xrightarrow{m} \A$, where $c_{\A\A}$ is the braiding in the modular category $\B$.
\item
$\A$ is {\it separable}, i.e. the multiplication morphism $m$ admits a splitting $\mu:\A \rightarrow \A \otimes \A$ which is a morphism of $(\A,\A)$-bimodules.
\item
$\A$ is {\it connected}, i.e. $\Hom_\B(\one_\B, \A) = \C$, where $\one_\B$ is the tensor unit of $\B$.
\item
The Frobenius-Perron dimension (a.k.a. quantum dimension) of $\A$ is the square root of that of the modular tensor category $\B$,
\begin{equation}
\label{eq:lagrangian-algebra-dim}
\FPdim(\A)^2 = \FPdim(\B).
\end{equation}
\end{enumerate}
\end{definition}

\begin{remark}
We note that an algebra satisfying conditions (2) and (3) in the above definition is often known in the literature as an {\it \'etale} algebra.
\end{remark}

As discussed in Section \ref{sec:bd-hamiltonian}, in a Kitaev model for the untwisted Dijkgraaf-Witten theory based on group $G$, every subgroup $K \subseteq G$ (up to conjugation) with a cocycle $\omega \in H^2(K,\C^\times)$ determines a distinct gapped boundary of the model (i.e. a unique equivalence class of boundary Hamiltonians). It follows from Theorem \ref{indecomposable-module-repG} that there is an injection from the indecomposable modules of the category $\mZ(\Rep(G))$ to the set of gapped boundaries of the Kitaev model. Furthermore, because we choose orientations so that the bulk is always on the left hand side when we traverse each boundary, we view gapped boundaries as indecomposable left module categories.

By Proposition 4.8 of Ref. \cite{Davydov12}, we may state the following theorem:

\begin{theorem}
\label{indecomposable-module-lagrangian-algebra}
Let $\mC$ be any fusion category, and let $\B = \mZ(\mC)$. There exists a one-to-one correspondence between the indecomposable modules of $\mC$ and the Lagrangian algebras of $\B$.
\end{theorem}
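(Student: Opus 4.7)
The plan is to combine Ostrik's classification of module categories by internal End algebras with the full center construction, using the Lagrangian dimension condition to match the two sides.

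First I would invoke Ostrik's theorem (Ref.~\cite{Ostrik03}): any indecomposable left $\mC$-module category $\mathcal{M}$ is equivalent to $\Mod_\mC(A)$ for some connected separable algebra $A \in \mC$, obtained as $A = \underline{\End}_\mC(m)$ for any nonzero $m \in \mathcal{M}$, and two such algebras yield equivalent module categories precisely when they are Morita equivalent in $\mC$. This reduces the claim to a bijection between Morita classes of connected separable algebras in $\mC$ and isomorphism classes of Lagrangian algebras in $\mZ(\mC)$.

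For the forward map I would send $A$ to its full center $Z(A) \in \mZ(\mC)$, defined universally as the terminal commutative algebra in $\mZ(\mC)$ equipped with a compatible central algebra map to $A$ in $\mC$, and realizable concretely as $\int_{X \in \mC} X \ot A \ot X^*$ with the canonical half-braiding and algebra structure. Commutativity is built into the definition; separability of $Z(A)$ descends from that of $A$; and connectedness follows from the identification $\Hom_{\mZ(\mC)}(\one, Z(A)) = \Hom_\mC(\one, A) = \C$. The Lagrangian dimension condition then follows from $\FPdim(Z(A)) = \FPdim(\mC)$, which combined with $\FPdim(\mZ(\mC)) = \FPdim(\mC)^2$ gives $\FPdim(Z(A))^2 = \FPdim(\mZ(\mC))$. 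Morita invariance of $Z(-)$ makes this well-defined on equivalence classes.

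For the inverse, given a Lagrangian algebra $\mathcal{A} \in \mZ(\mC)$, I would apply the forgetful functor $F : \mZ(\mC) \to \mC$ to obtain a separable algebra $F(\mathcal{A}) \in \mC$ and associate to $\mathcal{A}$ the module category $\Mod_\mC(F(\mathcal{A}))$; indecomposability of this module category follows from connectedness of $\mathcal{A}$ in $\mZ(\mC)$, which prevents $F(\mathcal{A})$ from splitting nontrivially as an algebra in $\mC$.

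The main technical obstacle is showing the two maps are mutually inverse. For $Z(F(\mathcal{A})) \cong \mathcal{A}$, the universal property of the full center yields a canonical morphism of commutative algebras $\mathcal{A} \to Z(F(\mathcal{A}))$ in $\mZ(\mC)$, which is nonzero because both algebras are connected and hence injective in the semisimple category $\mZ(\mC)$; the equalities $\FPdim(\mathcal{A}) = \FPdim(\mC) = \FPdim(Z(F(\mathcal{A})))$, where the first uses the Lagrangian hypothesis and the second is the dimension formula for the full center, upgrade this injection to an isomorphism. The reverse composition reduces to showing that $F(Z(A))$ is Morita equivalent to $A$ in $\mC$, which follows because both algebras have equivalent module categories over $\mC$ by construction of $Z(A)$.
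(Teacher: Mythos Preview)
The paper does not prove this theorem; it simply cites Proposition~4.8 of Davydov--M\"uger--Nikshych--Ostrik~\cite{Davydov12}. Your proposal is an outline of essentially the same argument that appears in that reference (and in related work of Kong--Runkel): classify module categories by Morita classes of algebras via Ostrik, then use the full center $A \mapsto Z(A)$ as the bijection with Lagrangian algebras. So at the level of strategy you are aligned with the cited source rather than with anything the present paper does independently.

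Two steps in your outline are not quite right as stated. First, your indecomposability argument for $\Mod_\mC(F(\mathcal{A}))$ is flawed: connectedness of $\mathcal{A}$ in $\mZ(\mC)$ does \emph{not} prevent $F(\mathcal{A})$ from decomposing in $\mC$. For example, the canonical Lagrangian algebra corresponding to $\mC$ as a module over itself has underlying object $\bigoplus_X X \otimes X^*$ in $\mC$, which is far from connected there. Indecomposability of the associated module category instead follows because decompositions of $\Mod_\mC(F(\mathcal{A}))$ correspond to central idempotents of $F(\mathcal{A})$, and those lift to $\Hom_{\mZ(\mC)}(\one,\mathcal{A}) = \C$. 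Second, your injectivity step for $\mathcal{A} \to Z(F(\mathcal{A}))$ is underspecified: a nonzero algebra morphism between connected algebras in a semisimple category is not automatically monic; one needs the \'etale (commutative separable) structure to conclude that any algebra homomorphism out of a connected \'etale algebra is injective. With those two fixes the argument goes through, and the dimension comparison you give is the correct way to finish.
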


As a result, gapped boundaries of the Kitaev model may be determined by enumerating the Lagrangian algebras in $\mZ(\Rep(G))$.

In fact, it is proposed \cite{KitaevKong} that in any Levin-Wen model based on unitary fusion category $\mC$, the gapped boundaries are in one-to-one correspondence with the indecomposable modules $\M$ of $\mC$. In this case, by determining all the Lagrangian algebras of the Drinfeld center $\B = \mZ(\mC)$, we can also obtain gapped boundaries of the Levin-Wen model. In what follows, the theory we develop will be applicable to any model where gapped boundaries are given by indecomposable modules of the input fusion category.

\begin{remark}
The above definition of a Lagrangian algebra is the same as a special, symmetric Frobenius algebra, with an additional restriction on the quantum dimension of the algebra. This condition enforces that $\A$ has the maximal quantum dimension possible. Physically, this makes $\A$ into a gapped boundary (or equivalently, a domain wall between the $\B$ and the trivial category $\text{Vec}$), as we have discussed above. A special, symmetric Frobenius algebra with smaller quantum dimension would correspond to a domain wall between $\B$ and another topological phase.

We note that while this section deals purely with Lagrangian algebras and gapped boundaries, our work generalizes to the case of domain walls. In fact, a domain wall is mathematically equivalent to a gapped boundary, by using the ``folding'' technique. This is discussed in Refs. \cite{Beigi11} and \cite{KitaevKong} and justified rigorously in \cite{Fuchs2013}.  
\end{remark}

To find all Lagrangian algebras of a modular tensor category, we will first state the following propositions:

\begin{prop}
\label{bosons}
$\A$ is a commutative algebra in a modular category $\B$ if and only if the object $\A$ decomposes into simple objects as $\A = \oplus_s n_s s$, with $\theta_s = 1$ (i.e. $s$ is bosonic) for all $s$ such that $n_s \neq 0$. 
\end{prop}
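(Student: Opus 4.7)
The plan is to exploit the ribbon structure of $\B$ to tie commutativity of the multiplication $m$ to the eigenvalues of the twist. Since $\A=\oplus_s n_s s$ is a semisimple decomposition and $\theta$ is a natural automorphism, $\theta_\A$ acts on each isotypic component $n_s\cdot s$ as the scalar $\theta_s$. Hence the condition ``$\theta_s=1$ for every $s$ with $n_s\neq 0$'' is equivalent to the single identity $\theta_\A=\mathrm{id}_\A$, and the proposition becomes the equivalence of commutativity of $m$ with $\theta_\A=\mathrm{id}_\A$.

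For the forward direction, I would assemble three standard ingredients: the ribbon identity $\theta_{\A\otimes\A}=c_{\A,\A}\circ c_{\A,\A}\circ(\theta_\A\otimes\theta_\A)$, naturality of the twist applied to the multiplication $\theta_\A\circ m=m\circ\theta_{\A\otimes\A}$, and the consequence $m\circ c_{\A,\A}^{\,2}=m$ of commutativity applied twice. Combining them yields
\begin{equation*}
\theta_\A\circ m \;=\; m\circ(\theta_\A\otimes\theta_\A).
\end{equation*}
To upgrade this ``$\theta_\A$ is an algebra endomorphism'' statement to $\theta_\A=\mathrm{id}_\A$, I would feed in the Frobenius data coming from separability. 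Concretely, the splitting $\mu\colon\A\to\A\otimes\A$ of $m$ is an $(\A,\A)$-bimodule map, which together with the unit $\eta\colon\one_\B\to\A$ (satisfying $\theta_\A\circ\eta=\eta$ by naturality and $\theta_{\one_\B}=\mathrm{id}$) and the connectedness assumption $\Hom(\one_\B,\A)=\C$ pin down the scalar action of $\theta_\A$ on each simple summand by a standard graphical-calculus argument: trace the loop through $\eta$, $\mu$, and $\theta_\A$ to identify the scalar with $1$ on every $s$ that appears.

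For the converse, if every $\theta_s=1$ then $\theta_\A=\mathrm{id}_\A$ and the ribbon identity collapses to $\theta_{\A\otimes\A}=c_{\A,\A}^{\,2}$. Naturality then gives $m=\theta_\A\circ m=m\circ c_{\A,\A}^{\,2}$, i.e.\ $m$ is invariant under the monodromy. To strengthen monodromy-invariance to true commutativity $m\circ c_{\A,\A}=m$, I would use the Frobenius pairing $\epsilon\circ m\colon\A\otimes\A\to\one_\B$ together with the bimodule splitting $\mu$ to express $m\circ c_{\A,\A}$ in terms of $m$ via a standard rewriting that turns one overcrossing into two (absorbed by $c_{\A,\A}^{\,2}=\mathrm{id}$ in the trivial-twist sector) at the cost of a compensating Frobenius contraction.

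The main obstacle is precisely the passage from the naturality identity $\theta_\A\circ m=m\circ(\theta_\A\otimes\theta_\A)$ to the stronger conclusion $\theta_\A=\mathrm{id}_\A$: algebraically this only says $\theta_\A$ is a grading of $\A$ by a one-dimensional character of $\A$ itself, so killing nontrivial characters genuinely uses connectedness and the $(\A,\A)$-bimodule structure of $\mu$. Dually, in the converse direction the subtle point is lifting $m\circ c_{\A,\A}^{\,2}=m$ to $m\circ c_{\A,\A}=m$, which is where the separability and Frobenius axioms do essential work rather than acting as bookkeeping.
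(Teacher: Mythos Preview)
The paper does not prove this proposition; it simply cites Proposition~2.25 of Fr\"ohlich--Fuchs--Runkel--Schweigert. So there is no argument in the paper to compare against---only the question of whether your sketch stands on its own.

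Your forward direction is the standard argument and is essentially correct, though note that it relies on connectedness and separability, which are not hypotheses of the proposition as literally written but are present in the ambient Lagrangian-algebra context (and in the FFRS statement being cited). You flag this yourself, which is appropriate.

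The converse direction has a genuine gap. You correctly reduce to $m\circ c_{\A,\A}^{\,2}=m$, but the proposed lift to $m\circ c_{\A,\A}=m$ via ``turning one overcrossing into two at the cost of a compensating Frobenius contraction'' cannot work as sketched. Your argument nowhere invokes modularity of $\B$, so if it were valid it would apply verbatim in any ribbon category. But in the symmetric ribbon category $\Rep(G)$ with $G$ finite and nonabelian, the group algebra $\A=\C[G]$ (with the left regular $G$-action) is connected (the $G$-invariants are spanned by $\sum_g g$), separable (Maschke), and has $\theta_\A=\mathrm{id}_\A$; it trivially satisfies $m\circ c^{2}=m$ since $c^{2}=\mathrm{id}$, yet $m\circ c\neq m$ because $\C[G]$ is noncommutative. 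Hence whatever makes the converse hold---if it does hold as stated---must use either modularity of $\B$ or the \emph{symmetry} hypothesis on the Frobenius structure that FFRS actually imposes, and you have not identified where that ingredient enters your rewriting. The vague ``compensating Frobenius contraction'' is precisely the step that this counterexample shows cannot be filled in by separability and connectedness alone.
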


\begin{proof}
See Proposition 2.25 in Ref. \cite{Frohlich06}.
\end{proof}

\begin{prop}
\label{separability-prop}
$\A$ is a separable algebra in a unitary fusion category $\B$ if and only if for every $a,b \in \Obj(\B)$, there exists a partial isometry from $\Hom(a,\A) \otimes \Hom(b,\A) \rightarrow \Hom(a \otimes b, \A)$.\footnote{$\B$ is a fusion category, so all hom-spaces in $\B$ have vector space structure. The tensor product of hom-spaces is just the usual tensor product for vector spaces.}
\end{prop}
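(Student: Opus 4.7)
The plan is to identify the canonical candidate map
\[
T_{a,b}\colon \Hom(a,\A)\otimes \Hom(b,\A) \longrightarrow \Hom(a\otimes b,\A),\qquad f\otimes g\mapsto m\circ(f\otimes g),
\]
and to show that separability of $(\A,m)$ is equivalent to $T_{a,b}$ being a partial isometry for every pair $a,b\in\Obj(\B)$ (after a standard normalization). Because $\B$ is unitary, every hom space carries the canonical Hilbert space inner product coming from the categorical trace, so $T_{a,b}$ admits a well-defined adjoint $T_{a,b}^{\dagger}$, and the partial-isometry condition can be written as $T_{a,b}\,T_{a,b}^{\dagger}\,T_{a,b}=T_{a,b}$, or equivalently $T_{a,b}^{\dagger}\,T_{a,b}$ being an orthogonal projection.

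For the forward direction, I would start from a bimodule splitting $\mu\colon \A\to\A\otimes\A$ of $m$ and use it to construct an explicit right inverse $S_{a,b}\colon \Hom(a\otimes b,\A)\to\Hom(a,\A)\otimes\Hom(b,\A)$ essentially given by $h\mapsto \mu\circ h$, where one reads the composite $\mu\circ h\in\Hom(a\otimes b,\A\otimes\A)$ inside $\Hom(a,\A)\otimes\Hom(b,\A)$ by orthogonally projecting along the decomposition of $\A\otimes\A$ into simples. The relation $m\circ\mu=\Id_{\A}$ gives $T_{a,b}\circ S_{a,b}=\Id$ on the image of $T_{a,b}$, and the $(\A,\A)$-bimodule property of $\mu$ ensures that this section is compatible with the inner product, which upgrades the one-sided inverse to the partial-isometry identity.

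For the converse, I would specialize to $a=b=\A$ and read off a candidate splitting directly from the adjoint: set $\mu:=T_{\A,\A}^{\dagger}(m)\in\Hom(\A,\A\otimes\A)$, viewed via the natural embedding $\Hom(\A,\A)\otimes\Hom(\A,\A)\hookrightarrow\Hom(\A,\A\otimes\A)$. The partial-isometry identity applied to $m=T_{\A,\A}(\Id_{\A}\otimes \Id_{\A})$ then forces $m\circ \mu=\Id_{\A}$. To promote $\mu$ to a bimodule map, I would use naturality of $T_{a,b}$ in $a$ and $b$: applying the partial-isometry condition with $a=\A\otimes x$ (resp.\ $b=x\otimes\A$) and comparing with the $\A$-action obtained by composing with $m\otimes\Id$ (resp.\ $\Id\otimes m$) yields the two bimodule axioms.

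The main obstacle I expect is the subtlety that $\Hom(a\otimes b,\A\otimes\A)$ is in general strictly larger than $\Hom(a,\A)\otimes\Hom(b,\A)$, since composites of intermediate simple objects contribute to the former. Consequently both $T_{a,b}^{\dagger}$ and the map $h\mapsto \mu\circ h$ genuinely land in $\Hom(a\otimes b,\A\otimes\A)$, and one must carefully argue that the relevant outputs actually lie in the image of the canonical embedding from $\Hom(a,\A)\otimes\Hom(b,\A)$. Keeping track of this projection, together with fixing the quantum-dimension normalization coming from the spherical trace on $\B$, will be where the real content of the proof lies.
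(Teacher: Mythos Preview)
Your approach is operator-theoretic, constructing sections and adjoints of $T_{a,b}$, whereas the paper works entirely in graphical calculus: it computes the scalar $\Tr\big(M(f\otimes g)^{\dagger}M(f\otimes g)\big)$, inserts the separability identities to split the diagram, applies the ``no tadpole'' rule to kill all intermediate simple channels except the tensor unit, and reads off the answer as $\Tr(f^{\dagger}f)\,\Tr(g^{\dagger}g)$. After polarization this exhibits $M$ as an isometry up to a fixed scalar. The virtue of staying with scalar-valued traces is precisely that one never confronts the discrepancy between $\Hom(a\otimes b,\A\otimes\A)$ and $\Hom(a,\A)\otimes\Hom(b,\A)$ that you correctly flag as the main obstacle; the graphical manipulation circumvents that bookkeeping entirely.

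There is a genuine gap in your converse direction. You set $\mu := T_{\A,\A}^{\dagger}(m)$ ``viewed via the natural embedding $\Hom(\A,\A)\otimes\Hom(\A,\A)\hookrightarrow\Hom(\A,\A\otimes\A)$'', but no such embedding exists: the functorial tensor product of morphisms sends $\Hom(\A,\A)\otimes\Hom(\A,\A)$ into $\Hom(\A\otimes\A,\A\otimes\A)$, not into $\Hom(\A,\A\otimes\A)$, and there is no canonical map $\A\to\A\otimes\A$ to compose with until you already possess the comultiplication you are trying to build. So $T_{\A,\A}^{\dagger}(m)$ is not a morphism $\A\to\A\otimes\A$ at all, and the identity $m\circ\mu=\Id_{\A}$ you derive from the partial-isometry equation does not type-check. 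The honest candidate for $\mu$ in a unitary category is $m^{\dagger}$ itself (rescaled), and the bimodule and splitting properties then have to be verified by a separate argument---which, once one unwinds it in terms of inner products on hom-spaces, is essentially the trace computation the paper performs. Your forward-direction claim that ``the $(\A,\A)$-bimodule property of $\mu$ ensures compatibility with the inner product'' is likewise where the real work hides: having a one-sided section is strictly weaker than $T_{a,b}^{\dagger}T_{a,b}$ being a projection, and the bimodule axiom alone does not bridge that gap without the same diagrammatic simplification.
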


Our proof below is given using the graphical calculus of fusion categories.  Each segment of a diagram is labeled by some object in the category and vertices are labeled by some chosen morphisms.  In our diagrams below, segments are labeled by either the algebra $\A$ or some objects such as $a, b$, and the trivalent vertices are labeled either by the multiplication or comultiplication of $\A$.   Each diagram represents a morphism in a certain hom-space, which is a composition of elementary morphisms from the bottom to the top.  A closed diagram represents a morphism in $\Hom(1,1)$ between the tensor unit $1$.  Hence a closed diagram, which is always equal to $\lambda \cdot \textrm{id}_1$ for some scalar $\lambda$, represents the number $\lambda$.

\begin{proof}
Fix $a,b \in \Obj(\B)$. Define a map $M$ from $\Hom(a,\A) \otimes \Hom(b,\A)$ to $\Hom(a \otimes b, \A)$ as follows:

By definition of a tensor category, there exists an injective map $\gamma: \Hom(a,\A) \otimes \Hom(b,\A) \rightarrow \Hom(a \otimes b, \A \otimes \A)$. Suppose we are given two morphisms $f \in \Hom(a,\A)$, $g \in \Hom(b,\A)$. Then $M(f \otimes g) = (m \circ \gamma)(f \otimes g)$ is a morphism in the hom-space $\Hom(a \otimes b, \A)$. We now show that this map $M$ is injective.

Suppose $M(f \otimes g) = 0$. Since $\B$ is a unitary fusion category, $M(f \otimes g) = 0$ if and only if the following trace is equal to 0: (note here that the pictures are read bottom-up)

\begin{equation}
\label{eq:M-trace}
\vcenter{\hbox{\includegraphics[width = 0.2\textwidth]{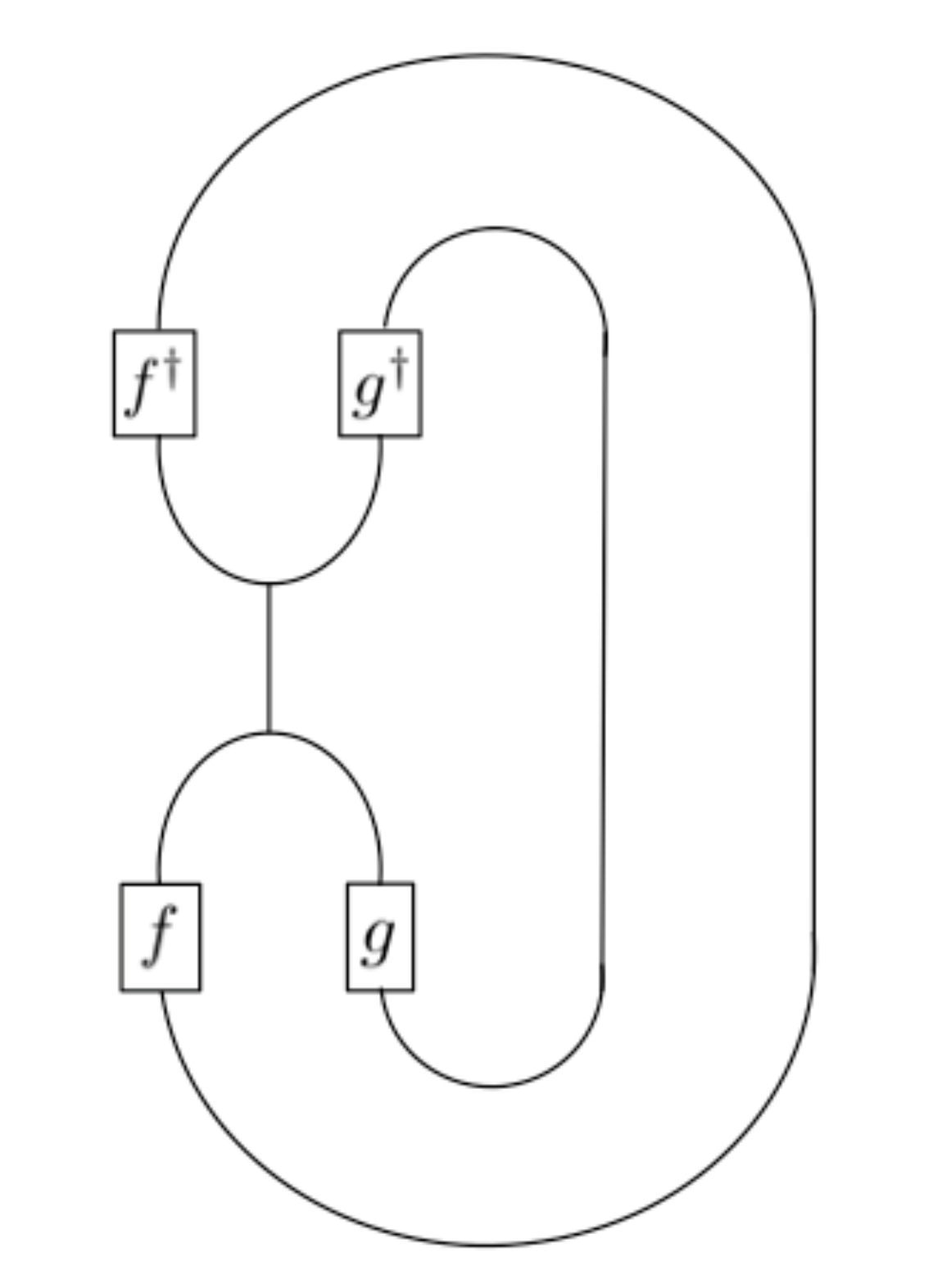}}} = 0.
\end{equation}

By definition of an $(\A,\A)$-bimodule, the condition (2) in Definition \ref{lagrangian-algebra-def} is equivalent to the following two conditions \cite{Muger12}:

\begin{equation}
\label{eq:separability}
\vcenter{\hbox{\includegraphics[width = 0.28\textwidth]{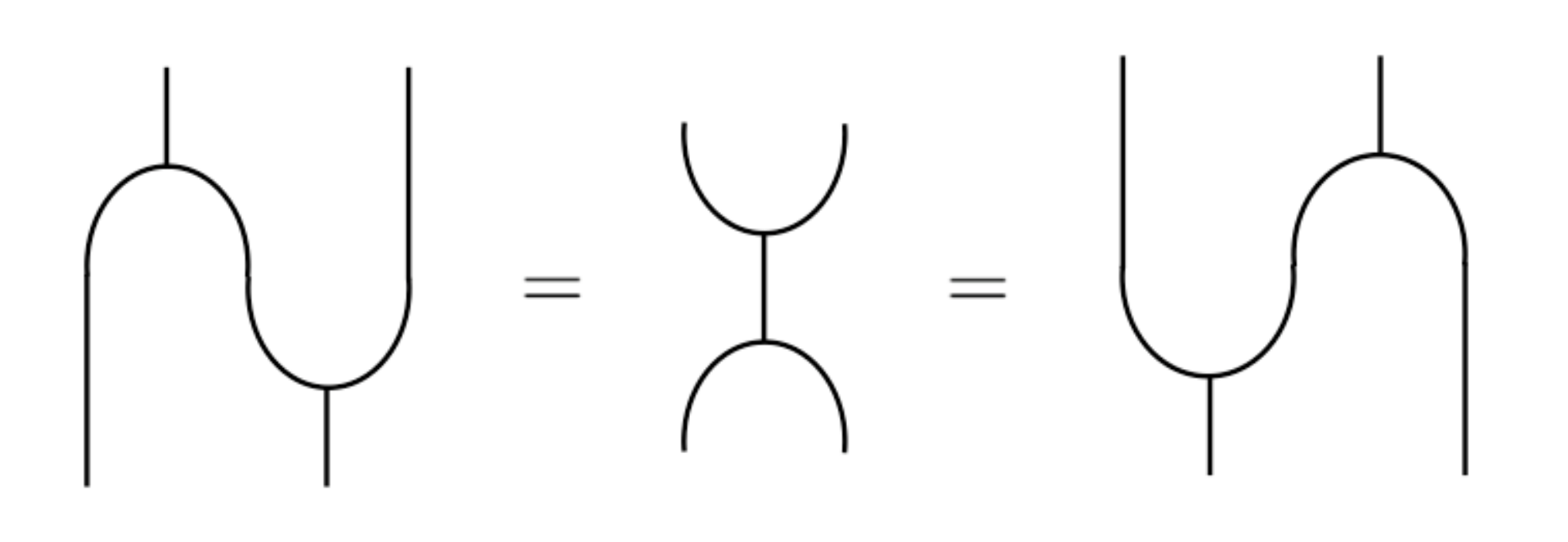}}}
\end{equation}

\noindent
and

\begin{equation}
\label{eq:separability-2}
\vcenter{\hbox{\includegraphics[width = 0.12\textwidth]{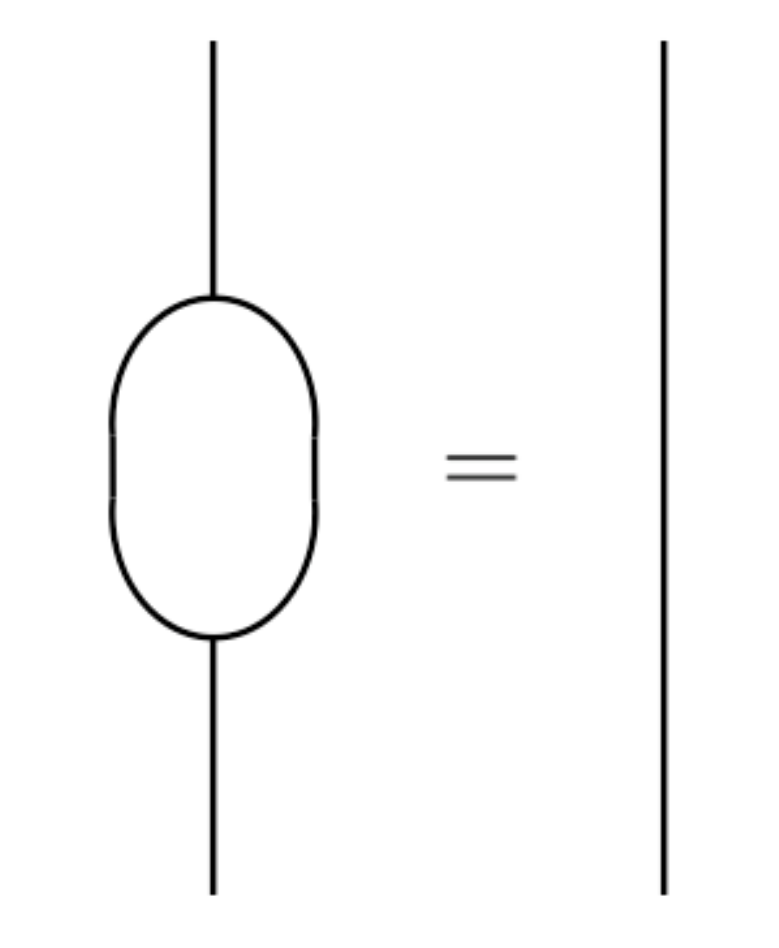}}}
\end{equation}

It follows that Eq. (\ref{eq:M-trace}) becomes a sum of the following diagrams, with $X$ being a simple object in $\A$: 

\begin{equation}
\label{eq:M-trace-2}
\vcenter{\hbox{\includegraphics[width = 0.18\textwidth]{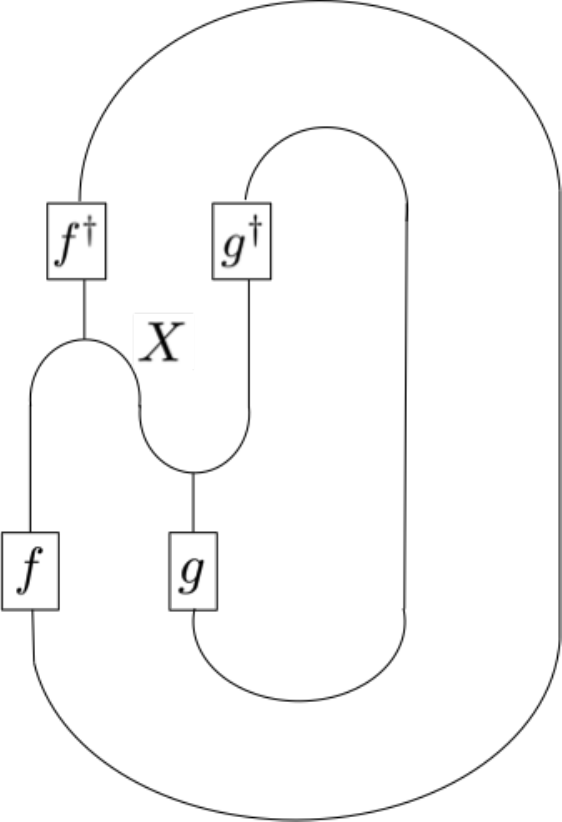}}} = 0,
\end{equation}

\noindent
By reading the above equation horizontally, we see that the part of the diagram consisting of the inner loop with the segment labeled by $X$ is in $\Hom(X,1)$ (potentially up to a non-zero scalar due to the rotation of the diagram). $\Hom(X,1)$ is zero unless $X$ is the tensor unit $1$ (we refer to this fact in graphical calculus as the ``no tadpole'' rule). Thus, Eq. (\ref{eq:M-trace}) holds if and only if Eq. (\ref{eq:M-trace-2}) does when $X = 1$ is the tensor unit.

We hence have the following picture:

\begin{equation}
\label{eq:M-trace-3}
\vcenter{\hbox{\includegraphics[width = 0.18\textwidth]{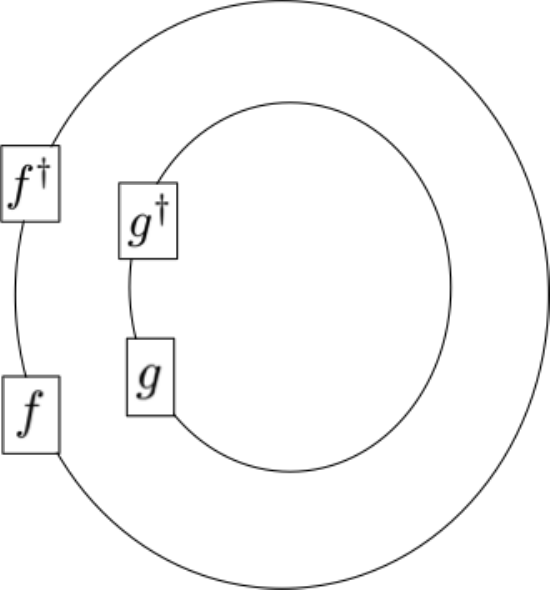}}} = 0.
\end{equation}

The left hand side of Eq. (\ref{eq:M-trace-3}) is precisely given by $\Tr(f)\Tr(g)$. Since $\B$ is a unitary fusion category, this equation holds if and only if $f = 0$ or $g = 0$, i.e. $f \otimes g = 0$.

Finally, $M$ is a partial isometry if and only if Eq. (\ref{eq:separability-2}) holds, which completes the forward direction of the proof.

Note that all steps in this proof were reversible, so that both directions of the Proposition hold.
\end{proof}

\begin{corollary}
A commutative connected algebra $\A = \oplus_s n_s s$ with $\FPdim(\A)^2 = \FPdim(\B)$ is a Lagrangian algebra in the unitary modular category $\B$ if and only if the following inequality holds for all $a,b \in \Obj(\B)$:

\begin{equation}
\label{eq:lagrangian-algebra-inequality}
n_a n_b \leq \sum_c N_{ab}^c n_c
\end{equation}

\noindent
where $N_{ab}^c$ are the coefficients given by the fusion rules of $\B$.
\end{corollary}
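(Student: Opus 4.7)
The plan is to deduce this directly from Proposition \ref{separability-prop}. By hypothesis $\A$ is commutative, connected, and satisfies $\FPdim(\A)^2 = \FPdim(\B)$, so three of the four conditions in Definition \ref{lagrangian-algebra-def} already hold; thus $\A$ is Lagrangian if and only if it is separable. By Proposition \ref{separability-prop}, separability is in turn equivalent to the existence, for every pair $a, b \in \Obj(\B)$, of a partial isometry $\Hom(a,\A) \otimes \Hom(b,\A) \to \Hom(a \otimes b, \A)$. Since $\B$ is unitary, these hom-spaces carry canonical Hermitian inner products, and between finite-dimensional Hilbert spaces such a partial isometry exists if and only if the source dimension is at most the target dimension. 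The whole proof therefore collapses to a dimension count.

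Taking $a,b$ to be simple (without loss of generality, since both sides of the proposed inequality are bi-additive in $a$ and $b$), I would apply Schur's lemma to $\A = \bigoplus_s n_s s$ to obtain $\dim_\C \Hom(a, \A) = n_a$ and $\dim_\C \Hom(b, \A) = n_b$, so that the source has dimension $n_a n_b$. Decomposing $a \otimes b = \bigoplus_c N_{ab}^c c$ by the fusion rules similarly yields $\dim_\C \Hom(a \otimes b, \A) = \sum_c N_{ab}^c n_c$. The partial-isometry condition of Proposition \ref{separability-prop} therefore translates exactly into the desired inequality: the forward implication reads off the dimension bound forced by the existence of a partial isometry, while the converse uses that any isometric embedding of a smaller Hilbert space into a larger one is automatically a partial isometry, which then feeds back into Proposition \ref{separability-prop} to produce separability and hence the Lagrangian property.

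There is no serious obstacle here, since all the real work lives in Proposition \ref{separability-prop}. The only point demanding any care is the reduction from arbitrary $a, b \in \Obj(\B)$ to simple objects, which is immediate because $\Hom(-, \A)$ and $\Hom(- \otimes -, \A)$ are additive functors, so both sides of the inequality distribute over direct sum decompositions in $a$ and $b$.
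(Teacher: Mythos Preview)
Your proposal is correct and is exactly the intended derivation: the paper states this result as an immediate, unproved corollary of Proposition~\ref{separability-prop}, and the implicit argument is precisely the dimension count you carry out. The only minor imprecision---shared with the paper's own phrasing---is that ``a partial isometry exists'' is literally always true (the zero map), so what is really being used is that the specific multiplication-induced map $M$ from the proof of Proposition~\ref{separability-prop} is an injective partial isometry (i.e., an isometry), which forces the dimension inequality and conversely.
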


\begin{remark}
We would like to note that the algebra object $\A$ is not enough to uniquely identify the gapped boundary.  Let $G$ be the order$-64$ class $3$ group in Sec. IIIA of \cite{Davydov14}.  The standard Cardy Lagrangian algebra of $\mathcal{Z}(G\oplus G)$ has another different Lagrangian structure given by a soft braided auto-equivalence of $\mathcal{Z}(G)$.
\end{remark}

\subsection{Ground state degeneracy}
\label{sec:algebraic-gsd}

In Section \ref{sec:hamiltonian-gsd-condensation}, we used ribbon operators to present the ground state degeneracy of the Kitaev model with gapped boundaries. In this section, we will present this same degeneracy using the algebraic model we have developed in this section.

Theorems \ref{condensation-products} and \ref{inverse-condensation-products} of Section \ref{sec:bd-excitations} described how an anyon in the bulk can condense to the boundary. Given any anyon $a$ in the bulk, the condensation space of $a$ to a boundary given by the Lagrangian algebra $\A$ can be modeled precisely by the hom-space $\Hom(a,\A)$. Specifically, as discussed in Section \ref{sec:bd-excitations}, the number of condensation channels in condensing to vacuum is equivalent to the number of times the particle $a$ appears in the decomposition of $\A$ into simple objects (obtained using Theorem \ref{inverse-condensation-products} on the boundary vacuum particle); since $\B$ is a unitary fusion category, this is exactly the dimension of the hom-space. As in previous sections, this is expected to hold for theories with a topological order $\B = \mZ(\mC)$, not just $\B = \mZ(\Rep(G))$.

More generally, we also can describe the ground state of the TQFT with many boundaries using hom-spaces. Suppose we have a system with topological order given by the modular tensor category $\B$, and $n$ gapped boundaries given by Lagrangian algebras $\A_1, \A_2, ... \A_n$, as shown in Fig. \ref{fig:algebraic-gsd-n}. The outside boundary is taken to have total charge vacuum; one may alternatively view this as $n$ holes on a sphere.

\begin{figure}
\centering
\includegraphics[width = 0.3\textwidth]{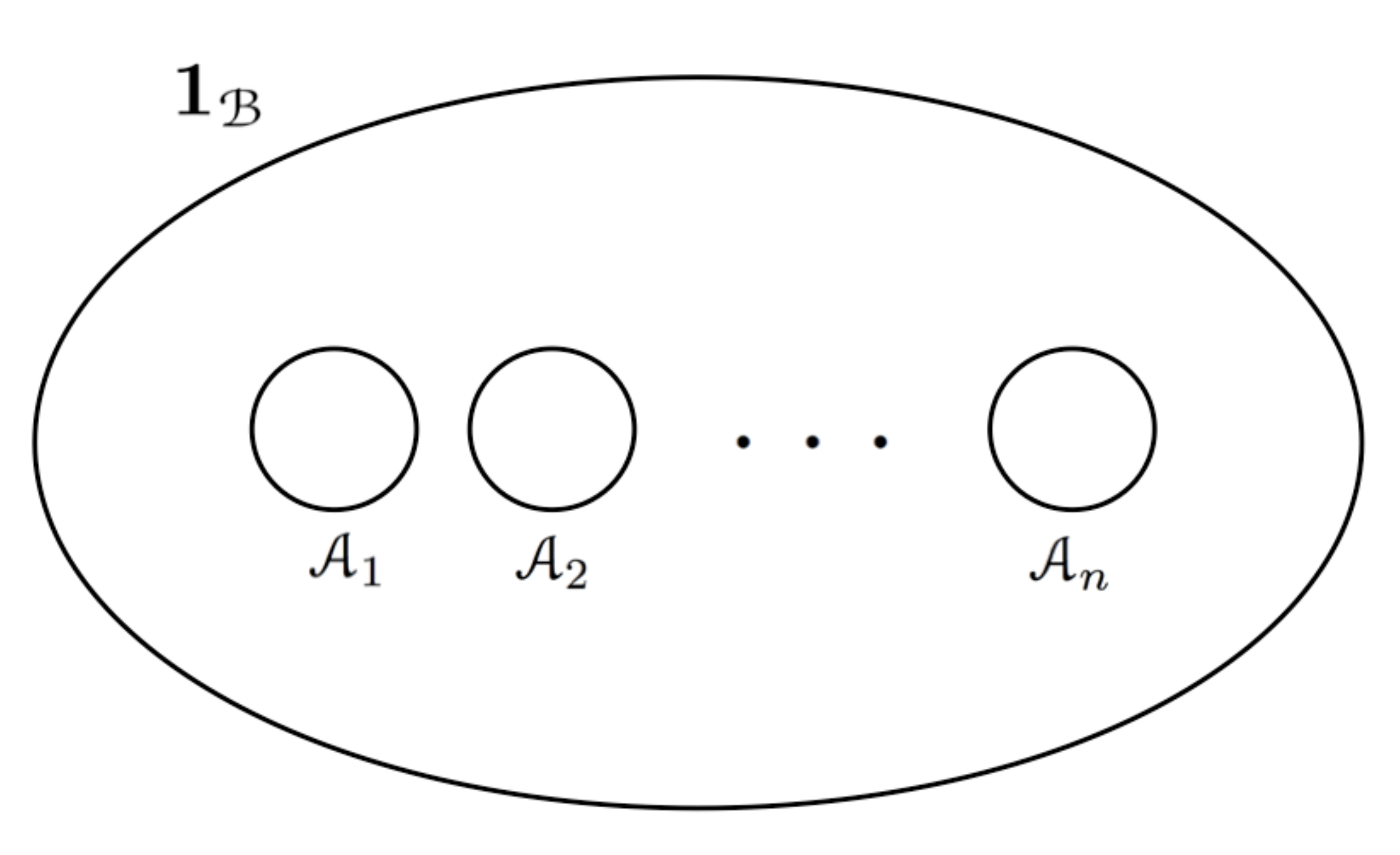}
\caption{Algebraic picture of the TQFT with $n$ gapped boundaries (holes). The outside charge is take to be vacuum, and each boundary type is given by a Lagrangian algebra $\A_i$.}
\label{fig:algebraic-gsd-n}
\end{figure}

The ground state degeneracy of this model is given by the number of ways we can create a pair of anyons from vacuum, admissibly split them into $n$ anyons, and condense all $n$ of them to vacuum onto the boundary, as shown in Fig. \ref{fig:algebraic-gsd-n-2}. Hence, the ground state of this model is given by

\begin{equation}
\label{eq:ground-state-algebraic}
\text{G.S.} = \Hom(\one_\B, \A_1 \otimes \A_2 \otimes ... \otimes \A_n)
\end{equation}

\noindent
where $1_\B$ is the tensor unit of $\B$ and represents the vacuum particle. In particular, as presented in Fig. \ref{fig:algebraic-gsd-n-2}, a basis for this ground state is $\{ (a_1, a_1', ... a_{n-1}', \mu_1, ... \mu_n)\}$. Here, the $a_i, a_i'$ are bulk anyon labels such that the splittings are admissible and the final products may all condense to vacuum on the respective boundaries, and the $\mu_i$ correspond to multiplicities in condensing to vacuum on the boundary (see for instance Theorem \ref{condensation-products} of Section \ref{sec:bd-excitations}, and Section \ref{sec:multiple-condensation-channels}). A straightforward generalization is used in cases with bulk fusion multiplicities.

\begin{figure}
\centering
\includegraphics[width = 0.4\textwidth]{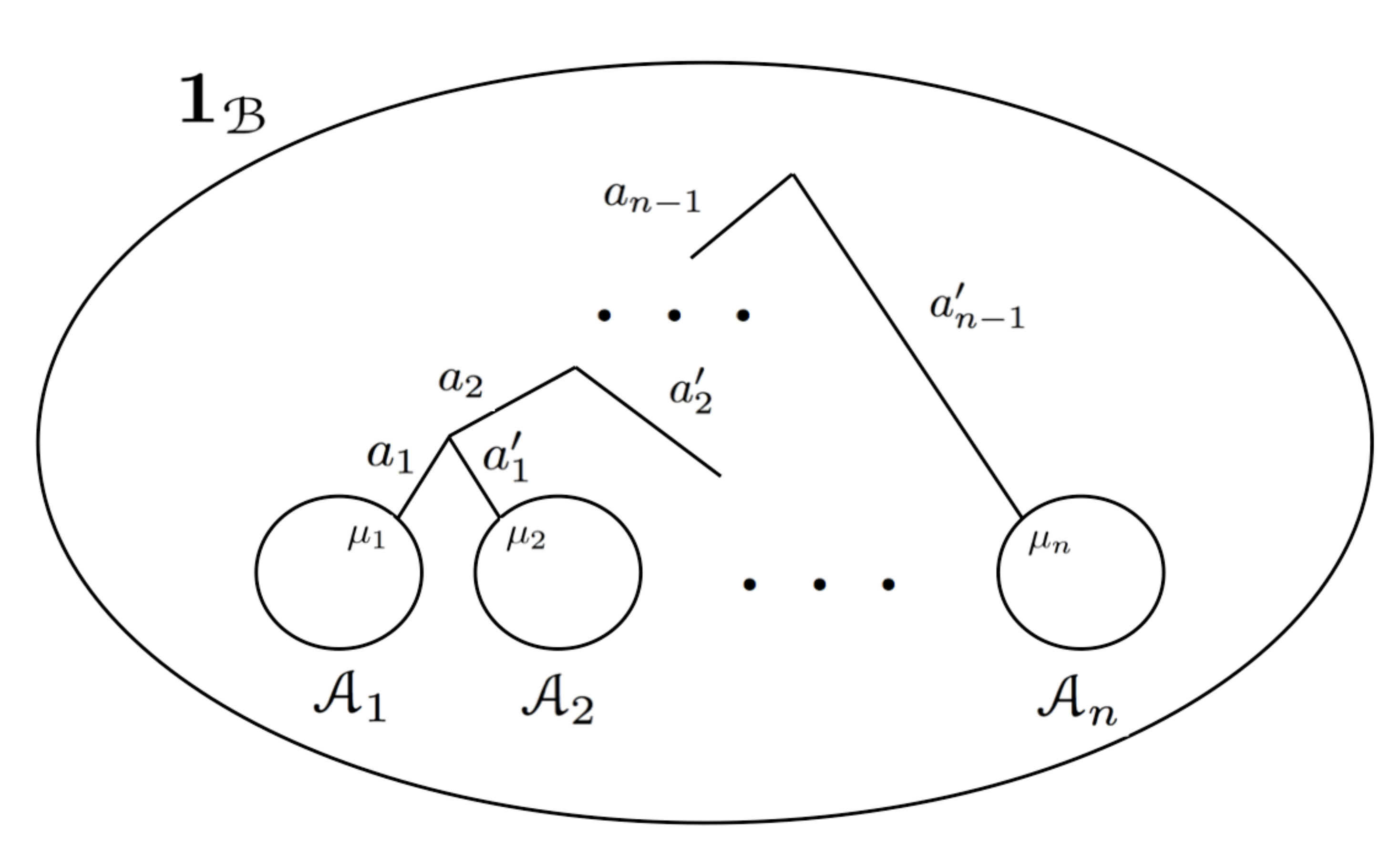}
\caption{Ground state of the model presented in Fig. \ref{fig:algebraic-gsd-n}. We assume here that all edges are directed to point downward. The $a_i$, $a_i'$ are simple bulk anyons such that the splittings are admissible, and the final products may all condense to vacuum on the respective boundaries. The $\mu_i$ correspond to multiplicities in condensing to vacuum on the boundary (see for instance Theorem \ref{condensation-products} of Section \ref{sec:bd-excitations}, and Section \ref{sec:multiple-condensation-channels}). The set of all such tuples $(a_1, a_1', ... a_{n-1}', \mu_1, ... \mu_n)$ form a basis for the ground state.}
\label{fig:algebraic-gsd-n-2}
\end{figure}

As discussed in Ref. \cite{Cong16a}, this ground state degeneracy may be used to encode a qudit for topological quantum computation.

\subsection{Condensation and elementary excitations on the boundary}
\label{sec:algebraic-condensation}

In Section \ref{sec:bd-excitations}, we saw that the elementary excitations on a boundary of subgroup $K$ in the Kitaev model based on group $G$ are given by the irreducible representations of the group-theoretical quasi-Hopf algebra corresponding to $G,K$. In that section, we presented a method using the Hamiltonian to obtain the products of the condensation procedure and its reverse. In this section, we will present this procedure categorically, and show that these two views are actually the same in the case of group models. Let us first make the following definition:

\begin{definition}
\label{quotient-cat-def}
Let $\B$ be a category, and let $\A$ be any object in $\B$. The {\it quotient pre-category} $\B/\A$ is the pre-category such that:
\begin{enumerate}
\item
The objects of $\B/\A$ are the same as the objects of $\B$.
\item
The morphisms of $\B/\A$ are given by
\begin{equation}
\Hom_{\B/\A}(X,Y) = \Hom_{\B}(X,\A \otimes Y).
\end{equation}
\end{enumerate}
\end{definition}

In our case, we would like to consider condensation of an anyon onto a gapped boundary. Here, the elementary excitations on the boundary seem to be given by the simple objects in the quotient pre-category $\widetilde{\mathcal{Q}} = \B / \A$. However, there are problems with the above quotient pre-category for our application: it may not be possible to compose morphisms, and $\widetilde{\mathcal{Q}}$ may not be semisimple. As a result, the following definition/proposition of Ref. \cite{Muger03} is needed to fully describe the condensation products.  Let us first recall that a Frobenius algebra $(\A,m,\Delta,\eta,\epsilon)$ with multiplication $m$, comultiplication $\Delta$, unit $\eta$, and counit $\epsilon$ in a strict tensor category is {\it strongly separable} if there exist nonzero complex numbers $\alpha, \beta$ such that $m \circ \Delta=\alpha \cdot \textrm{id}_\A $ and $\epsilon\circ \eta=\beta \cdot \textrm{id}_1$ (see Definition 2.5 of Ref. \cite{Muger03}).

\begin{definition}
\label{IC-def}
Let $\B$ be a strict braided tensor category and let $\A$ be a strongly separable Frobenius algebra in $\B$. Let $\widetilde{\mQ} = \B/\A$ be the quotient pre-category formed via Definition \ref{quotient-cat-def}. By Proposition 2.15 of Ref. \cite{Muger03}, $\widetilde{\mQ}$ is a tensor category under these conditions, although it may not be semisimple.

 Let us form the canonical idempotent completion ${\mQ}$ of $\widetilde{\mQ}$ as follows:
\begin{enumerate}
\item
The objects of ${\mQ}$ are given by pairs $(X,p)$, where $X \in \Obj \widetilde{\mQ}$ and $p = p^2 \in \End_{\widetilde{\mQ}}(X)$.
\item
The morphisms of ${\mQ}$ are given by
\begin{equation}
\Hom_{{\mQ}}((X,p),(Y,q)) = \{ f \in \Hom_{\widetilde{\mQ}}(X,Y): f \circ p = q \circ f\}.
\end{equation}
\end{enumerate}
Then, by Proposition 2.15 of Ref. \cite{Muger03}, the result ${\mQ}$ is a semisimple tensor category, and it is the desired quotient.

We denote by ${F}:\B \rightarrow \B/\A$ the monoidal functor that sends each object of $\B$ to the corresponding quotient object, and by $I: \B/\A \rightarrow \B$ its right adjoint.
\end{definition}

When $\B = \mZ(\mC)$ is a Drinfeld center describing a topological order, and $\A \in \B$ is a Lagrangian algebra/gapped boundary corresponding to an indecomposable module category $\M$ of $\mC$ (see Theorem \ref{indecomposable-module-lagrangian-algebra}), the resulting quotient is indeed the proposed category $\text{Fun}_\mC(\M,\M)$ of boundary excitations \cite{Muger03}. In general, condensation to a domain wall between two topological phases with topological orders $\B,\mD$ is mathematically described as the procedure

\begin{equation}
\label{eq:condensation-quotient-IC}
F: \mZ(\mC) = \B \xrightarrow{\text{quotient}} \B/\A = \widetilde{\mQ} \xrightarrow{\text{I.C.}}  {\mQ} = \msC \oplus \msD.
\end{equation}

\noindent
(Here, I.C. is the idempotent completion and $\oplus$ is the sum of abelian categories). After condensation to a domain wall, all excitations in $\msC$ become {\it confined} to the domain wall, and all excitations in $\msD$ are {\it deconfined} and can enter the phase $\mD$. Physically, an excitation is said to be confined if there is an energy cost to move the excitation growing linearly with the distance of movement.  Mathematically, $\msD$ is the abelian subcategory of $\mQ$ generated by all simple objects whose double braidings are trivial with respect to the condensate $\A$, and $\msC$ the one generated by simple objects whose double braidings are nontrivial.  In this paper, we consider only cases where $\msD$ is empty.

As mentioned in the introduction, the above condensation functor $F$ corresponds physically to the green triangle operator of Fig. \ref{fig:main-results}.

In the case of gapped boundaries, $\msD = \text{Vec}$ is vacuum, so there are no nontrivial deconfined excitations, and all excitations on the boundary are confined. In the case of Kitaev models for Dijkgraaf-Witten theories, the condensation procedure described in Eq. (\ref{eq:condensation-quotient-IC}) is then exactly the condensation procedure of Theorem \ref{condensation-products} of Section \ref{sec:bd-excitations}. Furthermore, the right adjoint $I$ of this procedure is exactly the inverse condensation procedure of Theorem \ref{inverse-condensation-products}.

In Ref. \cite{KitaevKong}, Kitaev and Kong have claimed that in the Levin-Wen model based on input fusion category $\mC$, the excitations on a boundary given by the indecomposable module $\M$ are given by objects in the fusion category $\Fun_\mC(\M,\M)$. By Ref. \cite{Davydov12}, we know that this category is equivalent to the category ${\mQ}$ obtained through the procedure (\ref{eq:condensation-quotient-IC}). In what follows, we will prove the claim of Ref. \cite{KitaevKong} for the case of Kitaev models for Dijkgraaf-Witten theories.  Related earlier results are obtained in \cite{Fuchs2014,Bark13b,Bark13c}.

In Section \ref{sec:bd-ribbon-operators}, we claimed that the elementary excitations on a boundary of type $K$ in a Kitaev model based on group $G$ are given by pairs $(T,R)$, where $T = K r_T K \in K\backslash G / K$ is a double coset in $G$, and $R$ is an irreducible representation of the group-theoretical quasi-Hopf algebra $\mZ = Z(G,1,K,1)$. We would now like to present the relationship between $Z(G,1,K,1)$ and the group-theoretical category $\mC(G,1,K,1)$ of Ref. \cite{Etingof05}, which is defined as follows:

\begin{definition}
Let $\text{Vec}_G^{\omega}$ be the category of finite-dimensional $G$-graded vector spaces with associativity $\omega$, where $G$ is a finite group and $\omega \in H^3(G, \C^\times)$. Let $K \subseteq G$ be a subgroup of $G$ and $\psi \in H^2(K, \C^\times)$ be a 2-cocycle of $K$ such that $d \psi = \omega |_K$. Let $\text{Vec}_G^{\omega}(K)$ be the subcategory of $\text{Vec}_G^{\omega}$ of objects graded by $K$. The twisted group algebra $A = \C_\psi [K]$ is then an associative algebra in $\text{Vec}_G^{\omega}(K)$. The {\it group-theoretical category} $\mC(G,1,K,1)$ is defined as the category of $(A,A)$-bimodules in $\text{Vec}_G^{\omega}$. In particular, $\mC(G,1,K,1)$ is a fusion category with tensor product $\otimes_A$ and unit object $A$.
\end{definition}

The goal of this section is now to establish the equivalence between $\mC(G,1,K,1)$ and the representation category of $Z(G,1,K,1)$ as fusion categories.

To begin, we state the following theorem:

\begin{theorem}
\label{Z-irreps}
Let $G$ be a finite group, and let $K \subseteq G$ be a subgroup. The irreducible representations of $Z(G,1,K,1)$ are given by pairs $(T,R)$, where $T = K r_T K \in K\backslash G /K$ is a double coset and $R$ is an irreducible representation of the subgroup $K^{r_T} = K \cap r_T K r_T^{-1}$ of $K$.
\end{theorem}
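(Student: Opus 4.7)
The plan is to classify irreducible $\mZ$-modules by forgetting the quasi-Hopf structure (which is irrelevant for parametrizing simple modules) and recognizing $\mZ = Z(G,1,K,1)$, purely as an associative algebra, as the convolution algebra of a finite action groupoid. Once that identification is in hand, the classification reduces, via a Morita equivalence, to the representation theory of ordinary group algebras. Let $\Gamma = K \ltimes (G/K)$ be the action groupoid whose objects are the left cosets $hK$, and whose morphisms $k \colon h'K \to hK$ are those $k \in K$ with $k h' K = h K$, composed by multiplication in $K$.

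First I would define a linear map $\varphi \colon \mZ \to \C[\Gamma]$ on the basis by $Z^{(hK,k)} \mapsto (k^{-1}hK \xrightarrow{k} hK)$; it is manifestly a bijection on basis elements. Against the multiplication rule $Z^{(h_1K,k_1)}Z^{(h_2K,k_2)} = \delta_{h_1K,\,k_1 h_2K}\,Z^{(h_1K,\,k_1k_2)}$, the nonvanishing condition $h_1K = k_1 h_2 K$ is exactly the condition that the source $k_1^{-1}h_1K$ of $\varphi(Z^{(h_1K,k_1)})$ equals the target $h_2K$ of $\varphi(Z^{(h_2K,k_2)})$, and in that case the composition in $\Gamma$ is $(k_2^{-1}h_2K \xrightarrow{k_1k_2} h_1K) = \varphi(Z^{(h_1K,k_1k_2)})$. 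Hence $\varphi$ is an algebra isomorphism.

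Next I would decompose $\Gamma$ into connected components. These correspond to $K$-orbits on $G/K$ under left multiplication, which are in bijection with the double cosets $T = K r_T K \in K\backslash G/K$. The stabilizer of the object $r_T K$ is $\{k \in K : k r_T K = r_T K\} = K \cap r_T K r_T^{-1} = K^{r_T}$. Each connected component is therefore equivalent, as a finite groupoid, to the one-object groupoid $K^{r_T}$, and such an equivalence induces a Morita equivalence of convolution algebras, giving
\[
\mZ \;\simeq_{\mathrm{Mor}}\; \bigoplus_{T \in K\backslash G/K}\, \C[K^{r_T}],
\]
with the $T$-summand on the left identified with $M_{[K:K^{r_T}]}(\C[K^{r_T}])$. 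The irreducible modules of the right-hand side are precisely the pairs $(T,R)$ with $R \in (K^{r_T})_{\mathrm{ir}}$, and the Morita equivalence transports this classification back to $\mZ$.

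The only step requiring real care is the matching of multiplications under $\varphi$, which I expect to be the main piece of (routine) bookkeeping with cosets and stabilizers; everything after that is a direct application of the standard classification of representations of a finite direct sum of finite group algebras. As a consistency check, the simple $\mZ$-module corresponding to $(T,R)$ has dimension $[K:K^{r_T}]\cdot\dim R$, which agrees both with the Fourier basis in \eqref{eq:elementary-bd-ribbon-basis} and with the quantum dimension $\FPdim(T,R) = |Q|\dim R$ computed earlier, so no conceptual obstacle is anticipated.
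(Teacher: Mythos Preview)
Your proposal is correct. The paper's own treatment of this theorem consists solely of a citation to Refs.~\cite{Zhu01} and~\cite{Schauenburg02}, so there is no in-paper argument to compare against directly. Your approach---recognizing $\mZ$ purely as an associative algebra as the convolution algebra of the action groupoid $K \ltimes (G/K)$, decomposing by connected components (the double cosets), and invoking the Morita equivalence between a connected groupoid algebra and its vertex-group algebra---is a clean, self-contained route that sidesteps any quasi-Hopf machinery. The verification that $\varphi$ is an algebra map is correct as written: the composability condition in $\Gamma$ is exactly the Kronecker delta $h_1K = k_1 h_2 K$, and the sources match because $(k_1k_2)^{-1} h_1 K = k_2^{-1} k_1^{-1} h_1 K = k_2^{-1} h_2 K$ under that constraint. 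What this buys you over the cited references is conceptual transparency and independence from the particular quasi-Hopf presentation; what it does not provide is any information about the monoidal structure on $\Rep(\mZ)$, but that is not required for the bare classification statement.
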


\begin{proof}
See Refs. \cite{Zhu01} and \cite{Schauenburg02}.
\end{proof}

By Ref. \cite{Gelaki07}, have the following theorem:

\begin{theorem}
The pairs $(T,R)$, as described in Theorem \ref{Z-irreps}, are in one-to-one correspondence with the simple objects in the group-theoretical category $\mC(G,1,K,1)$.
\end{theorem}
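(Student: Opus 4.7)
The plan is to establish the bijection directly from the definition of $\mC(G,1,K,1)$ as the category of $(A,A)$-bimodules in $\text{Vec}_G$, where $A=\C[K]$ (both cocycles are trivial here). Since simple objects in the bimodule category are what we need to count, the strategy is a Mackey-style decomposition by double cosets, followed by identifying the bimodules supported on a single double coset with representations of the appropriate stabilizer subgroup.

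First, I would observe that any $(A,A)$-bimodule $V=\bigoplus_{g\in G}V_g$ in $\text{Vec}_G$ has the property that the left $K$-action sends $V_g$ to $V_{kg}$ and the right $K$-action sends $V_g$ to $V_{gk'}$. Consequently, the $K$-bi-orbits on the grading decompose $V$ as a direct sum $V=\bigoplus_{T\in K\backslash G/K}V_T$, where $V_T=\bigoplus_{g\in T}V_g$ is an $(A,A)$-sub-bimodule. Hence every simple object is supported on a single double coset, and it suffices to classify simple bimodules supported on a fixed $T=Kr_TK$.

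Next, fix a representative $r_T\in T$ and consider the functor $\Phi\colon V\mapsto V_{r_T}$ from $(A,A)$-bimodules supported on $T$ to vector spaces. I would equip $V_{r_T}$ with an action of $K^{r_T}=K\cap r_TKr_T^{-1}$ as follows: for $h\in K^{r_T}$, write $h=r_Th'r_T^{-1}$ with $h'\in K$ (uniquely determined since $r_T$ has trivial centralizer action on itself), and define $h\cdot v := h\cdot v\cdot (h')^{-1}$ using the bimodule structure. One checks this lands in $V_{r_T}$ and defines a genuine $K^{r_T}$-representation. In the other direction, given a $K^{r_T}$-representation $R$, I would construct an $(A,A)$-bimodule by the induction-like formula
\begin{equation}
\Psi(R)=\bigoplus_{g\in T}R_g,
\end{equation}
where $R_g\cong R$ as vector spaces, with bimodule actions prescribed by the bijection $T\cong K\times_{K^{r_T}}K$ (using the partition $T=\bigsqcup_i Ks_i$ from Theorem \ref{bd-anyon-types}). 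The hard part is verifying that $\Phi$ and $\Psi$ are mutually inverse equivalences of categories; the bimodule associativity ensures the $K^{r_T}$-action on $V_{r_T}$ is well-defined, and the freeness of $T$ as a $(K,K)$-biset (with stabilizer $K^{r_T}$) ensures $\Psi$ produces a bimodule that $\Phi$ recovers faithfully.

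Finally, since $\Phi$ is an equivalence, simple $(A,A)$-bimodules supported on $T$ correspond bijectively to irreducible representations of $K^{r_T}$. Combining over all double cosets $T\in K\backslash G/K$ yields the desired bijection between simple objects of $\mC(G,1,K,1)$ and pairs $(T,R)$ with $R\in (K^{r_T})_{\text{ir}}$, matching exactly the parametrization of irreducible $Z(G,1,K,1)$-representations in Theorem \ref{Z-irreps}. The main technical step is the Mackey-type equivalence in the previous paragraph; once that is in hand, the statement follows by summing over double cosets. Alternatively, one may appeal to the general result of \cite{Ostrik03,Gelaki07} that simple objects of any group-theoretical fusion category $\mC(G,\omega,K,\psi)$ are classified this way, but the direct argument above makes the correspondence with $(T,R)$ manifest.
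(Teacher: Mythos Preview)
Your argument is correct. The paper itself does not give a proof of this theorem at all; it simply attributes the result to Ref.~\cite{Gelaki07} (Gelaki--Naidu) and states it without further comment. What you have written is precisely the direct Mackey-style classification that underlies that reference: decompose a $(\C[K],\C[K])$-bimodule in $\text{Vec}_G$ by its support in $K\backslash G/K$, then identify simple bimodules supported on a single double coset $T$ with irreducible representations of the stabilizer $K^{r_T}$ via the fiber functor $V\mapsto V_{r_T}$. Your sketch is essentially the content of the cited result specialized to trivial cocycles, and you even note the citation as an alternative at the end. Compared to the paper's bare reference, your approach has the advantage of making the correspondence with the $(T,R)$ parametrization explicit and self-contained; the only cost is that the verification that $\Phi$ and $\Psi$ are inverse equivalences (which you flag as ``the hard part'') requires some bookkeeping with coset representatives that you have left implicit.
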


The above theorem implies that the elementary excitations on the boundary in the group-theoretical case are indeed given by the simple objects in the fusion category $\Fun_\mC(\M,\M)$.

Finally, Refs. \cite{Zhu01,Schauenburg02} show the equivalence of $\Rep(Z(G,1,K,1))$ and $\mC(G,1,K,1)$ as fusion categories:

\begin{theorem}
The representation category of the group-theoretical quasi-Hopf algebra $Z(G,1,K,1)$ (or equivalently, the representation category of the coquasi-Hopf algebra $Y(G,1,K,1)$) is equivalent as a fusion category to the group-theoretical category $\mC(G,1,K,1)$.
\end{theorem}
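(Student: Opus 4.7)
The plan is to establish the equivalence by constructing an explicit tensor functor $\Phi : \Rep(Z(G,1,K,1)) \to \mC(G,1,K,1)$ and a quasi-inverse, in the spirit of the approach used by Zhu \cite{Zhu01} and Schauenburg \cite{Schauenburg02}. Since both sides are known to be semisimple fusion categories with the same simple objects (by the preceding theorem and the classification in \cite{Gelaki07}), the bulk of the work is in matching the monoidal structures, not in counting objects.

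First I would define $\Phi$ on objects. Given a module $V$ over $Z(G,1,K,1)$, the idempotents $B^{hK}$ are mutually orthogonal and sum to the identity, giving a $G/K$-grading $V = \oplus_{hK \in G/K} V_{hK}$. Fixing coset representatives $R \subset G$, the operators $A^k$ for $k \in K$ endow $V$ with a compatible left $K$-action. To produce a $G$-graded vector space with commuting left and right $A = \C[K]$-actions, I would set
\begin{equation}
\Phi(V)_g := A^{\{g\}^{-1}}\, V_{r(g)K}, \qquad g = r(g)\{g\}^{-1}, \ r(g) \in R, \ \{g\} \in K,
\end{equation}
with the left $A$-action given by the $A^k$'s and the right $A$-action transported through the $G$-grading. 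On morphisms $\Phi$ acts as the identity on the underlying linear map. A quasi-inverse $\Psi$ is constructed by starting from an $(A,A)$-bimodule $M \in \text{Vec}_G$, taking its underlying vector space, and reading off a $Z$-action from the grading (for the $B^{hK}$'s) and the left $K$-action (for the $A^k$'s); the key point is that the $(A,A)$-bimodule axioms force exactly the multiplication and comultiplication relations of $Z$ presented in Section 2.3.3.

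Next I would equip $\Phi$ with a tensor structure. The tensor product in $\Rep(Z)$ is given by the comultiplication $\Delta$, while in $\mC(G,1,K,1)$ it is $\otimes_A$ over the twisted group algebra. I would produce a natural isomorphism $J_{V,W}: \Phi(V) \otimes_A \Phi(W) \xrightarrow{\sim} \Phi(V \otimes W)$ by using the explicit formula for $\Delta(Z^{(hK,k)})$ to descend the ordinary tensor product modulo the balancing relations of $\otimes_A$, and verify naturality in $V,W$.

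The main obstacle will be the hexagon/pentagon verification, namely showing that the Drinfeld associator
\begin{equation}
\Phi_{\text{Dr}} = \sum_{h_1,h_2,h_3 \in R} Z^{(h_1K,1)} \otimes Z^{(h_2K,1)} \otimes Z^{(h_3K,\{h_1 h_2\})}
\end{equation}
of $Z(G,1,K,1)$ is intertwined by $J$ with the associator of $(A,A)$-bimodules in $\text{Vec}_G$. Because we are in the untwisted case ($\omega = 1$, $\psi = 1$), the bimodule associator is essentially given by the reassociation of $G$-gradings composed with a rebracketing of $\otimes_A$; unfolding both sides on a basis adapted to $R$ and $K$ reduces the identity to the elementary group-theoretic statement that $\{h_1\}\{h_2\}$ and $\{h_1 h_2\}$ differ by an element that the projectors $Z^{(h_3K,\cdot)}$ recognize correctly. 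Once this compatibility, together with unitality and the counit/antipode matching, is checked, standard arguments give that $(\Phi, J)$ is a tensor functor, and the explicit $\Psi$ shows it is an equivalence.
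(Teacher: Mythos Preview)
The paper does not actually give a proof of this theorem: its entire argument is the sentence ``See Refs.\ \cite{Zhu01} and \cite{Schauenburg02}.'' Your proposal is therefore not to be compared against an argument in the paper but against the cited literature, and in that respect your sketch is faithful to the strategy of Zhu and Schauenburg: realize a $Z(G,1,K,1)$-module as a $G/K$-graded $K$-module via the commuting idempotents $B^{hK}$ and the operators $A^k$, inflate to a $G$-graded $(\C[K],\C[K])$-bimodule, and then verify that the Drinfeld associator of $Z$ matches the rebracketing in $\otimes_{\C[K]}$.

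One point to tighten: your formula $\Phi(V)_g := A^{\{g\}^{-1}}\,V_{r(g)K}$ does not define an internal decomposition of $V$, since the images $A^k V_{rK}$ for varying $k$ overlap inside $V$. What you are really doing is an external induction, i.e.\ $\Phi(V) \cong \C[K]\otimes V$ with the $G$-grading determined by the coset grading together with the $K$-translate; it would be cleaner to say this and then check that the left and right $\C[K]$-actions descend correctly. Once that is made precise, the associator check you outline is indeed the crux, and in the untwisted case it reduces, as you say, to the bookkeeping identity relating $\{h_1 h_2\}$ to $\{h_1\}$, $\{h_2\}$ and the chosen representatives. With that adjustment your proposal is a correct expansion of what the paper merely cites.
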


\begin{proof}
See Refs. \cite{Zhu01} and \cite{Schauenburg02}.
\end{proof}

This shows that the ``bordered topological order'' introduced in Section \ref{sec:bd-excitations} is indeed given by the fusion category $\Fun_\mC(\M,\M)$.

\subsection{$M$ symbols}
\label{sec:m-symbols}

The $F-6j$ symbols for fusion categories are numerical data that encode the associativity of the tensor products. They satisfy the pentagons to be coherent and depend on gauge choices.  Though hard to find and difficult to work with, they are indispensable for applications of tensor categories to topological phases of matter and topological quantum computation.  In the following, we generalize $F-6j$ symbols to $M-6j$ symbols to encode numerically the associativity of the condensation functor of bulk anyons to the boundary.  

Currently, there are no general software packages to solve for $M-6j$ or $M-3j$ symbols, but they are needed for the application to topological quantum computation with gapped boundaries (as seen in \cite{Cong16a}).  The few $M-3j$ symbols in this section are calculated by hand analytically.

The $M$ symbols might have appeared in many other contexts before.  If a gapped boundary is modeled by an indecomposable module category, then the $M-3j$ symbols should encode the associativity of the categorical action on the module category.  It would be interesting to find the precise relations of $M-3j$  symbols with some known symbols in the literature such as \lq\lq boundary $3j$ symbols" in Section $3$ of \cite{Pet01}, \lq\lq mixed $F 6j$ symbols"  in Section $4.1$ of \cite{Fuchs02}.  Our Eq. (3.13) for $M-3j$ symbols is called \lq\lq mixed pentagons"  in the literature, e.g. as Eq. (3.7) of \cite{Pet01} and Eq. (3.12) of \cite{Fuchs02}.  Similarly, there might be a possible relation of the $M-6j$ symbols with the \lq\lq bimodule $F$ symbols"  in \cite{Fuchs05}.

\subsubsection{The $M$-3$j$ symbol}
\label{sec:m3j}

Thus far, we have discussed the condensation of a single bulk anyon $a$ into a gapped boundary $\A$ in a system with topological order, and the tunneling of a single anyon from one gapped boundary to another. In this section, we consider the case where multiple bulk anyons all condense into the boundary, and present the resulting relations that govern the commutativity of bulk anyon fusion and condensation.

\begin{figure}
\centering
\includegraphics[width = 0.4\textwidth]{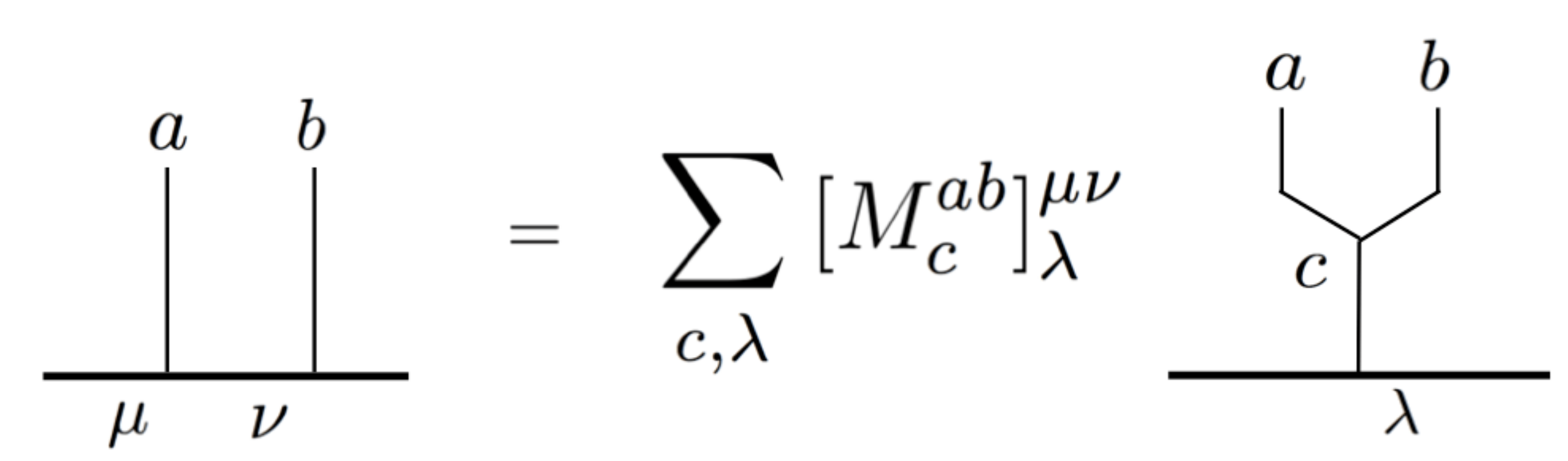}
\caption{Definition of the $M$-3$j$ symbol.}
\label{fig:m-3j}
\end{figure}

In a system with topological order, when three anyons fuse in the bulk, $F-6j$ symbols $F^{abc}_{d;ef}$ describe the associativity in the order of fusion. These 6$j$ symbols must satisfy certain pentagon and hexagon relations, corresponding to the pentagon and hexagon commutative diagrams for a modular tensor category. Similar associativity and braiding rules exist for the $M$ symbols, as we shall discuss below.

Let us first consider a relatively simple case, when the bulk anyons all condense to vacuum on the boundary. In Proposition \ref{separability-prop}, we showed that for any two anyons $a,b$ that can condense to vacuum on the boundary $\A$, there exists an injection $M$ from $\Hom(a,\A) \otimes \Hom(b,\A)$ to $\Hom(a \otimes b, \A)$. Physically, the $M$ operator corresponds to fusing the anyons $a,b$ in the bulk first, and then condensing to the boundary. The action of the $M$ operator is shown in Fig. \ref{fig:m-3j}.

In this case, each $M$ symbol has three topological charge indices, given by two original bulk anyons $a,b$ that condense to the boundary, and a third bulk anyon $c$ that results from the fusion of $a,b$ in the bulk. Furthermore, there will be local indices $\mu,\nu,\lambda$ corresponding to the condensation channels when $a,b,c$ condense to the boundary, respectively.

These $M$-3$j$ symbols are quite similar to the $\theta$-3$j$ symbols in a fusion category. If we start with three anyons in the bulk, the following pentagon diagram must commute:

\begin{equation}
\label{eq:m-3j-pentagon-fig}
\vcenter{\hbox{\includegraphics[width = 0.65\textwidth]{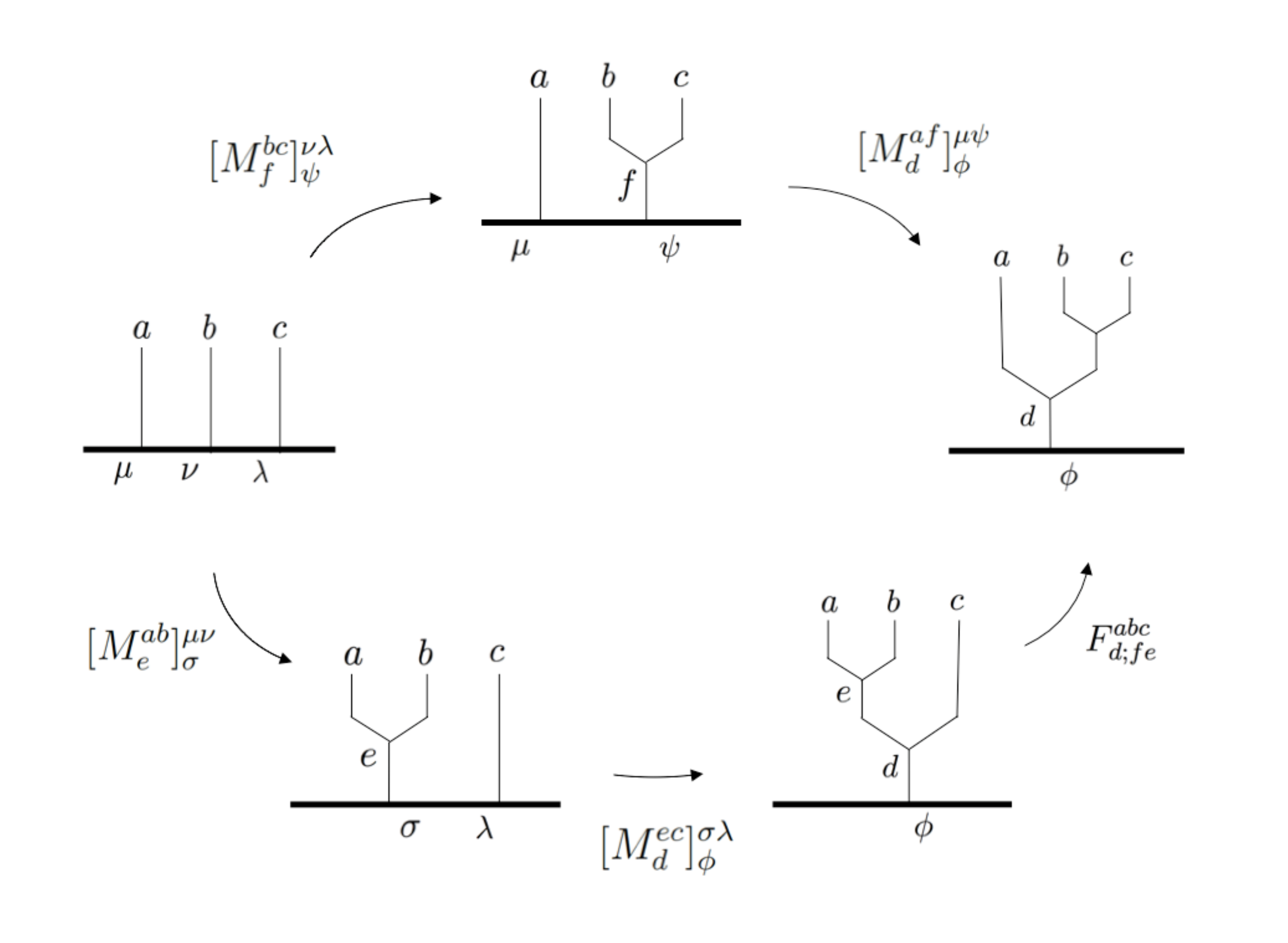}}}
\end{equation}

Algebraically, we can write Eq. \ref{eq:m-3j-pentagon-fig} as\footnote{In this and all associativity/braid relations that follow, we have assumed for simplicity of presentation that the anyon model has no fusion multiplicities. This is true in all of our examples, but the generalization is obvious.}:

\begin{equation}
\label{eq:m-3j-pentagon}
\sum_{e,\sigma} [M^{ab}_{e}]^{\mu\nu}_{\sigma} [M^{ec}_{d}]^{\sigma\lambda}_{\phi}F^{abc}_{d;fe} = \sum_\psi [M^{bc}_{f}]^{\nu\lambda}_{\psi} [M^{af}_{d}]^{\mu\psi}_{\phi}
\end{equation}

We note that the $M$ symbols will have gauge degrees of freedom, originating from the choice of basis for the condensation channels of each particle $a$. Specifically, we can define a unitary transformation $\Gamma^a_{\mu\nu}$ on the condensation space $V_a$: $\widetilde{\ket{a;\mu}} = \Gamma^a_{\mu\nu} \ket{a;\mu}$. These transformations yield new $M$ symbols, which are related to the original ones by the relation

\begin{equation}
\label{eq:m-3j-gauge-freedom}
[\tilde{M}^{ab}_c]^{\mu\nu}_\lambda =
\sum_{\mu',\nu',\lambda'} \Gamma^{a}_{\mu\mu'} \Gamma^{b}_{\nu\nu'}
[M^{ab}_{c}]^{\mu'\nu'}_{\lambda'}
[\Gamma^c]^{-1}_{\lambda\lambda'}
\end{equation}

The $M$ symbols will also be affected by gauge transformations of the bulk fusion space in the case of bulk fusion multiplicities.

Furthermore, when two bulk anyons $a,b$ condense to vacuum on the boundary, by the commutativity of the Frobenius algebra $\A$, it does not matter what order they condense in. Diagrammatically, this corresponds to

\begin{equation}
\label{eq:m-3j-braid-fig}
\vcenter{\hbox{\includegraphics[width = 0.4\textwidth]{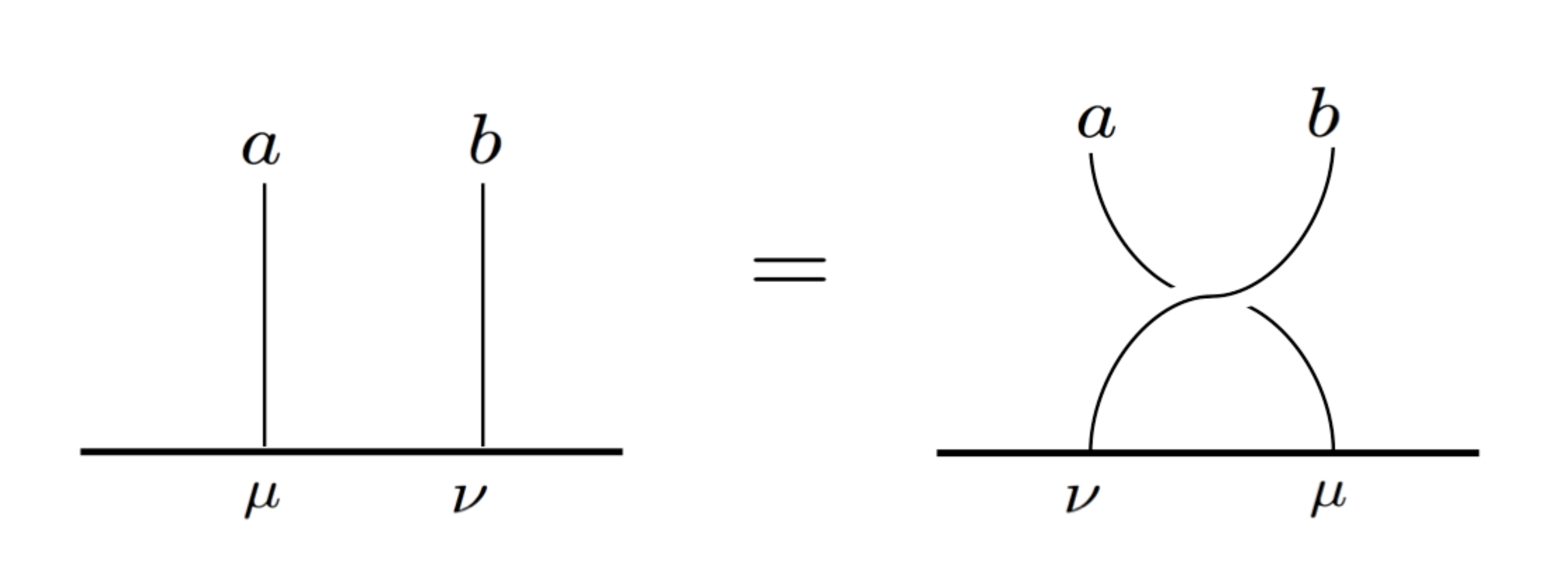}}}
\end{equation}

\noindent
This gives the following equation:

\begin{equation}
\label{eq:m-3j-braid}
\sum_c [M^{ba}_{c}]^{\nu\mu}_{\lambda} R^{ab}_c = \sum_c [M^{ab}_{c}]^{\mu\nu}_{\lambda}
\end{equation}

\noindent
Here, the sum is taken over all $c$ such that $\Hom(c,\A) \neq 0$.

Finally, we define some normalization conditions on these $M$-3$j$ symbols:

\begin{equation}
\label{eq:m-3j-normalization-1}
[M^{a\overbar{a}}_1]^{\mu\nu} = \frac{\delta_{\mu\nu}}{\sqrt{\textrm{FPdim}(\A)}}
\end{equation}

\begin{equation}
\label{eq:m-3j-normalization-2}
[M^{1a}_a]^\mu_\nu = [M^{a1}_a]^\mu_\nu = \delta_{\mu\nu}
\end{equation}

These normalizations are chosen so that $M$ becomes a partial isometry when we have the following basis vectors for the vector spaces $\Hom(a,\A)$, $\Hom(b,\A)$ and $\Hom(a \otimes b, \A)$:

\begin{equation}
\label{eq:m-3j-normalization-fig-1}
\vcenter{\hbox{\includegraphics[width = 0.45\textwidth]{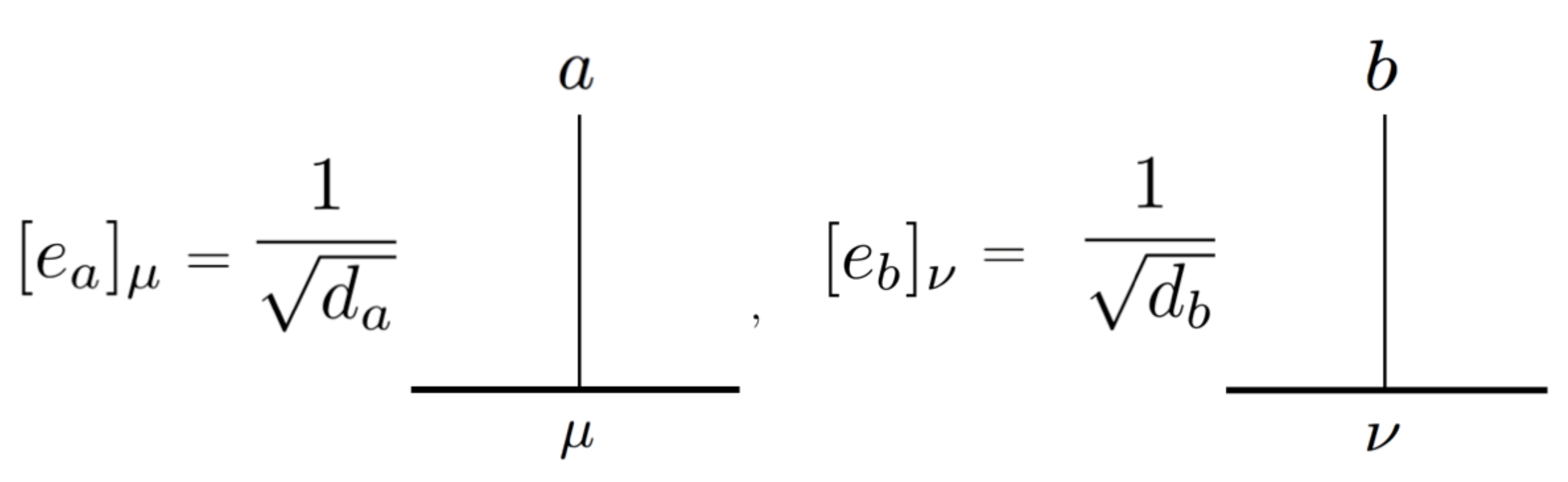}}}
\end{equation}

\begin{equation}
\label{eq:m-3j-normalization-fig-2}
\vcenter{\hbox{\includegraphics[width = 0.3\textwidth]{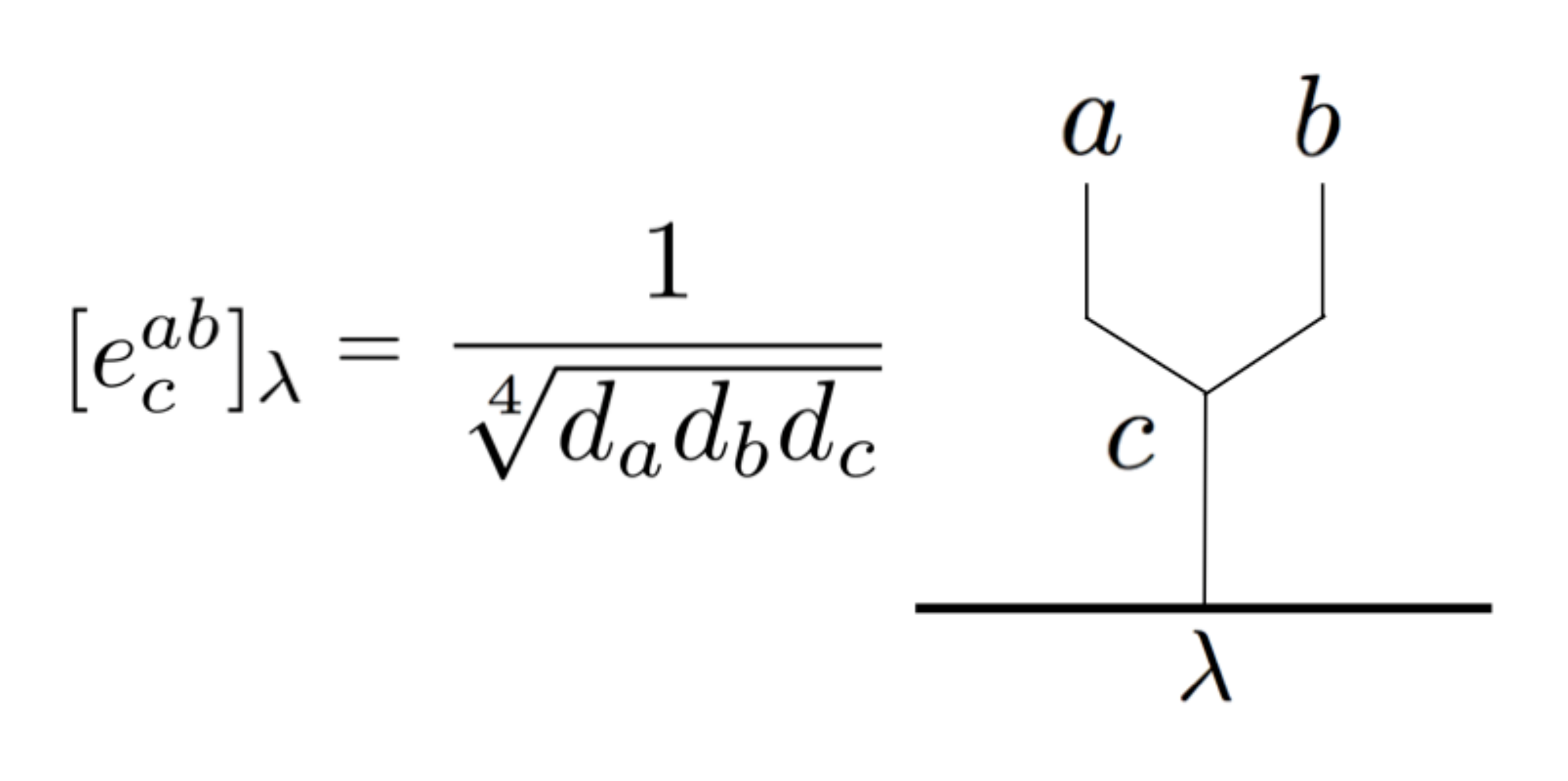}}}
\end{equation}

\noindent
The basis vectors for $\Hom(a,\A) \otimes \Hom(b,\A)$ are simply $[e_a]_\mu \otimes [e_b]_\nu$. These basis vectors are chosen because the following traces evaluate to 1, to provide orthonormal bases:

\begin{equation}
\vcenter{\hbox{\includegraphics[width = 0.22\textwidth]{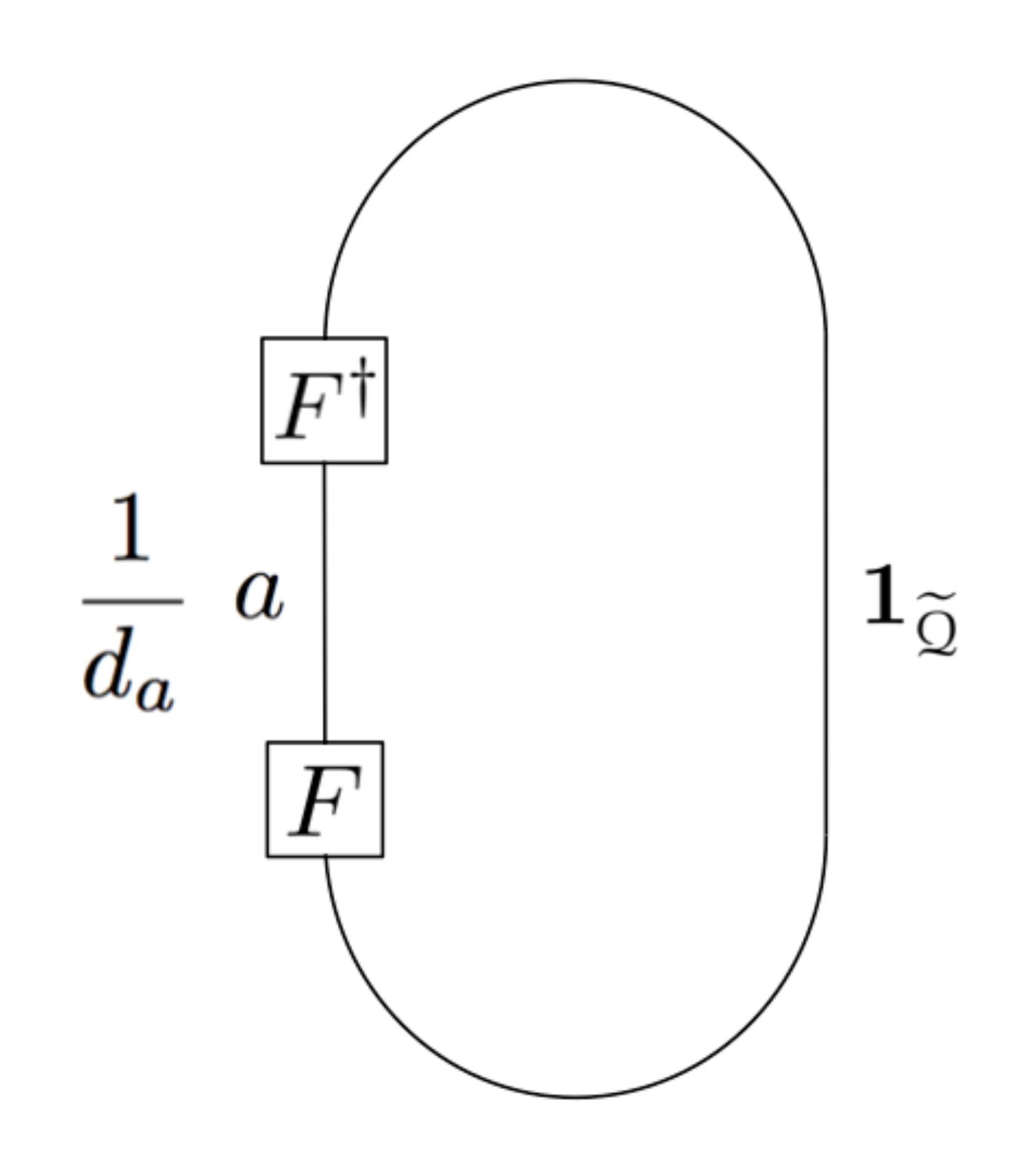}}} = 1,
\qquad
\vcenter{\hbox{\includegraphics[width = 0.28\textwidth]{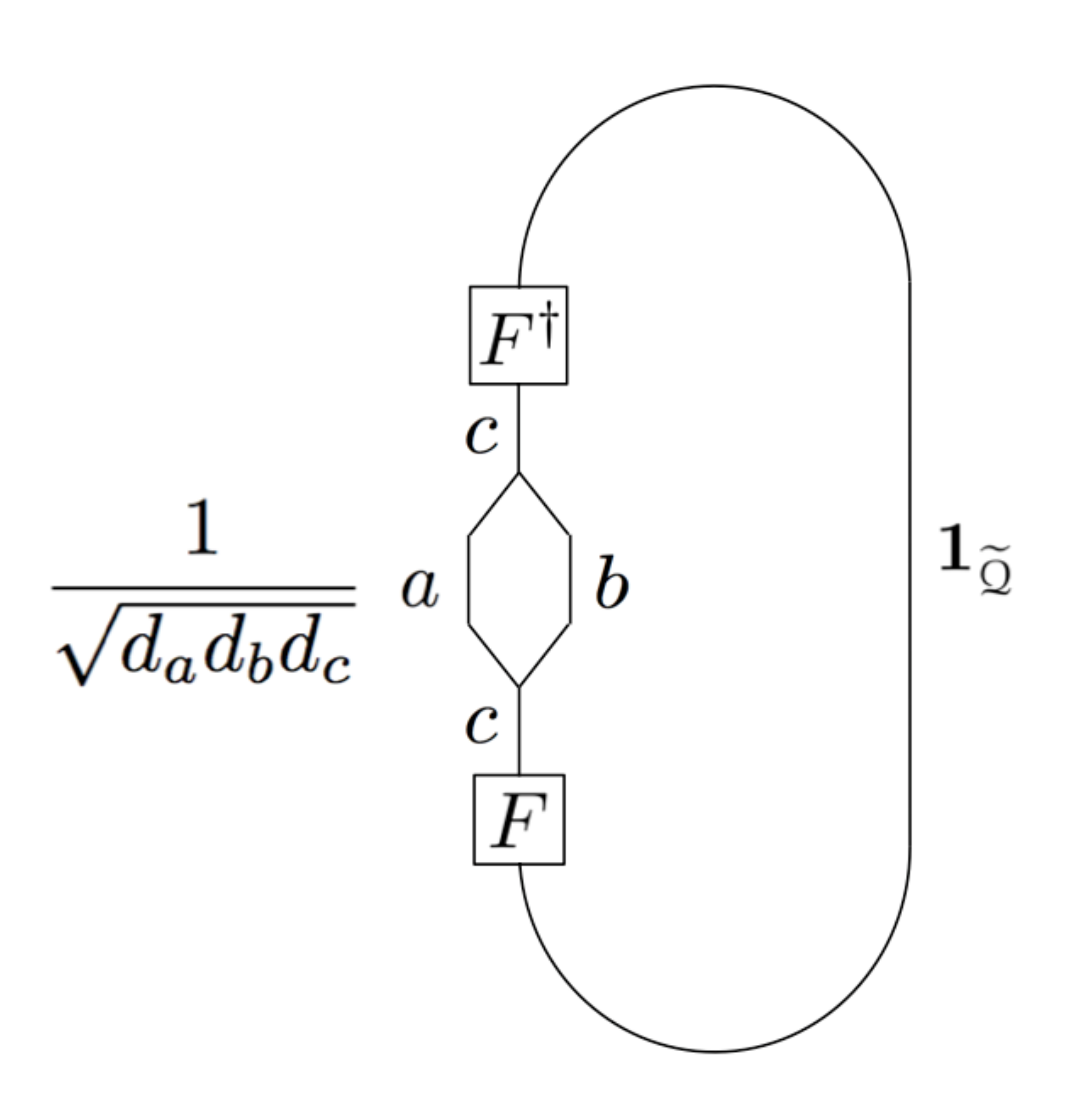}}} = 1
\end{equation}

\subsubsection{The $M$-6$j$ symbol}
\label{sec:m6j}

In the above discussion, we have considered only a special case, where all bulk anyons condense to vacuum on the boundary. More generally, we may consider a case where the bulk anyons become excitations on the boundary, as in Section \ref{sec:bd-excitations}. Here, we will define a $M$-6$j$ symbol, with six topological charge indices, whose action is shown in Fig. \ref{fig:m-6j}. We note that our $M-6j$ symbol is a symmetric version of the vertex lifting coefficients introduced in Ref. \cite{Eliens13}.

\begin{figure}
\centering
\includegraphics[width = 0.4\textwidth]{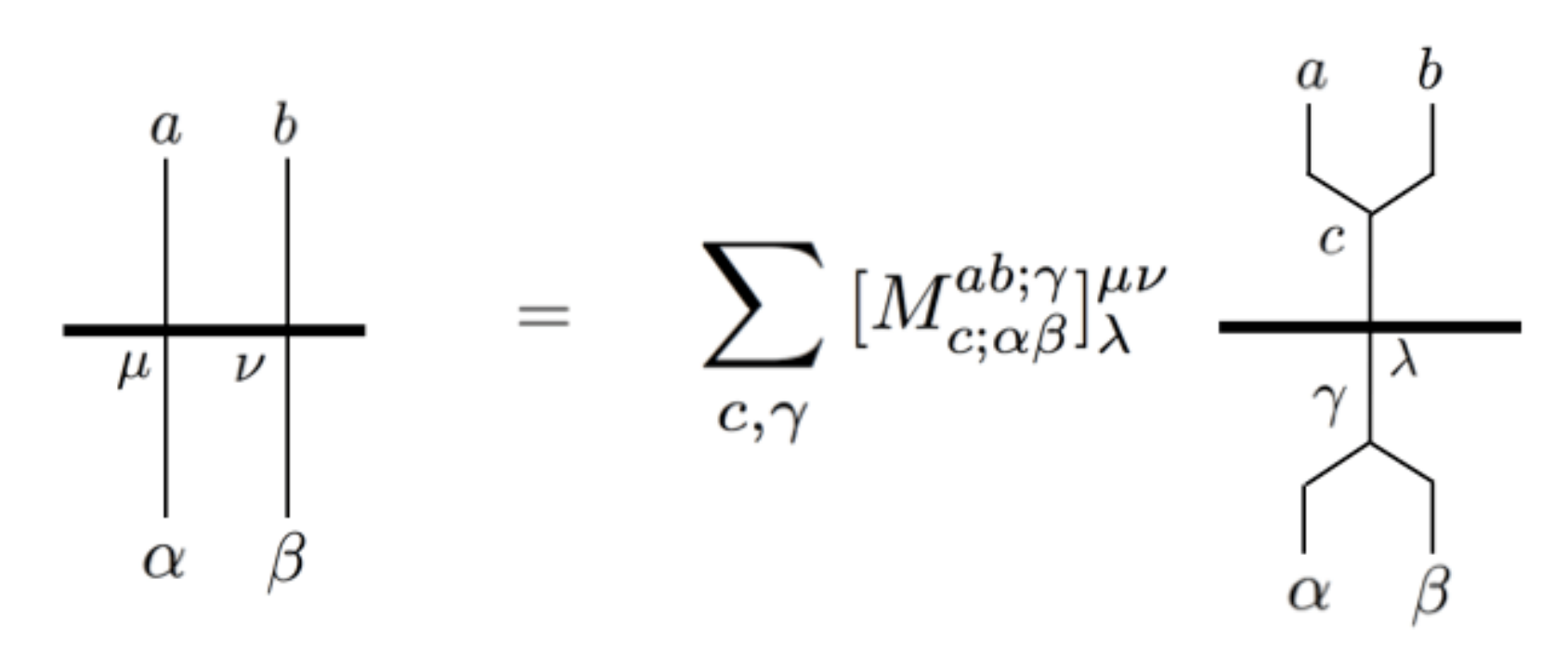}
\caption{Definition of the $M$-6$j$ symbol}
\label{fig:m-6j}
\end{figure}

The $M$-6$j$ symbol is indexed by bulk anyons $a,b,c$, and the boundary excitations $\alpha,\beta,\gamma$ they condense to. As before, we will also have condensation channel labels $\mu,\nu,\lambda$ for the multiplicity corresponding to the dimension of $\Hom(a,I(\alpha))$, etc.

As in the case of $M$-3$j$ symbols, these symbols must also satisfy a pentagon associativity relation. However, this relation will also depend on the $F-6j$ symbols of the fusion category $\Fun_\mC(\M, \M)$. The associativity is hence given by the following commuting pentagon:

\begin{equation}
\label{eq:m-6j-pentagon-fig}
\vcenter{\hbox{\includegraphics[width = 0.65\textwidth]{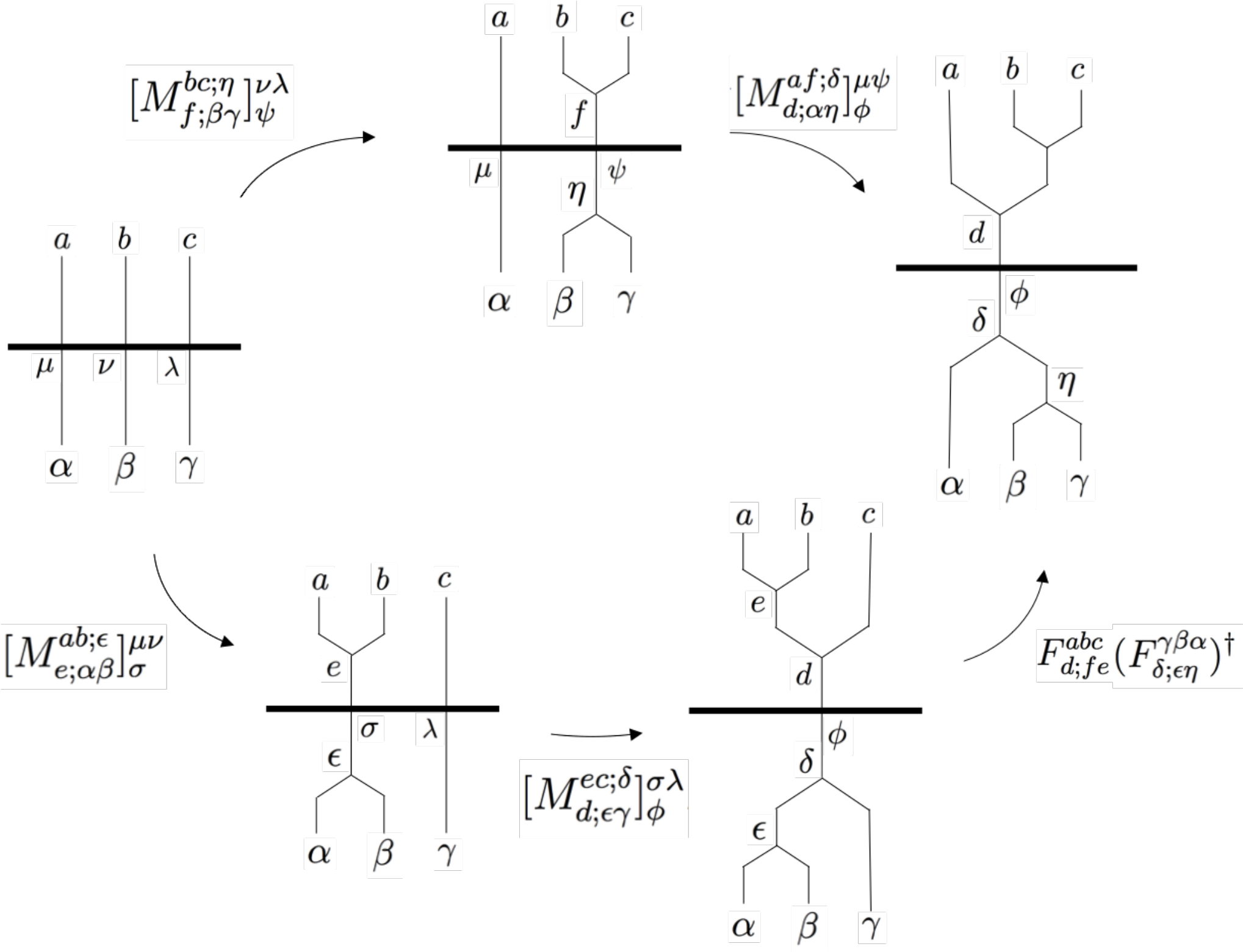}}}
\end{equation}

This is equivalent to the equation

\begin{equation}
\label{eq:m-6j-pentagon}
\sum_{e,\sigma,\epsilon} [M^{ab;\epsilon}_{e;\alpha\beta}]^{\mu\nu}_{\sigma} [M^{ec;\delta}_{d;\epsilon\gamma}]^{\sigma\lambda}_{\phi}F^{abc}_{d;fe} (F^{\gamma\beta\alpha}_{\delta;\epsilon\eta})^\dagger = \sum_\psi [M^{bc;\eta}_{f;\beta\gamma}]^{\nu\lambda}_{\psi} [M^{af;\delta}_{d;\alpha\eta}]^{\mu\psi}_{\phi}
\end{equation}

Similarly, the braiding relation is now given by the diagram

\begin{equation}
\label{eq:m-6j-braid-fig}
\vcenter{\hbox{\includegraphics[width = 0.4\textwidth]{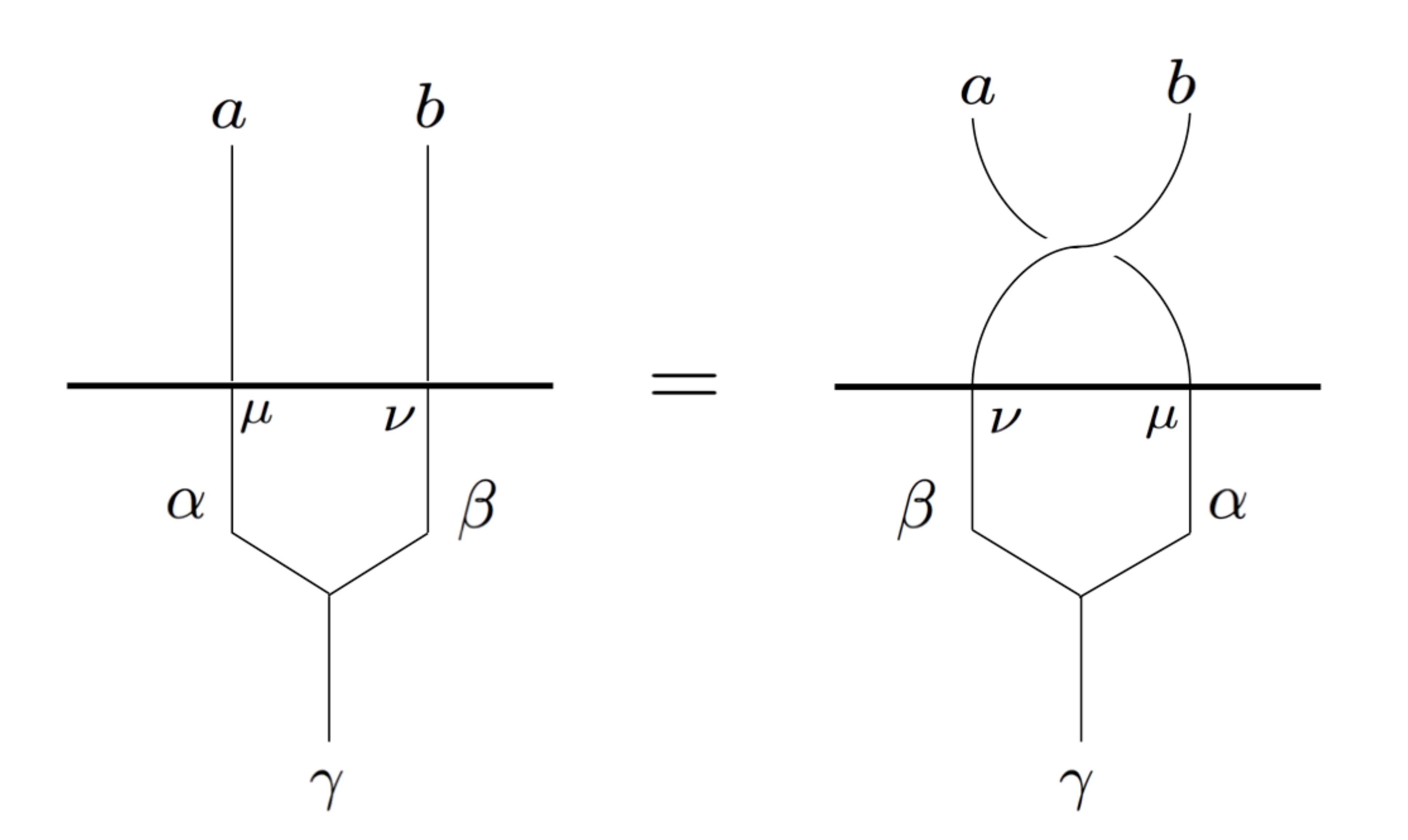}}}
\end{equation}

\noindent
which is equivalent to the equation

\begin{equation}
\label{eq:m-6j-braid}
\sum_c [M^{ba;\gamma}_{c;\beta\alpha}]^{\nu\mu}_{\lambda} R^{ab}_c = \sum_c [M^{ab;\gamma}_{c;\alpha\beta}]^{\mu\nu}_{\lambda}.
\end{equation}

\noindent
The sum is taken over all simple objects $c$ such that $\Hom(c,I(\gamma))\neq 0$, where $I$ is the right adjoint of the condensation procedure $F$.

Finally, we enforce $M$ to be a partial isometry with respect to the following choices of basis for $\Hom(a,I(\alpha)) \otimes \Hom(b,I(\beta))$ and $\Hom(a \otimes b, I(\alpha \otimes \beta))$:

\begin{equation}
\label{eq:m-6j-normalization-fig-1}
\vcenter{\hbox{\includegraphics[width = 0.45\textwidth]{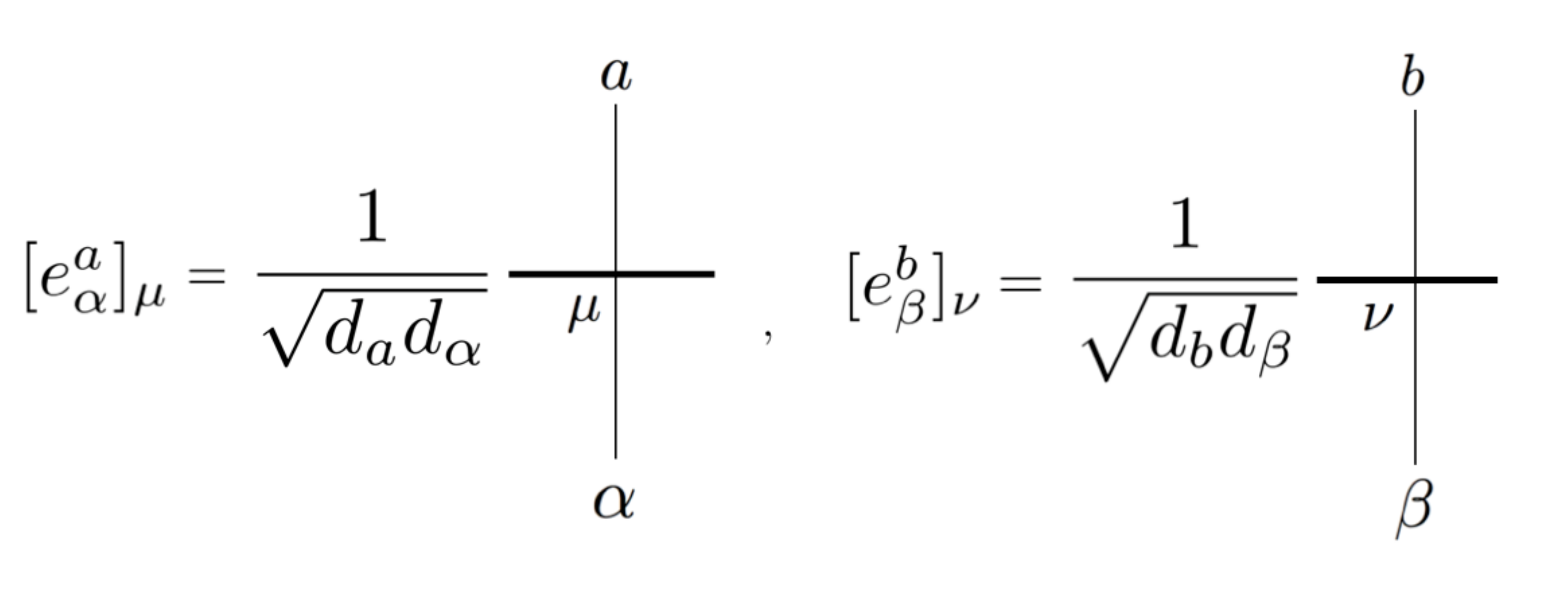}}}
\end{equation}

\begin{equation}
\label{eq:m-6j-normalization-fig-2}
\vcenter{\hbox{\includegraphics[width = 0.4\textwidth]{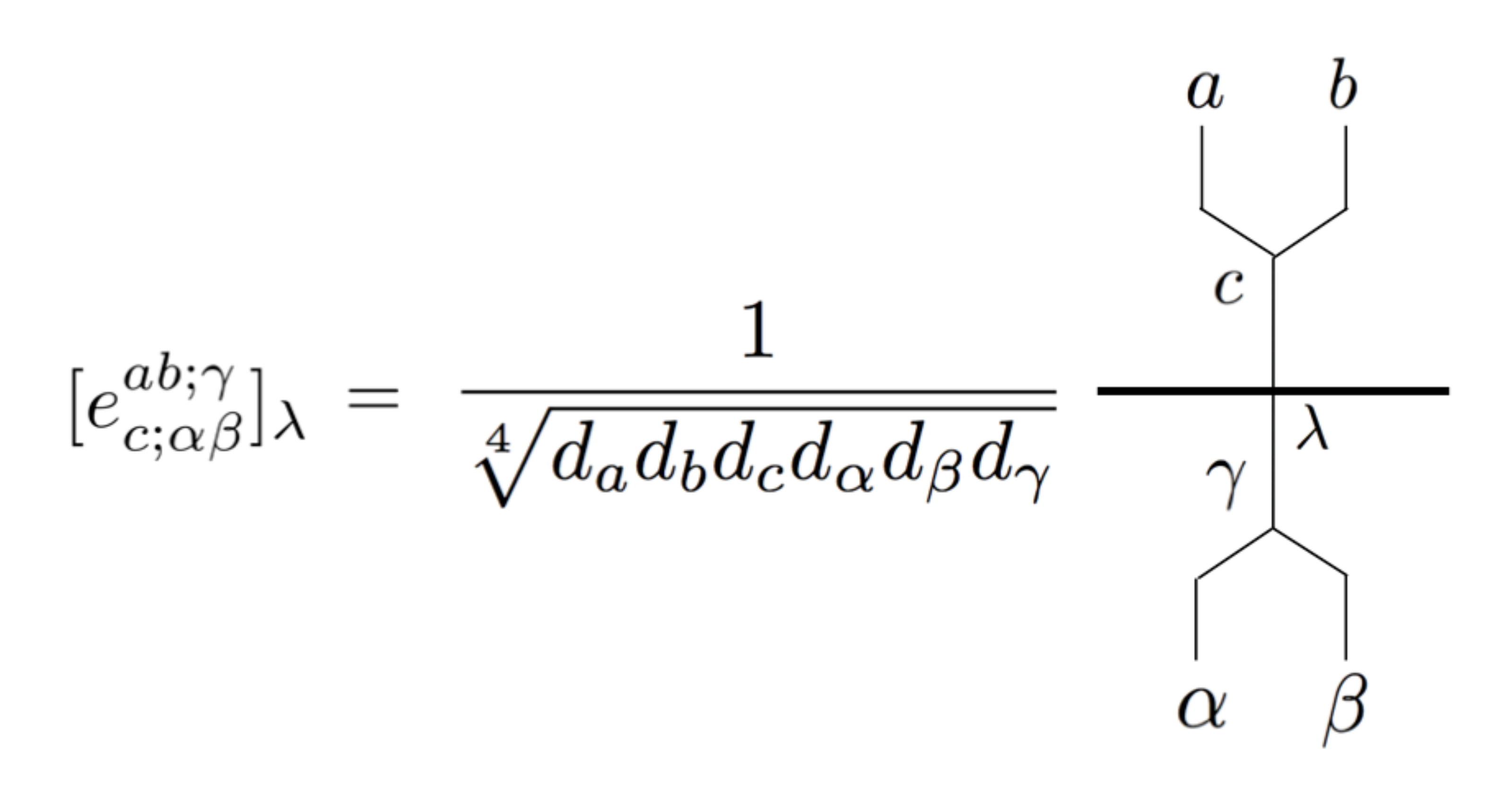}}}
\end{equation}

\noindent
As before, the basis vectors for $\Hom(a,I(\alpha)) \otimes \Hom(b,I(\beta))$ are given by $[e^a_\alpha]_\mu \otimes [e^b_\beta]_\nu$, and these bases are chosen because the following traces evaluate to 1:

\begin{equation}
\vcenter{\hbox{\includegraphics[width = 0.24\textwidth]{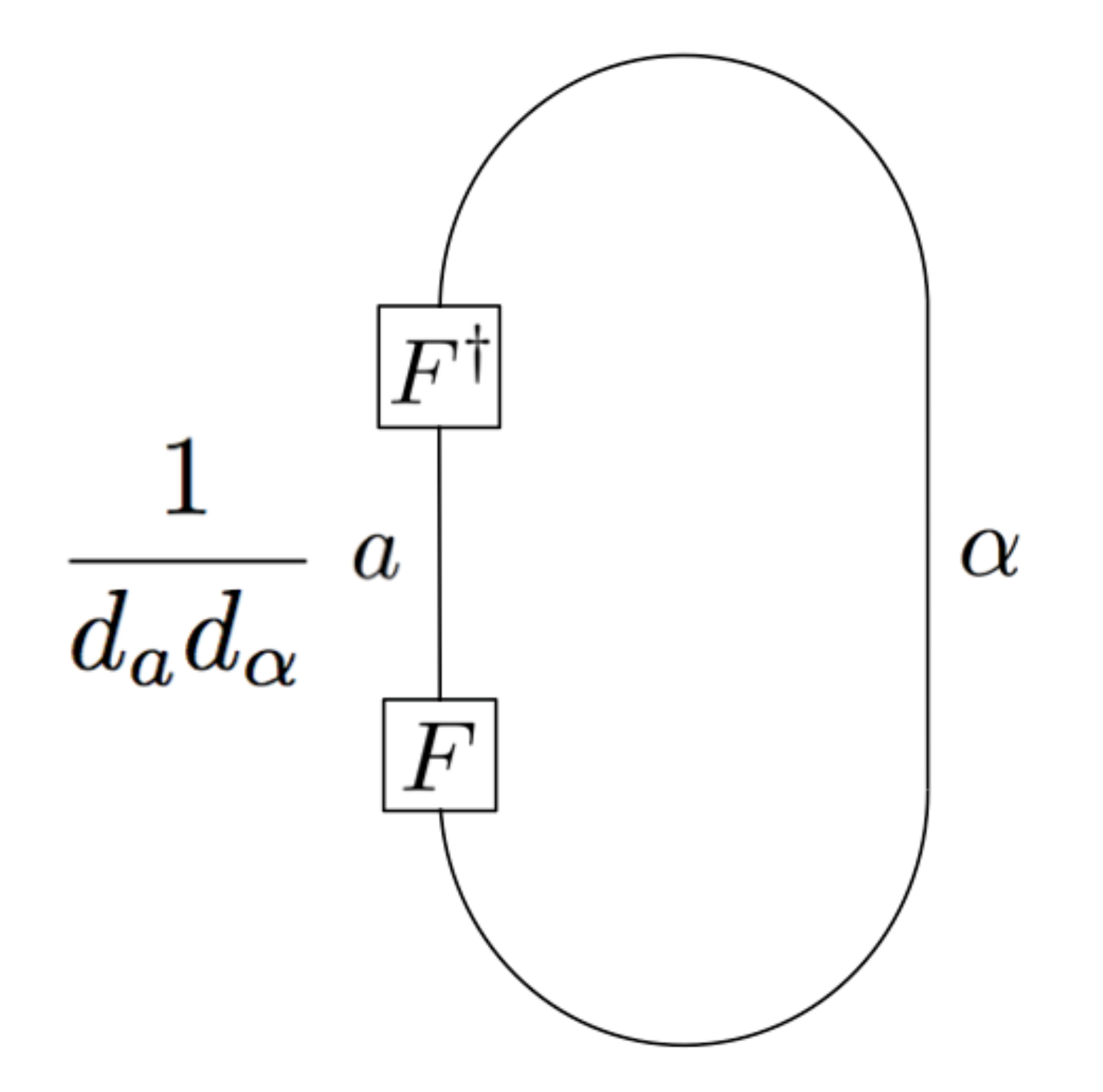}}} = 1,
\qquad
\vcenter{\hbox{\includegraphics[width = 0.35\textwidth]{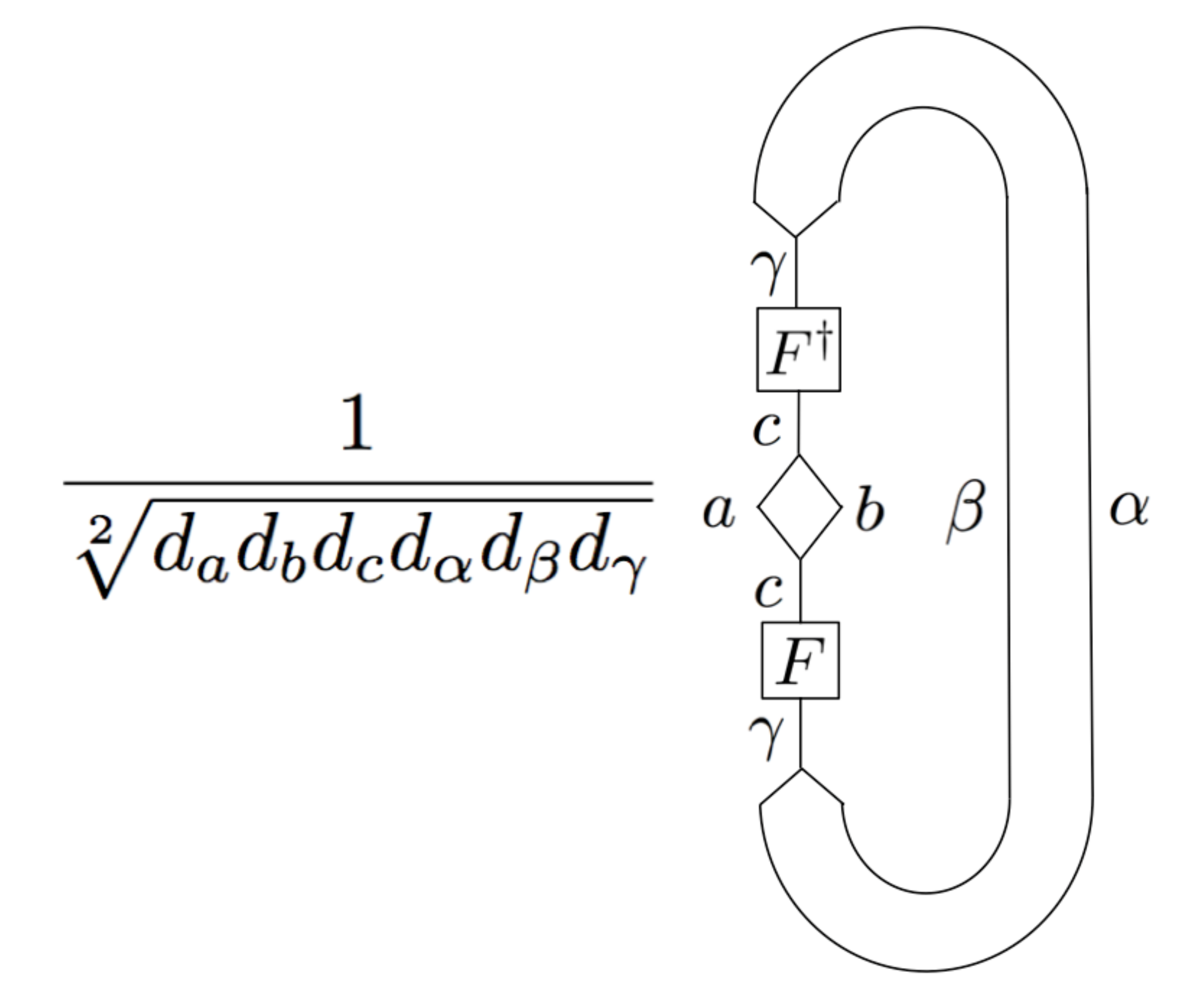}}} = 1
\end{equation}

This gives the normalization condition 

\begin{equation}
\label{eq:m-6j-normalization}
[M^{1a;\alpha}_{a;1\alpha}]^\mu_\nu = [M^{a1;\alpha}_{a;\alpha 1}]^\mu_\nu = \delta_{\mu\nu}.
\end{equation}

In general, $M$ symbols may be computed analytically using software packages. We note that given all data for the categories $\B = \mZ(\mC)$ and $\mQ = \Fun_\mC(\M,\M)$, these symbols are typically easier to calculate than the solutions to the standard $F-6j$ symbols, since Eq. (\ref{eq:m-6j-pentagon}) is at most quadratic.

\subsection{Example: $\mfD(S_3)$}
\label{sec:ds3-algebraic-example}

\subsubsection{Topological order}

When $G = S_3$, the topological order of the resulting Kitaev model is given by the modular tensor category $\B = \Rep(D(S_3)) = \mZ(\Rep(S_3))$. As discussed in Section \ref{sec:ds3-hamiltonian-example}, there are 8 simple objects in this category, $A,B,...,H$. The fusion rules \cite{Cui15} are given in Table \ref{tab:DS3-fusion}, and the $\mathcal{S}$, $\mathcal{T}$ matrices are listed below. The $F$ and $R$ symbols for this category may be found in Appendix A of Ref. \cite{Cui15}.

{\tiny
\begin{table}\caption{Fusion rules of $\mfD(S_3)$}\label{tab:DS3-fusion}
\begin{tabular}{|c|c|c|c|c|c|c|c|c|}
\hline $\otimes$ &$A$ &$B$ &$C$ &$D$ &$E$ &$F$ &$G$ &$H$\\ \hline
$A$ &$A$ &$B$ &$C$ &$D$ &$E$& $F$ &$G$ &$H$\\ \hline
$B$ &$B$ &$A$ &$C$& $E$ &$D$ &$F$ &$G$ &$H$\\ \hline
$C$ &$C$ &$C$ &$A\oplus B\oplus C$& $D\oplus E$ &$D\oplus E$ & $G\oplus H$& $F\oplus H$ &$F\oplus G$\\ \hline
\multirow{2}{*}{$D$} &\multirow{2}{*}{$D$} &\multirow{2}{*}{$E$} &\multirow{2}{*}{$D\oplus E$}& $A\oplus C\oplus  F$ & $B\oplus C\oplus F$ & \multirow{2}{*}{$D\oplus E$} & \multirow{2}{*}{$D\oplus E$} & \multirow{2}{*}{$D\oplus E$} \\
& & & & $\oplus G\oplus H$ & $\oplus G\oplus H$ & & &  \\ \hline
\multirow{2}{*}{$E$} &\multirow{2}{*}{$E$}& \multirow{2}{*}{$D$}& \multirow{2}{*}{$D\oplus E$} & $B\oplus C\oplus F$ & $A\oplus C\oplus F$ & \multirow{2}{*}{$D\oplus E$} &\multirow{2}{*}{$D\oplus E$} & \multirow{2}{*}{$D\oplus E$} \\
& & & & $\oplus G\oplus H$ & $\oplus G\oplus H$ & & &  \\  \hline
$F$ &$F$ & $F$& $G\oplus H$& $D\oplus E$ & $D\oplus E$ & $A\oplus B\oplus F$ & $H\oplus C$ & $G\oplus C$ \\ \hline
$G$ &$G$ & $G$& $F\oplus H$ & $D\oplus E$ & $D\oplus E$ & $H\oplus C$ & $A\oplus B\oplus G$ & $F\oplus C$ \\ \hline
$H$ &$H$ & $H$& $F\oplus G$ & $D\oplus E$ & $D\oplus E$ & $G\oplus C$ & $F\oplus C$ & $A\oplus B\oplus H$\\\hline
\end{tabular}
\end{table}
}

The modular $\mathcal{S}$ and $\mathcal{T}$ matrices of $\mfD(S_3)$ are given by \cite{Cui15}:

\begin{equation}
\label{eq:ds3-S}
\mathcal{S} = \frac{1}{6}
\begin{bmatrix}
1 & 1 & 2 & 3 & 3 & 2 & 2 & 2 \\
1 & 1 & 2 & -3 & -3 & 2 & 2 & 2 \\
2 & 2 & 4 & 0 & 0 & -2 & -2 & -2 \\
3 & -3 & 0 & 3 & -3 & 0 & 0 & 0 \\
3 & -3 & 0 & -3 & 3 & 0 & 0 & 0 \\
2 & 2 & -2 & 0 & 0 & 4 & -2 & -2 \\
2 & 2 & -2 & 0 & 0 & -2 & -2 & 4 \\
2 & 2 & -2 & 0 & 0 & -2 & 4 & -2 \\
\end{bmatrix}
\end{equation}

\begin{equation}
\label{eq:ds3-T}
\mathcal{T} = \text{diag}(1,1,1,1,-1,1,\omega,\omega^2)
\end{equation}

\noindent
where all rows and columns are ordered alphabetically, $A-H$, and $\omega = e^{2 \pi i / 3}$ is the primitive third root of unity.

\subsubsection{Lagrangian algebras}

We can determine gapped boundary types of the $\mfD(S_3)$ model by computing all Lagrangian algebras in $\B$, using the procedure of Section \ref{sec:frobenius-algebras}. This gives four gapped boundary types: $\mathcal{A}_1 = A+C+D$, $\mathcal{A}_2 = A+B+2C$, $\mathcal{A}_3 = A+F+D$, and $\mathcal{A}_4 = A+B+2F$.

\subsubsection{Condensation procedure: $A+C+D$ boundary}

We now illustrate the condensation procedure of Eq. (\ref{eq:condensation-quotient-IC}) on the $A+C+D$ boundary of the $\mfD(S_3)$ theory. We first form the quotient pre-category ${\widetilde{\mQ}}$, which has the same objects as $\B = \mZ(\Rep(S_3))$. By Definition \ref{quotient-cat-def}, we have:

\begin{equation}
\Hom_{\widetilde{\mQ}}(A,C) = \Hom_\B(A, C+A+B+C+D+E) \cong \C
\end{equation}

\noindent
Similarly, 

\begin{equation}
\begin{gathered}
\Hom_{\widetilde{\mQ}}(A,D) \cong \C \qquad \Hom_{\widetilde{\mQ}}(C,D) \cong \C \\
\Hom_{\widetilde{\mQ}}(A,A) \cong \C \qquad \Hom_{\widetilde{\mQ}}(B,B) \cong \C \\
\Hom_{\widetilde{\mQ}}(B,C) \cong \C \qquad \Hom_{\widetilde{\mQ}}(B,E) \cong \C \\
\Hom_{\widetilde{\mQ}}(F,D) \cong \C \qquad \Hom_{\widetilde{\mQ}}(F,F) \cong \C \\
\Hom_{\widetilde{\mQ}}(F,G) \cong \C \qquad \Hom_{\widetilde{\mQ}}(F,H) \cong \C
\end{gathered}
\end{equation}

\noindent
Many other hom-sets in ${\widetilde{\mQ}}$ may be constructed from the above by composition (simply tensor product the corresponding hom-spaces). All other hom-sets in this category are zero. Furthermore, no idempotent completion is necessary, as all endomorphism spaces in $\widetilde{\mQ}$ for simple objects are one-dimensional and hence have no nontrivial splitting idempotents. It follows that the following rules describe the condensation of simple bulk particles onto the $A+C+D$ boundary:

\begin{multicols}{2}
\begin{enumerate}[label=(\roman*),leftmargin=0.5in]
\item
$A \rightarrow {A}$
\item
$B \rightarrow {B}$
\item
$C \rightarrow {A} \oplus {B}$
\item
$D \rightarrow {A} \oplus {F}$
\item
$E \rightarrow {B} \oplus {F}$
\item
$F,G,H \rightarrow {F}$
\end{enumerate}
\end{multicols}

We note that this is exactly the same result we obtained in Section \ref{sec:ds3-hamiltonian-example}, if we identify the boundary excitation label $F$ with the label $C$ from that section.

The $A+F+D$ boundary is easily shown to have the same condensation rules and properties, with all instances of $C$ and $F$ switched.

\subsubsection{Condensation procedure: $A+B+2C$ boundary}

We now illustrate the condensation procedure of Eq. (\ref{eq:condensation-quotient-IC}) on the $A+B+2C$ boundary of the $\mfD(S_3)$ theory, as this example will require a nontrivial idempotent completion. As before, we first use Definition \ref{quotient-cat-def} to construct the quotient pre-category $\widetilde{\mQ}$. This gives the following hom-sets:

\begin{equation}
\begin{gathered}
\Hom_{\widetilde{\mQ}}(A,A) \cong \C \qquad \Hom_{\widetilde{\mQ}}(A,B) \cong \C \\
\Hom_{\widetilde{\mQ}}(A,C) \cong \C^2 \qquad \Hom_{\widetilde{\mQ}}(D,D) \cong \C^3 \\
\Hom_{\widetilde{\mQ}}(D,E) \cong \C^3 \qquad \Hom_{\widetilde{\mQ}}(E,E) \cong \C^3 \\
\Hom_{\widetilde{\mQ}}(F,F) \cong \C^2 \qquad \Hom_{\widetilde{\mQ}}(F,G) \cong \C^2 \\
\Hom_{\widetilde{\mQ}}(F,H) \cong \C^2 \qquad \Hom_{\widetilde{\mQ}}(G,G) \cong \C^2 \\
\Hom_{\widetilde{\mQ}}(G,H) \cong \C^2 \qquad \Hom_{\widetilde{\mQ}}(H,H) \cong \C^2 \\
\end{gathered}
\end{equation}

All other hom-sets in $\widetilde{\mQ}$ between simple objects in $\B$ are either tensor products of the above (in case of composition), or zero. It follows that the quotient functor acts as follows on the simple objects of $\B$:

\begin{multicols}{2}
\begin{enumerate}[label=(\roman*),leftmargin=0.5in]
\item
$A,B \rightarrow {A}$
\item
$C \rightarrow {2 \cdot A}$
\item
$D,E \rightarrow D$
\item
$F,G,H \rightarrow F$
\end{enumerate}
\end{multicols}

However, we would now like to note that $\widetilde{\mQ}$, with simple objects given by $A$, $D$, and $F$, is not semisimple, as we see $\Hom_{\widetilde{\mQ}}(D,D) \cong \C^3$ and $\Hom_{\widetilde{\mQ}}(F,F) \cong \C^2$ when they should be one-dimensional. This tells us that we must perform the canonical idempotent completion of $\widetilde{\mQ}$ to ${\mQ}$, which transforms the simple objects of $\widetilde{\mQ}$ as follows:

\begin{enumerate}[label=(\roman*),leftmargin=0.5in]
\item
$A \rightarrow A$
\item
$D \rightarrow (D,p_1) \oplus (D,p_2) \oplus (D,p_3)$
\item
$F \rightarrow (F,q_1) \oplus (F,q_2)$
\end{enumerate}

\noindent
where each $p_i$ is a splitting idempotent in $\Hom_{\widetilde{\mQ}}(D,D)$, and each $q_j$ is a splitting idempotent in $\Hom_{\widetilde{\mQ}}(F,F)$. (In general, if $\Hom_{\widetilde{\mQ}}(X,X)$ is $n$-dimensional, there are $n$ splitting idempotents).

Hence, we have the following rules for the overall condensation procedure of simple bulk anyons of $\mfD(S_3)$ to the $A+B+2C$ boundary:

\begin{multicols}{2}
\begin{enumerate}[label=(\roman*),leftmargin=0.5in]
\item
$A,B \rightarrow {A}$
\item
$C \rightarrow {2 \cdot A}$
\item
$D,E \rightarrow (D,p_1) \oplus (D,p_2) \oplus (D,p_3)$
\item
$F,G,H \rightarrow (F,q_1) \oplus (F,q_2)$
\end{enumerate}
\end{multicols}

We note that this is exactly the same result we obtained in Section \ref{sec:ds3-hamiltonian-example}, if we identify the above boundary excitation labels with those of Section \ref{sec:ds3-hamiltonian-example} as follows:

\begin{multicols}{2}
\begin{enumerate}[label=(\roman*),leftmargin=0.5in]
\item
$A \rightarrow 1$
\item
$(F,q_1) \rightarrow r$
\item
$(F,q_2) \rightarrow r^2$
\item
$(D,p_1) \rightarrow s$
\item
$(D,p_2) \rightarrow sr$
\item
$(D,p_3) \rightarrow sr^2$
\end{enumerate}
\end{multicols}

The $A+B+2F$ boundary is easily shown to have the same condensation rules and properties, with all instances of $C$ and $F$ switched.

\subsubsection{$M$-3$j$ symbols of the $A+C+D$ boundary}
\label{sec:ds3-m3j}

We have computed the $M$-3$j$ symbols of the $A+C+D$ boundary by hand, up to a sign in a few cases. By Equations (\ref{eq:m-3j-pentagon}) and (\ref{eq:m-3j-braid}-\ref{eq:m-3j-normalization-2}), we have:

\begin{equation}
\begin{gathered}
M^{AX}_X = 1, \text{ } X = A,C,D \\
M^{CC}_A = \frac{1}{\sqrt{6}} \qquad M^{CC}_C = \pm \frac{i}{\sqrt{2}} \\
M^{DD}_A = \frac{1}{\sqrt{6}} \qquad M^{DD}_C = \pm i \sqrt{\frac{2}{3}} \\
M^{CD}_D = M^{DC}_D = \mp i
\end{gathered}
\end{equation}

\noindent
Other $M-3j$ symbols for this boundary are all zero.

\vspace{2mm}
\section{Conclusions}
\label{sec:conclusions}

Based on Kitaev's quantum double models and Bombin and Martin-Delgado's two-parameter generalization, we find exactly solvable Hamiltonian realizations of gapped boundaries and their excitations in Dijkgraaf-Witten theories. By combining with an algebraic model, we develop a microscopic theory for these boundaries and excitations. 

We would like to conclude by considering several potential areas to generalize our work. First, many physics papers have studied gapped domain walls between different topological phases. While gapped boundaries are often considered as a special case of gapped domain walls, by the folding trick, they also completely cover the domain wall theory mathematically. Physically, however, it is still interesting to analyze the general gapped domain walls following our work.

Another direction is to generalize our theory to the Levin-Wen model, using quantum groupoids.

The most interesting question that we have not touched on is the stability of the topological degeneracy in our model.  Once our Hamiltonian moves off the fixed-point, finite-size splitting of the degeneracy would occur.  It would be interesting to study the energy spitting of the ground state degeneracies of our Hamiltonians $H_{\text{G.B.}}$ numerically under small perturbations.

Gapped boundaries and symmetry defects significantly enrich the physics of topological phases of matter in two spacial dimensions.  Their higher dimensional generalizations would be also very interesting.  Experimental confirmation of non-abelian objects such as non-abelian anyons, gapped boundaries, and parafermion zero modes would be a landmark in condensed matter physics.

\vspace{2mm}
\begin{appendix}

\vspace{2mm}

\section{Notations}
\label{sec:notations}

In this Appendix, we list all of the notations that are used throughout the paper.

In general, if $G$ is any finite group, we denote the set of irreducible representations of $G$ by $(G)_{\text{ir}}$.

We will adopt the following conventions for labeling qudits, anyons, gapped boundaries, and their excitations:

\begin{enumerate}
\item
The data qudits in the bulk will be labeled as $g_1, g_2, ...$, $h_1, h_2, ...$ for the Kitaev model, where they are members of a finite group $G$.
\item
The data qudits on the boundary will be labeled as $k_1, k_2, ...$ for the Kitaev model, where they are members of a subgroup $K \subseteq G$.
\item
In the more general case where data qudits are simple objects in a unitary fusion category $\CC$, we will label the bulk data qudits as $x_1, x_2,...$, $y_1, y_2,...$.
\item
In this same general case, the boundary data qudits will be labeled as $r_1, r_2, ...$, $s_1, s_2, ...$.
\item
Bulk excitations (a.k.a. anyons or topological charges), which are the simple objects within the modular tensor category $\B = \ZZ(\text{Rep}(G))$ or $\B = \ZZ(\CC)$, will be labeled by $a,b,c...$. Their dual excitations are labeled by $\overbar{a}, \overbar{b}, \overbar{c}, ...$, respectively.
\item
The gapped boundary will be given as a Lagrangian algebra $\A$ which is an object in $\B$.
\item
Excitations on the boundary will be labeled as $\alpha, \beta, \gamma, ...$. When necessary, the local degrees of freedom during condensation will be labeled as $\mu, \nu, \lambda, ...$.
\item We use FPdim to denote the quantum dimension, and dim for the usual linear algebra dimension of a vector space or representation.
\end{enumerate}

Furthermore, when using any $F$ symbols and $R$ symbols for a fusion category or a modular tensor category, we will adopt the following conventions for indices:

\begin{equation}
\vcenter{\hbox{\includegraphics[width = 0.48\textwidth]{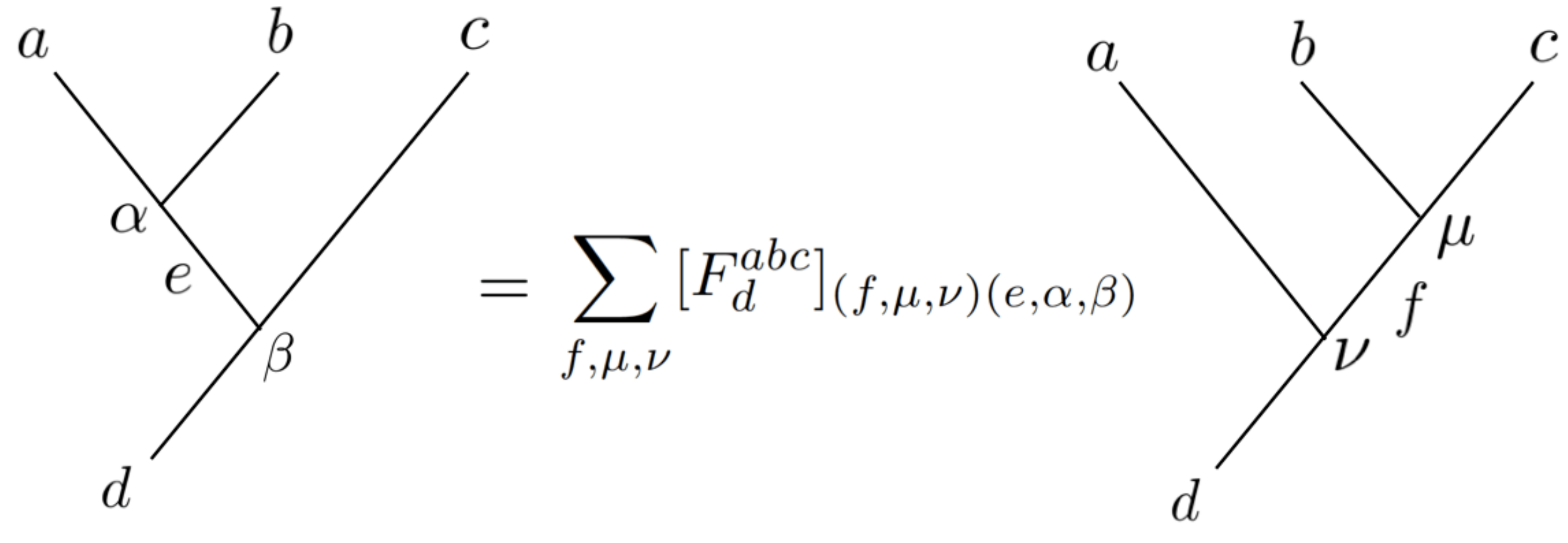}}}
\end{equation}

\begin{equation}
\vcenter{\hbox{\includegraphics[width = 0.28\textwidth]{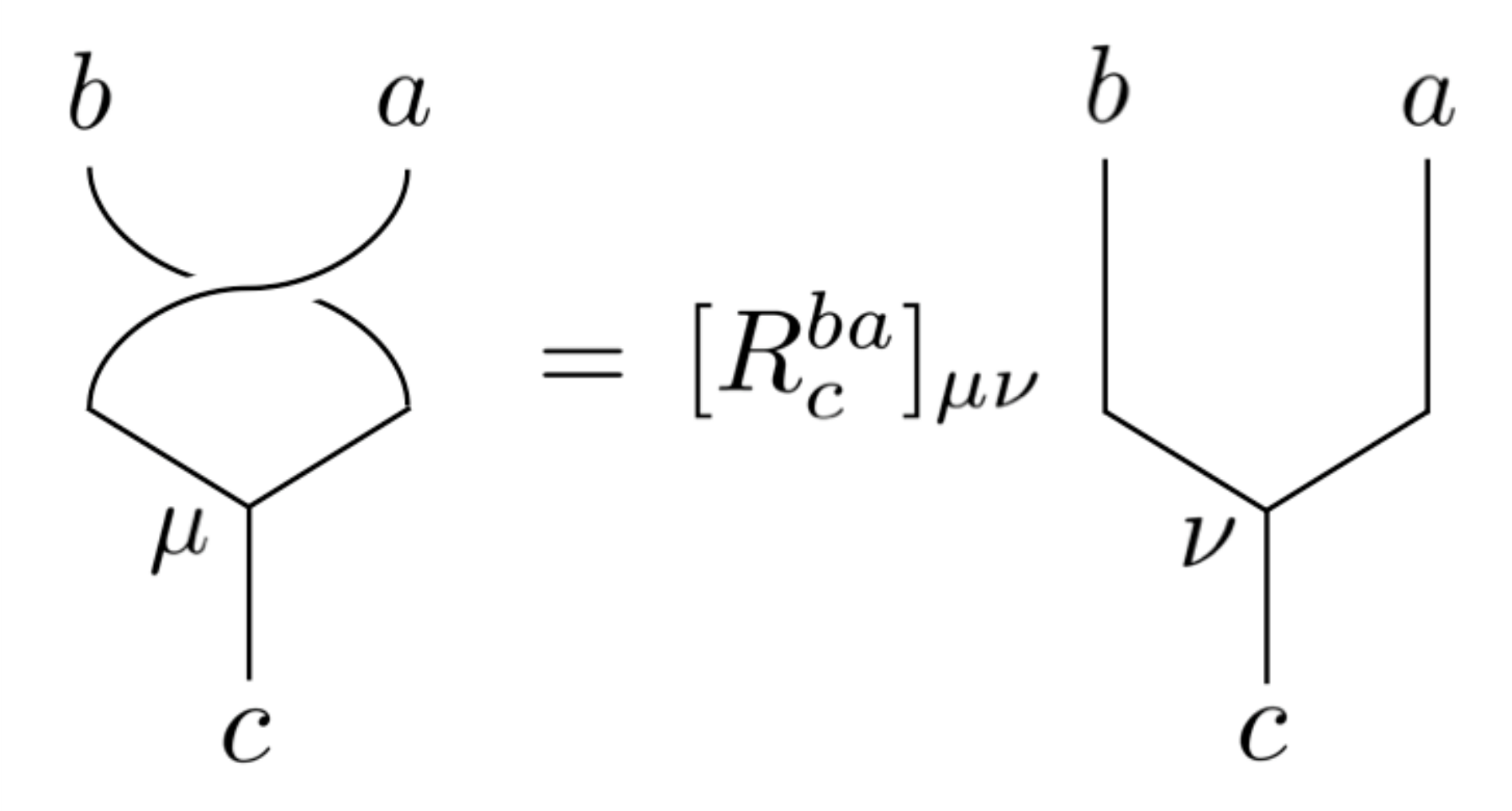}}}.
\end{equation}

\noindent
In the above equations, $\alpha,\beta,\mu,\nu$ denote the fusion multiplicities; when all fusion multiplicities are $\leq 1$, we will denote the $F$ symbols as $F^{abc}_{d;fe}$, and the $R$ symbols as $R^{ba}_c$.  The $F$ and $R$ symbols for a fusion category are highly non-unique because of the gauge choices during the solutions of pentagons and hexagons.

\end{appendix}

\vspace{4mm}


\begin{thebibliography}{99}
\bibliographystyle{alpha}
\bibitem{BakalovKirillov}
B. Bakalov, A. A. Kirillov. {\it Lectures on tensor categories and modular functors.} Vol. 21. American Mathematical Soc. (2001).
\bibitem{Bais09}
F. A. Bais, J. K. Slingerland. {\it Condensate induced transitions between topologically ordered phases}.
Phys. Rev. B {\bf 79}, 045316 (2009).
\bibitem{Barkeshli11}
M. Barkeshli, X.-L. Qi. {\it Topological Nematic States and Non-Abelian Lattice Dislocations}. Phys. Rev. X {\bf 2}(3), 031013 (2012).
\bibitem{Barkeshli14}
M. Barkeshli, P. Bonderson, M. Cheng, Z. Wang. {\it Symmetry, defects, and gauging of topological phases}. arXiv preprint arXiv:1410.4540 (2014).
\bibitem{Barkeshli15}
M. Barkeshli, J. D. Sau, {\it Physical Architecture for a Universal Topological Quantum Computer based on a Network of Majorana Nanowires}. arXiv preprint arxiv:1509.07135 (2015).
\bibitem{Bark13a}M. Barkeshli, C.-M. Jian, X.-L. Qi. {\it Twist defects and projective non-Abelian braiding statistics}. Physical Review B {\bf 87}(4), 045130 (2013).
\bibitem{Bark13b}M. Barkeshli, C.-M. Jian, X.-L. Qi. {\it Theory of defects in Abelian topological states}. Physical Review B {\bf 88}(23), 235103 (2013).
\bibitem{Bark13c}M. Barkeshli, C.-M. Jian,  X.-L. Qi. {\it Classification of topological defects in Abelian topological states}. Physical Review B {\bf 88}(24), 241103 (2013).
\bibitem{Beigi11}
S. Beigi, P. W. Shor, D. Whalen. {\it The quantum double model with boundary: condensations and symmetries}. Communications in Mathematical Physics {\bf 306}(3), 663-694 (2011).
\bibitem{Bombin08}
H. Bombin, M. A. Martin-Delgado. {\it Family of non-Abelian Kitaev models on a lattice: Topological condensation and confinement}. Phys. Rev. B {\bf 78}, 115421 (2008).
\bibitem{Bombin11}
H. Bombin, M. A. Martin-Delgado. {\it Nested Topological Order}. New J. Phys. {\bf 13}, 125001 (2011).
\bibitem{Bravyi98} 
S. B. Bravyi, A. Y. Kitaev. {\it Quantum codes on a lattice with boundary}. arXiv:quant-ph/9811052 (1998).
\bibitem{Chang14}L. Chang. {\it Kitaev models based on unitary quantum groupoids}. J. Math. Phys. 55.4 (2014).
\bibitem{Chang15}
L. Chang, et al. {\it On enriching the Levin-Wen model with symmetry.} J. Phys. A: Mathematical and Theoretical, {\bf 48}(12), 12FT01 (2015).
\bibitem{cheng2012}
M. Cheng. {\it Superconducting Proximity Effect on the Edge of Fractional Topological Insulators}. Phys. Rev. B  {\bf 86}, 195126 (2012)
\bibitem{clarke2013}  
D.~J. Clarke, J. Alicea, K. Shtengel.  {\it Exotic non-Abelian anyons from conventional fractional quantum Hall states}. {Nature Comm.} {\bf 4}, 1348 (2013).
\bibitem{Cong16a}
I. Cong, M. Cheng, Z. Wang. {\it Topological quantum computation with gapped boundaries.} arXiv preprint arXiv:1609.02037 (2016).
\bibitem{Cong16b}
I. Cong, M. Cheng, Z. Wang. {\it On Defects Between Gapped Boundaries in Two-Dimensional Topological Phases of Matter.} arXiv:1703.03564 (2017), to appear in Phys. Rev. B.
\bibitem{Cong16c}
I. Cong, M. Cheng, Z. Wang. {\it Universal Quantum Computation with Gapped Boundaries.} arXiv:1707.05490 (2017), to appear in Phys. Rev. Lett.
\bibitem{Cui15}
S. X. Cui, S.-M. Hong, Z. Wang. {\it Universal quantum computation with weakly integral anyons}. Quantum Inf. Process. 14:2687–2727 (2015).
\bibitem{Davydov12}
A. Davydov, M. M{\"u}ger, D. Nikshych, V. Ostrik. {\it The Witt group of non-degenerate braided fusion categories.} Journal f{\"u}r die reine und angewandte Mathematik (2012), vol. 677, p. 177. doi: 10.1515/crelle.2012.014.
\bibitem{Davydov14}
A. Davydov. {\it Bogomolov multiplier, double class-preserving automorphisms and modular invariants for orbifolds}, J. Math. Phys. 55, 092305. arXiv preprint arXiv:1312.7466 (2014).
\bibitem{Dennis02}
E. Dennis, A. Y. Kitaev, A. Landahl, J. Preskill, {\it Topological quantum memory}. J. Math. Phys. {\bf 43}, 4452 (2002).
\bibitem{Drinfeld89}
V. Drinfeld. {\it Quasi-Hopf algebras}. Leningrad Math. J. {\bf 1}, 1419-1457 (1989).
\bibitem{Eliens13}
I.S. Eli\"ens, J.C. Romers, F.A. Bais. {\it Diagrammatics for Bose condensation in anyon theories}.
Phys. Rev. B {\bf 90}, 195130 (2014).
\bibitem{Etingof05}
P. Etingof, D. Nikshych, V. Ostrik. {\it On fusion categories}, Ann. Math. {\bf 162}(2), 581-642 (2005).
\bibitem{Fowler12}
A.G. Fowler, M. Mariantoni, J. M. Martinis, A. N. Cleland. {\it Surface codes: Towards practical large-scale quantum computation}. Phys. Rev. A {\bf 86}(3), 032324 (2012).
\bibitem{Free98}
M.H. Freedman. {\it P/NP, and the quantum field computer}. Proceedings of the National Academy of Sciences {\bf 95}(1), 98-101 (1998).
\bibitem{FKLW}M. Freedman, A. Kitaev, M. Larsen, Z. Wang. {\it Topological quantum computation}. Bulletin of the American Mathematical Society, {\bf 40}(1), 31-38 (2003).
\bibitem{Frohlich06}
J. Fr{\"o}hlich, J. Fuchs, I. Runkel, C. Schweigert. {\it Correspondences of ribbon categories}. Advances in Mathematics {\bf 199}(1), 192-329 (2006).
\bibitem{Fuchs02}J. Fuchs, I. Runkel, C. Schweigert. {\it TFT construction of RCFT correlators I: Partition functions.} Nuclear Physics B 646.3 (2002): 353-497.
\bibitem{Fuchs05}J. Fuchs, I. Runkel, C. Schweigert. {\it TFT construction of RCFT correlators IV:: Structure constants and correlation functions.} Nuclear physics B 715.3 (2005): 539-638.
\bibitem{Fuchs2013}J. Fuchs, C. Schweigert, A. Valentino. {\it Bicategories for boundary conditions and for surface defects in 3-d TFT.}  Communications in Mathematical Physics 321.2 (2013): 543-575.
\bibitem{Fuchs2014}
J. Fuchs, C. Schweigert, A. Valentino. {\it A Geometric Approach to Boundaries and Surface Defects in Dijkgraaf-Witten Theories}. Commun. Math. Phys. (2014) 332: 981. 
\bibitem{GGGG}S. Ganeshan, A. V. Gorshkov, V. Gurarie, V. M. Galitski. {\it Exactly soluble model of boundary degeneracy}. arXiv preprint arXiv:1604.02089.
\bibitem{Gelaki07}S. Gelaki, D. Naidu. {\it Some properties of group-theoretical categories}, Journal of Algebra 322, 2631. arXiv:0709.4326 (2007).
\bibitem{HW}L.Y. Hung, Y. Wan. {\it Ground-state degeneracy of topological phases on open surfaces}. Phys. Rev. Lett. {\bf 114}(7), 076401 (2015).
\bibitem{Kapustin10}
A. Kapustin, N. Saulina. {\it Topological boundary conditions in abelian Chern-Simons theory}, Nucl. Phys. B {\bf 845}, 393 (2011)
\bibitem{Kapustin14}
A. Kapustin. {\it Ground-state degeneracy for Abelian anyons in the presence of gapped boundaries}, Phys. Rev. B {\bf 89}, 125307 (2014)
\bibitem{Kapustin10b}
A. Kapustin, N. Saulina. {\it Surface operators in 3d Topological Field Theory and 2d Rational Conformal Field Theory}, Mathematical Foundations of Quantum Field and Perturbative String Theory, AMS 2011, arXiv:1012.0911
\bibitem{KO}A. Kirillov, V. Ostrik. {\it On a q-analogue of the McKay correspondence and the ADE classification of sl 2 conformal field theories}. Advances in Mathematics {\bf 171}(2), 183-227 (2002).
\bibitem{Kitaev97}
A. Y. Kitaev. {\it Fault-tolerant quantum computation by anyons}. Ann. Phys. {\bf 303}(2) (2003).
\bibitem{Kitaev09}A. Kitaev. Bose-condensation and edges of topological quantum phases.  Talk at modular categories and applications, Indiana University, March 19-22, 2009.
\bibitem{KitaevKong}
A. Kitaev, L. Kong, {\it Models for Gapped Boundaries and Domain Walls}. Commun. Math. Phys. {\bf 313}, 351-373 (2012). doi: 10.1007/s00220-012-1500-5.
\bibitem{Kong}L. Kong. {\it Some universal properties of Levin-Wen models}.XVIITH International Congress of Mathematical Physics, World Scientific. 2014.
\bibitem{Kong13}
L. Kong. {\it Anyon condensation and tensor categories}, Nucl. Phys. B 886 (2014) 436.
\bibitem{Kong15}
L. Kong, X.-G. Wen, Z. Hao. {\it Boundary-bulk relation for topological orders as the functor mapping higher categories to their centers}. arXiv:1502.01690 (2015)
\bibitem{LWW}T. Lan, J. C. Wang, X.-G. Wen. {\it Gapped Domain Walls, Gapped Boundaries, and Topological Degeneracy}. Phys. Rev. Lett. {\bf 114}(7), 076402 (2015).
\bibitem{Levin04}
M. A. Levin, X.-G. Wen. {\it String-net condensation: A physical mechanism for topological
phases}. Phys. Rev. B {\bf 71}, 045110 (2005)
\bibitem{Levin13}
M. Levin, {\it Protected edge modes without symmetry}. Phys. Rev. X {\bf 3}, 021009 (2013).
\bibitem{lindner2012}
N.~H. Lindner, E. Berg, G. Refael, A. Stern. {\it Fractionalizing Majorana fermions: non-abelian statistics on the edges of abelian quantum Hall states}. Phys. Rev. X, {\bf 2}, 041002 (2012).
\bibitem{MacLane}
S. Mac Lane. {\it Categories for the working mathematician.} Vol. 5. Springer Science and Business Media, 2013.
\bibitem{Moore06}
C. Moore, D. Rockmore, A. Russell. {\it Generic quantum Fourier transforms}. J. ACM Transactions on Algorithms. 2(4), 707-723 (2006). doi: 10.1145/1198513.1198525.
\bibitem{Muger03}
M. M{\"u}ger. {\it Galois extensions of braided tensor categories and braided crossed G-categories}. Journal of Algebra {\bf 277} 256–281 (2004).
\bibitem{Muger12}
M. M{\"u}ger. {\it Modular Categories}. In C. Heunen, M. Sadrzadeh, E. Grefenstette, {\it Quantum Physics and Linguistics}, Chapter 6. Oxford: Oxford University Press (2013).
\bibitem{Naidu09}
D. Naidu, E. C. Rowell. {\it A finiteness property for braided fusion categories.} Algebr. Represent. Theor. {\bf 14}, 837 (2011). doi:10.1007/s10468-010-9219-5
\bibitem{Nayak08}C. Nayak, et al. {\it Non-Abelian anyons and topological quantum computation}. Reviews of Modern Physics {\bf 80}(3), 1083 (2008).
\bibitem{Neupert16a}
T. Neupert, H. He, C. von Keyserlingk, G. Sierra, B. A. Bernevig. {\it 
Boson Condensation in Topologically Ordered Quantum Liquids}.  Phys. Rev. B {\bf 93}, 115103 (2016)
\bibitem{Neupert16b}
T. Neupert, H. He, C. von Keyserlingk, G. Sierra, B. A. Bernevig. {\it No-Go Theorem for Boson Condensation in Topologically Ordered Quantum Liquids }.  arXiv:1607.06457 (2016).
\bibitem{Ostrik02}
V. Ostrik. {\it Module categories over the Drinfeld double of a finite group.} Int. Math (2002).
\bibitem{Ostrik03}
V. Ostrik. {\it Module categories, weak Hopf algebras, and modular invariants.} Transformation Groups, {\bf 8}(2), 177-206 (2003).
\bibitem{Pet01}V.B. Petkova, J. B. Zuber. {\it The many faces of Ocneanu cells.} Nuclear Physics B 603.3 (2001): 449-496.
\bibitem{Raussendorf03}
R. Raussendorf, D. E. Browne, H. J. Briegel. {\it Measurement-based quantum computation on cluster states.} Phys. Rev. A {\bf 68}, 022312 (2003).
\bibitem{Schauenburg02}
P. Schauenburg. {\it Hopf Algebra Extensions and Monoidal Categories}. New Directions in Hopf Algebras, MSRI Publications, {\bf 43} (2002).
\bibitem{Schauenburg02b}
P. Schauenburg. {\it Hopf modules and the double of a quasi-Hopf algebra}. Transactions of the American Mathematical Society. {\bf 304}(8) 3349-3378, (2002).
\bibitem{Schauenburg15}
P. Schauenburg. {\it Computing higher Frobenius-Schur indicators in fusion categories constructed from inclusions of finite groups}. Pacific J. Math. {\bf 280}(1), 177-201 (2015).
\bibitem{Varona06}
J. Varona. {\it Rational values of the arccosine function.} Open Mathematics, {\bf 4}(2), 319-322 (2006).
\bibitem{Wan16}
Y. Wan, C. Wang. {\it Fermion Condensation and Gapped Domain Walls in Topological Orders}. arXiv:1607.01388.
\bibitem{W10}Z. Wang. Topological quantum computation. No. 112. American Mathematical Soc., 2010.
\bibitem{WenTO}X.-G. Wen. {\it Vacuum degeneracy of chiral spin states in compactified space.} Physical Review B 40.10 (1989): 7387.
\bibitem{WW}J.C. Wang, X.-G. Wen. {\it Boundary degeneracy of topological order}.Phys. Rev. B. {\bf 91}(12), 125124 (2015).
\bibitem{Zhu01}
Y. Zhu. {\it Hecke algebras and Representation Rings of Hopf algebras}. Studies in Advanced Mathematics, {\bf 20} (2001): http://hdl.handle.net/1783.1/51261.
\end{thebibliography}
\end{document}